\def\QED{\hfill$\Box$}
\renewenvironment{proof}{\noindent {\bfseries Proof. }}{\QED}
\newtheorem{theorem}{Theorem}[section]
\newtheorem{lemma}       [theorem]{Lemma}
\newtheorem{corollary}   [theorem]{Corollary}
\newtheorem{proposition} [theorem]{Proposition}
\newtheorem{remark}      [theorem]{Remark}
\newtheorem{example}     [theorem]{Example}
\newtheorem{definition}  [theorem]{Definition}
\newtheorem{question}    [theorem]{Question}
\newtheorem{algorithm}   [theorem]{Algorithm}
 \newtheorem{th-algorithm}[theorem]{Theorem}
\newtheorem{model}       [theorem]{Model}
\DeclareMathOperator{\Mult}{M}
\newcommand{\mult}[2][a_1]{\ensuremath{\Mult_{#1} \left( #2 \right)}}
\newcommand{\Wr}{\ensuremath \text{Wr}}
\newcommand{\Sparsest}{\ensuremath \text{Sparsest}}
\newcommand{\Waring}{\ensuremath \text{Waring}}
\newcommand{\AffPow}{\ensuremath \text{AffPow}}
\newcommand{\SDE}{\ensuremath \text{SDE}}
\newcommand{\KK}{\mathbb{K}}
\newcommand{\FF}{\mathbb{F}}
\newcommand{\TT}{\mathbb{T}}
\newcommand{\NN}{\mathbb{N}}
\newcommand{\RR}{\mathbb{R}}
\newcommand{\CC}{\mathbb{C}}
\newcommand{\ZZ}{\mathbb{Z}}
\title{Reconstruction Algorithms\\ for Sums of Affine Powers}
\author{Ignacio Garc\'ia-Marco, Pascal Koiran, Timoth\'ee Pecatte\\
LIP\thanks{UMR 5668 Ecole Normale Sup\'erieure de  Lyon, CNRS, UCBL, INRIA.
The authors are supported by ANR project
CompA (code ANR--13--BS02--0001--01).
Email:  
{\tt [Pascal.Koiran, Timothee.Pecatte]@ens-lyon.fr,} 
{\tt iggarcia@ull.es}.
}, 
Ecole Normale Sup\'erieure de Lyon, Universit\'e de Lyon.
}
\begin{document}
\maketitle

\begin{abstract}
A sum of affine powers is an expression of the form		
$$f(x) = \sum_{i=1}^s \alpha_i (x - a_i)^{e_i}.$$
Although quite simple, this model is a generalization of 
two well-studied models: Waring decomposition and Sparsest Shift.
For these three models there are natural extensions to several variables, 
but this paper is mostly focused on univariate polynomials.
We present structural results which compare the expressive power of 
the three models; 
and we  propose algorithms that find the smallest decomposition
of $f$ in the first model (sums of affine powers) for an input polynomial $f$
given in dense representation.
We also begin a study of the multivariate case.

This work could be extended in several directions.
In particular, just as for Sparsest Shift and Waring decomposition,
one could consider extensions to ``supersparse'' polynomials and attempt 
a fuller
study of the multivariate case.
We also point out that the basic univariate problem studied in 
the present paper is far from completely solved: our algorithms all rely on 
some assumptions for the exponents $e_i$ in a decomposition of $f$, and some
algorithms also rely on a distinctness assumption for the shifts $a_i$.
It would be very interesting to weaken these assumptions, or even to remove
them entirely. Another related and poorly understood issue is that of
the bit size of the constants $a_i,\alpha_i$ in an optimal decomposition:
is it always polynomially related to the bit size of the input polynomial 
$f$ given in dense representation?
\end{abstract}

\section{Introduction}

Let $\FF$ be any characteristic zero field and let $f \in \FF[x]$ be a univariate polynomial.
This work concerns the study of expressions of $f$ as a linear combination of powers of affine forms.

\begin{model} \label{affpowmodel}
	We consider expressions of $f$ of the form:
	\[
		f = \sum_{i=1}^s \alpha_i (x - a_i)^{e_i}
	\]
	with $\alpha_i, a_i \in \FF$, $e_i \in \NN$. We denote by $\AffPow_\FF(f)$ the minimum value $s$ such that there exists a representation of the previous form with $s$ terms.
\end{model}

This model was already studied in \cite{GK}, where we gave explicit examples of
polynomials of degree $d$ requiring $\AffPow_\RR(f)=\Omega(d)$ terms for the 
field $\FF=\RR$.

The main goal of this work is to design algorithms that reconstruct 
the optimal representation of polynomials in this model, i.e., algorithms
that receive as input $f \in \FF[x]$ and compute the exact value $s = \AffPow_\FF(f)$ and 
a set of triplets of coefficients, nodes and exponents $\{(\alpha_i,a_i,e_i)\, \vert \, 1 \leq i \leq s\} \subseteq \FF \times \FF \times \NN$
such that $f = \sum_{i = 1}^s \alpha_i (x-a_i)^{e_i}$.
We assume that $f$ is given in dense representation, i.e., as a tuple of 
$\deg(f)+1$ elements of $\FF$.

\medskip

Model~\ref{affpowmodel} extends two already well-studied models. 
The first one is the Waring model, where all the exponents are equal to the degree of the polynomial, i.e.,  $e_i = \deg(f)$ for all $i$.
\begin{model} \label{waringmodel}
	For a polynomial $f$ of degree $d$, we consider expressions of $f$ of the form:
	\[
		f =  \sum_{i=1}^s \alpha_i (x - a_i)^d
	\]
	with $\alpha_i, a_i \in \FF$.
	We denote by $\Waring_\FF(f)$ the \emph{Waring rank} of $f$, which is the minimum value $s$ such that there exists a representation of the previous form with $s$ terms.
\end{model}
Waring rank has been studied by algebraists and geometers since the 19th century. The algorithmic study of Model~\ref{waringmodel} is usually attributed to Sylvester. We refer to~\cite{IaKa} for the historical background and to section~1.3 of that book for a description of the algorithm (see also Kleppe~\cite{Kleppe} and Proposition~46 of Kayal~\cite{Kayal12}). Most of the subsequent work was devoted to the multivariate generalization\footnote{In the literature, Waring rank is usually defined for homogeneous polynomials. After homogenization, the univariate model~\ref{waringmodel} becomes bivariate and the ``multivariate generalization'' 
therefore deals with homogeneous polynomials in 3 variables or more.} of Model~\ref{waringmodel}, with much of the 20th century work focused on the determination of the Waring rank of generic polynomials~\cite{alexander95,brambilla08,IaKa}. 
A few  recent papers~\cite{landsberg2010,BCG} 
have begun to investigate the Waring rank of specific 
polynomials such as monomials, sums of coprime monomials, 
the permanent and the determinant.

The second model that we generalize is the Sparsest Shift model, where all the shifts $a_i$ are required to be equal.
\begin{model}
	For a polynomial $f$, we consider expressions of $f$ of the form:
	\[
		f =  \sum_{i=1}^s \alpha_i (x - a)^{e_i}
	\]
	with $\alpha_i, a \in \FF, e_i \in \NN$.
	We denote by $\Sparsest_\FF(f)$ the minimum value $s$ such that there exists a representation of the previous form with $s$ terms.
\end{model}
This model and its variations have been studied in the computer 
science literature 
at least since Borodin and Tiwari~\cite{BoTi}.
Some of these papers deal with multivariate generalizations~\cite{GK93,GKL}, 
with ``supersparse'' polynomials\footnote{In that model, the size of the monomial $x^d$ is defined to be $\log d$ instead of $d$ as in the usual dense encoding.}~\cite{GR} 
 or establish condition for the uniqueness of the sparsest shift~\cite{LS}.
It is suggested at the end of~\cite{GKL} to allow ``multiple shifts'' instead
of a single shift, and this is just what we do in this paper. More precisely,
as is apparent from Model~\ref{affpowmodel}, we do not place any constraint on the number of distinct shifts: it can be as high as the number $s$ of affine powers.
It would also make sense to place an upper bound $k$ on the number 
of distinct shifts. This would provide a smooth interpolation between 
the sparsest shift model (where $k=1$) and Model~\ref{affpowmodel}, where $k=s$.

\subsection{Our results}

We provide both structural and algorithmic results. Our structural results are
presented in Section~\ref{structure}. We compare the expressive power of our 3 models: sums of affine powers, sparsest shift and the Waring decomposition.
Namely, we show that some polynomials have a much smaller
expression as a sum of affine powers than in the sparsest shift or Waring 
models. Moreover, we show that the Waring and sparsest shift models are ``orthogonal'' in the sense that (except in one trivial case) no polynomial can 
have a small representation in both models at the same time.
We also show that some real polynomials have a short expression 
as a sum of affine powers over the field of complex numbers, but not over 
the field of real numbers.
 Finally, we study the uniqueness of 
the optimal representation  as a sum of affine powers.
It turns out that our reconstruction algorithms 
 only work in a regime where the uniqueness of optimal
representations is guaranteed.

As already explained, we present algorithms that find 
the optimal representation of an input polynomial $f$. 
We achieve this goal in several cases, but we do not solve the problem 
in its full generality. One typical result is as follows (see Theorem~\ref{diffnodesimproved} in Section~\ref{distinctsec} for a more detailed statement which includes a description of the algorithm).
\begin{theorem} \label{diffnodesintro}
	Let $f \in \FF[x]$ be a polynomial that can be written as 
	\[
		f = \sum_{i = 1}^s \alpha_i (x - a_i)^{e_i},
	\]
	where the constants $a_i \in \FF$ are all distinct, $\alpha_i \in \FF \setminus \{0\}$, and $e_i \in \NN$. Assume moreover that $n_i \leq   (3i/4)^{1/3}
	  - 1$ for all 
$i \geq 2$, where $n_i$ denotes the number of indices $j$ such that
 $e_j \leq i$.

	Then, $\AffPow_\FF(f) = s$. Moreover, there is a polynomial time algorithm that receives $f = \sum_{i = 0}^d f_i x^i \in \FF[x]$ as input and computes the $s$-tuples of coefficients $C(f) = (\alpha_1,\ldots,\alpha_s)$, of nodes $N(f) = (a_1,\ldots,a_s)$ and exponents $E(f) = (e_1,\ldots,e_s)$.
\end{theorem}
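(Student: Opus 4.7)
My plan is to attack the theorem in three phases. First, the identity $\AffPow_\FF(f) = s$ will follow from a uniqueness result proved in Section~\ref{structure}: under the hypothesis $n_i \le (3i/4)^{1/3} - 1$ any decomposition of $f$ with pairwise distinct shifts is unique up to reordering of its summands, so in particular no strictly shorter decomposition of the specified form can exist. It then remains to recover the triplets $(\alpha_i, a_i, e_i)$ from the dense encoding of $f$ in polynomial time.

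The core algorithmic idea is to exploit the fact that each summand $\alpha_i (x - a_i)^{e_i}$ is annihilated by the first-order differential operator $D_i := (x-a_i)\,\partial_x - e_i$, so $f$ lies in the kernel of
\[
	L \;:=\; D_1 D_2 \cdots D_s \;=\; \sum_{k=0}^{s} P_k(x)\,\partial_x^k,
\]
whose leading coefficient is $P_s(x) = \prod_{i=1}^s (x - a_i)$ and whose lower coefficients $P_k$ are polynomials of degree at most~$s$. I would search for such an $L$ by setting up the equation $L(f) = 0$ as a homogeneous linear system in the unknown coefficients of $P_0, \ldots, P_s$ subject to these degree bounds, and then solving it by Gaussian elimination over $\FF$. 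The shifts $a_1, \ldots, a_s$ are then read off as the roots of the recovered $P_s$, which are distinct by assumption.

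The main obstacle---and the place where the hypothesis on the $n_i$'s is essential---is to show that this linear system admits a one-dimensional solution space, so that $P_s$ is uniquely determined up to scalars. Any additional solution would correspond to an ``exotic'' annihilator of $f$ incompatible with the minimality of the decomposition, and the standard way to rule this out is to reduce the question to the non-vanishing of a Wronskian-type determinant built from $f$ and its derivatives; the growth condition $n_i \le (3i/4)^{1/3} - 1$ is tailored precisely to force the exponents to be spread out enough so that this determinant is nonzero, mirroring the hypothesis used in the uniqueness argument of Section~\ref{structure}. Once the shifts are known, the exponents $e_i$ can be extracted from the recovered $L$---for instance, at each regular singular point $a_i$ the indicial equation of $L$ admits $e_i$ as a distinguished root, the other $s-1$ roots coming from the analytic contributions of the $(x - a_j)^{e_j}$ with $j \ne i$---and with both the $a_i$ and the $e_i$ in hand, the coefficients $\alpha_i$ are recovered as the unique solution of a final Vandermonde-like linear system. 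Each subroutine (rational-coefficient linear algebra and univariate root finding over $\FF$) runs in polynomial time in the bit size of the dense input, yielding the claimed complexity.
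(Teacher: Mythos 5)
Your opening phase is on solid footing: the identity $\AffPow_\FF(f)=s$ does follow from the uniqueness result (the paper's Corollary~\ref{uniquenessExpression}, since $(3i/4)^{1/3}-1\le\sqrt{(i+1)/2}$), and the observation that $f$ is annihilated by $L=D_1\cdots D_s$ with $D_i=(x-a_i)\partial_x-e_i$ is a correct and elegant way to produce an $\SDE(s,0)$ satisfied by $f$ and by all of its summands. However, the algorithmic core of your proposal has a genuine gap exactly where the real difficulty lies.

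The gap is the claim that the linear system ``$L(f)=0$, $\deg P_k\le k$, order $s$'' has a one-dimensional solution space, so that Gaussian elimination necessarily recovers the factored operator with $P_s=\prod_i(x-a_i)$. You offer no proof of this, and it is not true in general: a single polynomial $f$ may satisfy many inequivalent SDEs of a given order and shift, and nothing guarantees that the solution the solver returns is the ``right'' one. Lemma~\ref{uniqueSDE} gives uniqueness of the SDE only when one already has $k$ linearly independent \emph{solutions} in hand, which is precisely what the algorithm is trying to find. The paper flags this as the central obstacle (``what if we don't find the `right' SDE?'') and circumvents it not by proving uniqueness of the SDE, but by proving (Proposition~\ref{bigexponents}) that \emph{every} term $(x-a_i)^{e_i}$ with $e_i$ sufficiently large must satisfy \emph{whichever} minimal-order SDE is found. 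Since the growth condition $n_i\le(3i/4)^{1/3}-1$ only guarantees that the \emph{top} exponents are large, the algorithm can trust the SDE only for the high-degree terms; it must recover those, subtract them from $f$, and recurse. This iterative structure is unavoidable and your one-pass scheme breaks down for the low-exponent terms, which need not satisfy the recovered SDE at all. A secondary issue: you set up a system of order exactly $s$, but $s$ is not part of the input; the paper instead searches for the minimal order $t$ (which may be strictly less than $s$), and the correctness analysis must cope with that. Finally, the appeal to indicial equations for extracting $e_i$ is unnecessary machinery once one has an SDE — substituting $(x-b)^e$ and solving, as the paper does, is simpler and avoids subtleties about which indicial root is ``distinguished.''
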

From the point of view of the optimality of representations, 
it is quite natural
to assume an upper bound on the numbers $n_i$. 
Indeed, if there is an index $j$ such that $n_j > j+1$ 
then the powers $(x-a_i)^{e_i}$ are linearly dependent, and there would be a 
smaller expression of $f$ as a linear combination of these polynomials.\footnote{It is hardly more difficult to show that one must have 
$n_j \leq \lceil \frac{j+1}{2} \rceil$ for any optimal expression, see~\cite[Proposition 18]{GK}.}
We would therefore  have  $\AffPow_\FF(f) < s$ instead of 
$\AffPow_\FF(f) =s$. It would nonetheless be interesting to relax the
assumption $n_i \leq (3i/4)^{1/3} - 1$ in this theorem. 
Another restriction is the assumption that the shifts $a_i$ are all distinct.
We relax that assumption in Section~\ref{repeatsec} 
{  but we still need to assume that all the exponents $e_i$ corresponding to a 
given shift $a_i=a$ belong to a ``small'' interval 
(see Theorem~\ref{th-algVeryLargeExp} for a precise statement).}
Alternatively, we can assume instead that there is a large gap 
between the exponents in two consecutive occurences of the same shift as in Theorem~\ref{th-algbiggaps}.

In Section~\ref{multisec} we extend the sum of affine powers model to several variables. We consider expressions of the form
\begin{equation} \label{multimodel}
f(x_1,\ldots,x_n)= \sum_{i = 1}^s \alpha_i \ell_i(x_1,\ldots,x_n)^{\,e_i},
\end{equation}
where $e_i \in \NN$, $\alpha_i \in \FF$ and $\ell_i$ is a (non constant) linear form for all $i$. This is clearly a generalization of the univariate model~\ref{affpowmodel} and of multivariate Waring decomposition.
Work on multivariate sparsest shift has developed in a different direction:
one idea~\cite{GKL} has been to transform the input polynomial into a sparse polynomial 
by applying a (possibly) different shift to each variable. The model from~\cite{GK93} is more general than~\cite{GKL}, and we do {\em not} generalize any of these two models.
Our algorithmic strategy for reconstructing expressions 
of the form~(\ref{multimodel}) is to transform the multivariate problem
into univariate problems by projection, and to ``lift'' the solution
of $n$ different projections to the solution of the multivariate problem.
This can be viewed as an analogue of ``case 1'' of Kayal's algorithm for
Waring decomposition~\cite[Theorem 5]{Kayal12}.

\subsection{Main tools} \label{tools}

Most of our results\footnote{The structural results about real polynomials from Section~\ref{realsec} rely instead on Birkhoff interpolation~\cite{GK}.} hinge on the study of certain differential equations
satisfied by the input polynomial $f$. We consider differential equations 
of the form 
\begin{equation} \label{SDE}
\sum_{i=0}^k P_i(x)f^{(i)}=0
\end{equation}
 where the $P_i$'s are polynomials.
If the degree of $P_i$ is bounded by $i+l$ for every $i$, 
we say that~(\ref{SDE}) is a  {\em Shifted Differential Equation (SDE)} of
order $k$ and shift $l$. 
Section~\ref{prelim} recalls
some (mostly standard) background on differential equations and the 
Wronskian determinant.

 When $f$ is a polynomial with an expression of size~$s$
  in Model~\ref{affpowmodel} we prove in Proposition~\ref{smallequation}
  that $f$ satisfies a ``small'' SDE,
of order $2s-1$ and shift zero.
The basic idea behind our algorithms is to look for one of these ``small'' SDEs satisfied
by $f$, and hope that the powers $(x-a_i)^{e_i}$ in an optimal decomposition
of $f$ satisfy
the same SDE. This isn't just wishful thinking because the SDE 
from  Proposition~\ref{smallequation} is satisfied not only by $f$ but also
by the powers $(x-a_i)^{e_i}$.

Unfortunately, this basic idea  by itself does not yield efficient 
algorithms. The main difficulty is that $f$ could satisfy several SDE of
order $2s-1$ and shift 0. By Remark~\ref{findSDE} 
we can efficiently find such a SDE, 
but what if we don't find the ``right'' SDE, 
i.e., the SDE which (by Proposition~\ref{smallequation}) is guaranteed to be satisfied by $f$ {\em and} by the powers $(x-a_i)^{e_i}$?
One way around this difficulty is to assume that the exponents $e_i$ are all
sufficiently large compared to $s$. In this case we can show that every SDE of order $2s-1$ and shift $0$ which is satisfied by $f$
 is also satisfied by $(x-a_i)^{e_i}$.
This fact is established in Corollary~\ref{bigexpCor}, and yields the 
following result (see Theorem~\ref{th-algbigexp} in Section~\ref{distinctsec} for a more detailed statement which includes a description of the algorithm).
\begin{th-algorithm}[Big exponents]\label{bigexpintro}
	Let $f \in \FF[x]$ be a polynomial that can be written as 
	\[
		f = \sum_{i = 1}^s \alpha_i (x - a_i)^{e_i},
	\]
 	where the constants $a_i \in \FF$ are all distinct, $\alpha_i \in \FF \setminus \{0\}$ and $e_i > 5 s^2/2$. 
 	Then, $\AffPow_\FF(f) = s$. Moreover, there is a polynomial time algorithm that receives $f = \sum_{i = 0}^d f_i x^i \in \FF[x]$ as input and computes 
 	the $s$-tuples of coefficients  $C(f) = (\alpha_1,\ldots,\alpha_s)$, of nodes $N(f) = (a_1,\ldots,a_s)$ and exponents $E(f) = (e_1,\ldots,e_s)$. 
\end{th-algorithm}
The algorithm of Theorem~\ref{diffnodesintro} is more involved: 
contrary to Theorem~\ref{bigexpintro}, we cannot 
determine all the terms $(x-a_i)^{e_i}$ in a single pass. Solving the SDE
only allows the determination of some (high degree) terms. We must then 
subtract these terms from $f$, and iterate.

In the first version of this paper,\footnote{\href{http://arxiv.org/abs/1607.05420v1}{arxiv.org/abs/1607.05420v1}}
instead of a $\SDE(2s-1,0)$ we used a SDE of order $s$ and shift $s \choose 2$ originating from the Wronskian determinant
(compare the two versions of Proposition~\ref{smallequation}).
Switching to the new SDE led to significant improvements in most of our algorithmic results. 
For instance, in the first version of Theorem~\ref{bigexpintro} the exponents
$e_i$ had to satisfy the condition $e_i > s^2(s+1)/2$ instead of the current (less stringent) condition
$e_i > 5s^2/2$.

\subsection{Models of computation} \label{modcomp}

 Our algorithms take as inputs polynomials with coefficients in an arbitrary field $\KK$ of characteristic 0. At this level of generality, we need to be able to perform arithmetic operations (additions, multiplications) and equality tests
between elements of $\KK$. 
When we write that an algorithm runs in polynomial time, we mean that the number
of such steps is polynomial in the input size.
This is a fairly standard setup for algebraic algorithms (it {  is}
also interesting to analyze the bit complexity of our algorithms for some specific fields such as the field of rational numbers; 
more on this {  at the end of this subsection and} 
in Section~\ref{future}). 
An input polynomial of degree $d$ is represented simply by the list of coefficients of its $d+1$ monomials, and its size thus equals $d+1$.
In addtion to arithmetic operations and equality tests, we need to to be able to compute roots of polynomials with coefficients in $\KK$. This is in general unavoidable: for an an optimal decomposition  of $f \in \KK[X]$ in Model~\ref{affpowmodel}, the coefficients $\alpha_i, a_i$ 
may lie in an extension field $\FF$ of $\KK$
(see Section~\ref{structure} and more precisely Example~\ref{realvscomplex} in Section~\ref{realsec} for the case $\KK=\RR, \FF=\CC$). If the optimal decomposition has size~$s$, we need to compute roots of polynomials of degree at most $2s-1$.\footnote{Except in the algorithm of Theorem~\ref{th-algVeryLargeExp}, where we need to compute roots of polynomials at most  $2s-1+\delta$. Here $\delta$ is a parameter of the algorithm, see Theorem~\ref{th-algVeryLargeExp} for details.}
As a rule, root finding is used only to output
the nodes $a_i$ of the optimal decomposition,\footnote{Once the $a_i$'s have been determined, we also need to do some linear algebra computations with these nodes to determine the coefficients $\alpha_i$.}
but the ``internal working'' of our algorithms 
remains purely rational 
(i.e., requires only arithmetic operations and comparisons).
This is similar to the symbolic algorithm for univariate sparsest shifts of
Giesbrecht, Kaltofen and Lee~(\cite{GKL}, p.~408 of the journal paper),
which also needs access to a polynomial root finder.

The one exception to this rule is the algorithm of 
Theorem~\ref{diffnodesintro}.
As mentioned at the end of Section~\ref{tools}, this is an iterative algorithm.
At each step of the iteration we have to compute roots of polynomials (which may lie outside $\KK$), and we keep computing with these roots in the subsequent iterations. For more details see Theorem~\ref{diffnodesimproved} and the discussion after that theorem.
We make a first step toward removing root finding from the internal working of this algorithm in Proposition~\ref{Galois}.

{  We also take some steps toward the analysis of our algorithms 
in the bit model of computation. 
We focus on the algorithm of Theorem~\ref{diffnodesintro} 
since it is the most difficult to analyze due to its iterative nature.
We show in Proposition~\ref{polyoutput} that for polynomials with integer coefficients, this algorithm can be implemented in the bit model
to run in time polynomial in the bit size of the {\em output}. We do not have a polynomial
running time bound as a function of the input size (more on this in Section~\ref{future}).}
We also compute  explicitly a polynomial
  bound on the running time of  the simpler algorithm of Theorem~\ref{th-algbigexp}, which deals with the case of ``big exponents''. Our bound is of fairly
  large degree and is probably not optimal.

\subsection{Future work} \label{future}

One could try to extend the results of this paper in several directions.
For instance, one could try to handle ``supersparse'' polynomials like in the
Sparsest Shift algorithm of~\cite{GR}.
The multivariate case would also deserve further study. As explained above
we proceed by reduction to the univariate case, but one could try to design
more ``genuinely multivariate'' algorithms. For Waring decomposition, 
such an algorithm is proposed in ``case 2'' of~\cite[Theorem 5]{Kayal12}.
Its analysis relies on a randomness assumption for the input $f$ (our multivariate algorithm is randomized, but in this paper we never assume that the 
input polynomial is randomized).

One should also keep in mind, however, that the basic univariate 
problem studied in 
the present paper is far from completely solved: our algorithms all rely on 
some assumptions for the exponents $e_i$ in a decomposition of $f$, and some
algorithms also rely on a distinctness assumption for the shifts $a_i$.
It would be very interesting to weaken these assumptions, or even to remove
them entirely. With a view toward this question, one could first try to 
improve the lower bounds from~\cite{KKPS}. Indeed, the same tools 
(Wronskians, shifted differential equations) turn out to be useful for the 
two problems (lower bounds and reconstruction algorithms) but the lower
bound problem appears to be easier. For real polynomials we have already 
obtained optimal $\Omega(d)$ lower bounds in~\cite{GK} 
using Birkhoff interpolation, but it remains to give an algorithmic application
of this lower bound method.

Another issue that we have {  only begun to address}
is the analysis of the
bit complexity of our algorithms. 
We give an explicit polynomial bound on the bit complexity of
the algorithm of Theorem~\ref{th-algbigexp},
but this issue seems
to be more subtle for Theorem~\ref{diffnodesintro} 
due to the iterative nature of our algorithm. It is in fact not clear
that there exists a solution of size polynomially bounded in the input size
(i.e., in the bit size of $f$ given as a sum of monomials).
More precisely, we ask the following question.
\begin{question}\label{questinputoutput}
We define the dense size of a polynomial $f = \sum_{i=0}^d f_ix^i \in \ZZ[X]$
as $\sum_{i=0}^d [1+\log_2(1+|f_i|)]$.
Assume that $f$ can be written as 	
$$f = \sum_{i = 1}^s \alpha_i (x - a_i)^{e_i}$$ 
with $a_i \in \ZZ$, $\alpha_i \in \ZZ \setminus \{0\}$, 
and that this decomposition satisfies the conditions of Theorem~\ref{diffnodesintro}:  the constants $a_i$ are all distinct, and $n_i \leq  (3i/4)^{1/3} - 1$ for all 
$i \geq 2$, where $n_i$ denotes the number of indices $j$ such that
 $e_j \leq i$.

Is it possible to bound the bit size of the constants $\alpha_i, a_i$ by
a polynomial function of the dense size of $f$ ?
\end{question}
{  As explained at the end of Section~\ref{modcomp}, 
under the same conditions we have a decomposition algorithm that runs 
in time polynomial in the bit size of the {\em output}. It follows that the above question has a positive answer if and only if 
there is a decomposition algorithm that runs in time polynomial 
in the bit size of the input 
(i.e., in time polynomial in the dense size of $f$).}

One could also ask similar questions in the case where 
the conditions of Theorem~\ref{diffnodesintro} do not hold. For instance, 
assuming that $f$ has an optimal decomposition with integer coefficients, is there such a decomposition where the coefficients $\alpha_i, a_i$ are of size polynomial in the size of $f$ ?

\section{Preliminaries} \label{prelim}

In this section we present some tools that are useful for their algorithmic applications in Sections~\ref{distinctsec} and~\ref{repeatsec}.
Section~\ref{structure} can be read independently, except for the proof of 
Proposition~\ref{LowerBound} and Theorem \ref{th-algVeryLargeExp} which use the Wronskian.

\subsection{The Wronskian}

In mathematics the \emph{Wronskian} is a tool mainly used in the study of differential equations, where it can be used to show that a set of solutions is linearly independent.
\begin{definition}[Wronskian]
	For $n$ univariate functions $f_1, \ldots, f_n$, which are $n-1$ times differentiable, the Wronskian $\Wr(f_1, \ldots, f_n)$ is defined as
	\[
		\Wr(f_1, \ldots, f_n)(x) =
		\begin{vmatrix}
		f_1(x) 		& 	f_2(x) 		& \dots  & f_n(x)		\cr 
		f_1'(x) 	& 	f_2'(x) 	& \dots  & f_n'(x)		\cr 
			\vdots 	& 	\vdots 		& \ddots & \vdots 		\cr
		f_1^{(n-1)}	& f_2^{(n-1)}	& \dots	 & f_n^{(n-1)}
		\end{vmatrix} 
	\]
\end{definition}

It is a classical result, going back at least to~\cite{Bocher}, that the Wronskian captures the linear dependence of polynomials in $\FF[x]$.
\begin{proposition} \label{indepWronskian}
	For $f_1, \ldots, f_n \in \FF[X]$, the polynomials are linearly dependent if and only if the Wronskian $\Wr(f_1, \ldots, f_n)$ vanishes everywhere.
\end{proposition}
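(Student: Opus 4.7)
The plan is to establish both directions by induction on $n$, starting from the trivial base case $n=1$ where $\Wr(f_1) = f_1$. The easy direction—linear dependence implies vanishing of the Wronskian—works uniformly in $n$: if $\sum c_i f_i \equiv 0$ with not all $c_i$ zero, then differentiating repeatedly shows that the columns of the Wronskian matrix satisfy the very same linear relation over $\FF$, hence its determinant is identically zero.

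For the converse, suppose $\Wr(f_1,\ldots,f_n)\equiv 0$ and consider two cases. If $\Wr(f_1,\ldots,f_{n-1})\equiv 0$ as well, the inductive hypothesis produces a dependence among $f_1,\ldots,f_{n-1}$, which extends trivially to $f_1,\ldots,f_n$. Otherwise $\Wr(f_1,\ldots,f_{n-1})\not\equiv 0$, so I would apply Cramer's rule to the first $n-1$ rows of the Wronskian matrix to produce rational functions $c_1(x),\ldots,c_{n-1}(x)\in\FF(x)$ satisfying
$$f_n^{(k)} = \sum_{i=1}^{n-1} c_i(x)\, f_i^{(k)}, \qquad k=0,1,\ldots,n-2.$$
The hypothesis $\Wr(f_1,\ldots,f_n)\equiv 0$ tells us that the $n$ columns of the full Wronskian matrix are $\FF(x)$-linearly dependent, and since the upper-left $(n-1)\times(n-1)$ block is invertible, the combination produced by Cramer is the unique one and must also match the last row; thus the above identity extends to $k=n-1$ as well.

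The heart of the argument is then to show that the $c_i$ are in fact elements of $\FF$. Differentiating the $k=0$ relation $f_n=\sum c_i f_i$ and subtracting the $k=1$ relation $f_n'=\sum c_i f_i'$ yields $\sum c_i' f_i = 0$; iterating this trick (differentiate the $k$-th relation, subtract the $(k{+}1)$-th) produces
$$\sum_{i=1}^{n-1} c_i'(x)\, f_i^{(k)}(x) = 0, \qquad k=0,1,\ldots,n-2.$$
This is a square linear system in the unknowns $c_i'$ whose coefficient matrix is exactly the Wronskian matrix of $f_1,\ldots,f_{n-1}$, whose determinant is nonzero by assumption. Hence $c_i'\equiv 0$ for every $i$, so each $c_i$ is a constant, and $f_n-\sum c_i f_i\equiv 0$ is a nontrivial linear relation.

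The main obstacle I anticipate is bookkeeping in the Cramer step: one must justify carefully that the solution $(c_1,\ldots,c_{n-1})$ extracted from the first $n-1$ rows is genuinely consistent with the $k=n-1$ row, which is where the hypothesis $\Wr(f_1,\ldots,f_n)\equiv 0$ is used essentially. Everything else is formal manipulation, with rational functions being legitimate objects to differentiate since $\Wr(f_1,\ldots,f_{n-1})$ is a nonzero polynomial and thus has only finitely many zeros.
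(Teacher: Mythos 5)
The paper does not prove this proposition; it states it as a classical fact with a citation to B\^ocher, so there is no in-paper argument to compare against. Your proof is correct and is essentially the standard textbook argument: the forward direction by differentiating a linear relation, and for the converse an induction where, in the nondegenerate case, Cramer's rule over $\FF(x)$ expresses $f_n^{(k)}=\sum c_i(x) f_i^{(k)}$ for $k=0,\dots,n-1$ (the consistency with the last row being exactly where $\Wr(f_1,\dots,f_n)\equiv 0$ is used), and the telescoping difference trick produces the system $\sum c_i' f_i^{(k)}=0$ whose invertible coefficient matrix forces $c_i'\equiv 0$. One point worth making explicit, which you leave implicit: the full-matrix dependence over $\FF(x)$ must have a nonzero coefficient on the $f_n$ column, since otherwise it would give a nontrivial dependence among the first $n-1$ columns contradicting $\Wr(f_1,\dots,f_{n-1})\not\equiv 0$; this is what lets you normalize and then invoke uniqueness of the Cramer solution. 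Also note that the last step, $c_i'\equiv 0\Rightarrow c_i\in\FF$, and indeed the proposition itself, genuinely require characteristic zero (over $\FF_p$ the polynomials $1$ and $x^p$ are independent yet have vanishing Wronskian); this is fine here since the paper works over a field of characteristic zero throughout.
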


For every $f \in \FF[x]$ and every $a \in \FF$ we denote by $\mult[a]{f}$ the multiplicity 
of $a$ as a root of $f$,
i.e., $\mult[a]{f}$
is the maximum $t \in \NN$ such that $(x-a)^t$ divides $f$.
The following
result from \cite{WronskianLemma} gives a Wronskian-based bound on the multiplicity of a root in a sum of polynomials.

\begin{lemma}\label{lemma:WrMult}
	Let $f_1, \ldots, f_n$ be some linearly independent polynomials and $a \in \FF$, and let $f(x) = \sum_{j=1}^n f_j(x)$.
	Then:
	\[
		\mult[a]{f} \leq n - 1 + \mult[a]{\Wr(f_1, \ldots, f_n)},
	\]
	where $\mult[a]{f}$ is finite since $\Wr(f_1,\ldots,f_n) \not\equiv 0$.
\end{lemma}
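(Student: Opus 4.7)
The plan is to exploit a simple column-operation identity on the Wronskian matrix. Since adding one column of a determinant to another leaves the determinant unchanged, I would add columns $2,3,\ldots,n$ to column~1 of the matrix defining $\Wr(f_1,\ldots,f_n)$. Because differentiation is linear, the entry in row $k$ (for $k=0,\ldots,n-1$) of the new first column is exactly $(f_1+\cdots+f_n)^{(k)} = f^{(k)}$. Therefore
\[
\Wr(f_1,\ldots,f_n) \;=\;
\begin{vmatrix}
f & f_2 & \cdots & f_n \\
f' & f_2' & \cdots & f_n' \\
\vdots & \vdots & \ddots & \vdots \\
f^{(n-1)} & f_2^{(n-1)} & \cdots & f_n^{(n-1)}
\end{vmatrix}.
\]

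Next I would note that $f \not\equiv 0$: otherwise $f_1+\cdots+f_n=0$ would be a nontrivial linear dependence, contradicting the hypothesis. Hence $m := \mult_a(f)$ is a well-defined nonnegative integer, and the question is whether $m \leq n-1 + \mult_a(\Wr)$. If $m \leq n-1$ the bound is trivial since $\mult_a(\Wr) \geq 0$, so assume $m \geq n-1$. Then for each $k \in \{0,\ldots,n-1\}$ we have $\mult_a(f^{(k)}) \geq m-k \geq m-n+1$, so every entry of the first column of the rewritten determinant is divisible by $(x-a)^{m-n+1}$.

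Expanding the determinant along the first column (or equivalently factoring $(x-a)^{m-n+1}$ out of that column) shows that $(x-a)^{m-n+1}$ divides $\Wr(f_1,\ldots,f_n)$. Since the Wronskian is not identically zero (by Proposition~\ref{indepWronskian} combined with the linear independence hypothesis), we obtain $\mult_a(\Wr(f_1,\ldots,f_n)) \geq m - n + 1$, which rearranges to the claimed inequality.

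There is essentially no hard step: the one thing to watch out for is the case $m < n-1$, where the naive estimate on column multiplicities would be negative and meaningless, but this case is handled by the trivial bound. The substantive content of the argument is the column operation that turns the first column into successive derivatives of $f$; after that, the multiplicity count is automatic.
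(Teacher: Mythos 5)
Your proof is correct. The column operation is legitimate (adding columns $2,\ldots,n$ to column $1$ does not change the determinant, and linearity of differentiation makes the new first column consist of $f, f', \ldots, f^{(n-1)}$), the case split $m < n-1$ versus $m \geq n-1$ handles the sign issue cleanly, and in characteristic zero it is indeed true that $\mult[a]{f^{(k)}} \geq m - k$ for $k \leq m$, so $(x-a)^{\,m-n+1}$ divides the first column and hence the Wronskian. The non-vanishing of $\Wr(f_1,\ldots,f_n)$, guaranteed by Proposition~\ref{indepWronskian} and the linear-independence hypothesis, makes $\mult[a]{\Wr}$ finite and the conclusion meaningful.

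For comparison: the paper does not actually prove this lemma, but simply cites it to \cite{WronskianLemma} (Voorhoeve and van der Poorten). So there is no in-paper argument to set your proof against; you have supplied a short, self-contained elementary proof where the paper only gives a reference. Your argument is essentially the standard one for this fact and would be a perfectly acceptable inline proof if the authors had chosen to include one.
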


In \cite{PS76} one can find several properties concerning the Wronskian (and which have been known since the 19$^{th}$ century). In this work we will use
the following properties, which can be easily derived 
from those of \cite{PS76}. 
For the sake of completeness we include a short proof. 

\medskip

\begin{proposition}\label{factorWronskian}
	Let $f_1, \ldots, f_n \in \FF[x]$ be linearly independent polynomials and let $a \in \FF$. If $f_j = Q_j^{d_j} g_j$ with $Q_j \in \FF[x]$ and $d_j \geq n$ for
	all $j$, then $Q := \prod_{j = 1}^n Q_j^{\,d_j - n + 1}$ divides $\Wr(f_1,\ldots,f_n)$.
	Moreover, if $Q(a) \neq 0$, then 
		\[ 
		\mult[a]{\Wr(f_1,\ldots,f_n)} \leq \sum_{j = 1}^n \big[ {\rm deg}(g_j) + (n - 1) {\rm deg}(Q_j) \big] - \binom{n}{2}.
		\]
	Hence, if we set $f := \sum_{j = 1}^n f_j$, then 
		\[
		\mult[a]{f} \leq n - 1 +  \sum_{j = 1}^n \big[ {\rm deg}(g_j) + (n - 1) {\rm deg}(Q_j) \big] - \binom{n}{2}.
		\]		
\end{proposition}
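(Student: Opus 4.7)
The plan is to derive all three conclusions from one basic observation: by the Leibniz rule, taking a derivative drops the exponent of $Q_j$ in $f_j = Q_j^{d_j} g_j$ by at most one.

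First I would prove the divisibility claim. For every $k \leq n-1$, iterating the product rule on $f_j = Q_j^{d_j} g_j$ shows that $f_j^{(k)}$ is divisible by $Q_j^{d_j - k}$, hence by $Q_j^{d_j - n + 1}$ (using $d_j \geq n$). Thus every entry of the $j$-th column of the Wronskian matrix is divisible by $Q_j^{d_j - n + 1}$. Factoring this out of column $j$ for each $j$ (equivalently, by multilinearity of the determinant in its columns) shows that $Q = \prod_{j=1}^n Q_j^{d_j-n+1}$ divides $\Wr(f_1,\ldots,f_n)$.

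Next I would handle the multiplicity bound. Write $\Wr(f_1,\ldots,f_n) = Q \cdot W$ for some $W \in \FF[x]$. Since $f_1,\ldots,f_n$ are linearly independent, Proposition~\ref{indepWronskian} gives $\Wr \not\equiv 0$, hence $W \not\equiv 0$. From the Leibniz formula for the determinant, every term has degree at most $\sum_{j=1}^n (\deg(f_j) - (j-1)) = \sum_{j=1}^n \deg(f_j) - \binom{n}{2}$, so
\[
\deg(W) = \deg(\Wr) - \deg(Q) \leq \sum_{j=1}^n [d_j \deg(Q_j) + \deg(g_j)] - \binom{n}{2} - \sum_{j=1}^n (d_j - n + 1)\deg(Q_j),
\]
which simplifies to $\sum_{j=1}^n [\deg(g_j) + (n-1)\deg(Q_j)] - \binom{n}{2}$. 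Under the assumption $Q(a) \neq 0$, one has $\mult_a(\Wr) = \mult_a(W) \leq \deg(W)$, giving the stated bound.

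Finally, the bound on $\mult_a(f)$ is immediate from Lemma~\ref{lemma:WrMult} applied to the linearly independent family $f_1,\ldots,f_n$: combining $\mult_a(f) \leq n - 1 + \mult_a(\Wr(f_1,\ldots,f_n))$ with the multiplicity bound just established yields the claim. I do not foresee a serious obstacle here; the only subtle points are keeping the assumption $Q(a) \neq 0$ in force for the last inequality (so that $\mult_a(Q) = 0$ drops out cleanly) and verifying that $d_j \geq n$ is exactly what is needed to ensure $d_j - k \geq d_j - n + 1 \geq 1$ for all $k \leq n-1$, so that the factorization step is well-defined.
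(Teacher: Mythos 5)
Your proof is correct and follows essentially the same route as the paper: factor $Q_j^{d_j - n + 1}$ out of the $j$-th column to establish divisibility, then bound the degree of the cofactor $W$, and finish with Lemma~\ref{lemma:WrMult}. The only (very minor) difference is in the degree bookkeeping: the paper bounds the degree of each entry of the reduced matrix $h$ after factoring and then sums, whereas you bound $\deg(\Wr)$ globally by $\sum_j \deg(f_j) - \binom{n}{2}$ and subtract $\deg(Q)$. Both computations yield the same bound, and the rest of the argument is identical.
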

\begin{proof}
Consider the $n \times n$ Wronskian matrix $W$ whose $(i+1,j)$-th entry is $f_j^{(i)}(x)$ with $0 \leq i \leq n-1$, $1 \leq j \leq n$. Since $Q_j^{d_j}$ divides $f_j$, then  $f_j^{(i)} 
	= Q_j^{d_j - i} g_{i,j} = Q_j^{d_j - n + 1} Q_j^{n-1-i} g_{i,j}$, for some $g_{i,j} \in \FF[x]$ of degree ${\rm deg}(g_j) + i\,{\rm deg}(Q_j) - i$.
	Since $Q_j^{d_j - n + 1}$ divides every element in the $j$-th
	column of $W$, we can factor it out from the Wronskian. 
This proves that $Q$ divides $\Wr(f_1,\ldots,f_n)$. Once we have factored out $Q_j^{d_j-n+1}$ for all $j$, we observe that $
	\Wr(f_1,\ldots,f_n) = Q(x)\, h(x)$, where
	$h(x)$ is the determinant of a matrix whose $(i+1,j)$-th entry has degree ${\rm deg}(g_j) + (n-1){\rm deg}(Q_j) - i$ for all $0 \leq i \leq n-1$ and $1 \leq j \leq n$.
	Hence, ${\rm deg}(h) \leq  \sum_{j = 1}^n \left[{\rm deg}(g_j) + (n-1){\rm deg}(Q_j)\right] - \binom{n}{2}.$
	Finally, we observe that if $Q(a) \neq 0$:
		$$\mult[a]{\Wr(f_1,\ldots,f_n)}= \mult[a]{Q} + \mult[a]{h} = \mult[a]{h} \leq {\rm deg}(h).$$
	For $f = \sum_{j = 1}^n f_j$, the upper bound for $\mult[a]{f}$ follows directly from Lemma \ref{lemma:WrMult}.
\end{proof}

\medskip
We observe that the result holds when some of the $Q_i(x) = 1$.

\subsection{Shifted Differential Equations}
\begin{definition}
A	\emph{Shifted Differential Equation} (SDE) is 
a differential equation of the form
	\[
		\sum\limits_{i=0}^k P_i(x)f^{(i)}(x) = 0
	\]
where $f$ is the unknown function and the $P_i$ are polynomials in $\FF[x]$ 
with $\deg(P_i) \leq i + l$.\\
	The quantity $k$ is called the \emph{order} of the equation,
and the quantity $l$ is called the \emph{shift}.
We will usually denote such a differential equation by $\SDE(k,l)$.
\end{definition}

One of the key ingredients for our results is that if $\AffPow(f)$ is small, then $f$ satisfies a ``small'' SDE. More precisely:

\begin{proposition}\label{smallequation}
	Let $\delta \in \ZZ^+$ and let $f \in \FF[x]$ be written as
	\[
		f = \sum_{i = 1}^t Q_i(x) (x - a_i)^{e_i}.
	\]
	where $a_i \in \FF$, $e_i \in \NN$ and ${\rm deg}(Q_i(x)) \leq \delta$ for all $i$.

        Then, $f$ satisfies a $\SDE(2t-1,\delta)$ which is also satisfied by the $t$ terms $f_i(x)=Q_i(x) (x - a_i)^{e_i}$.
        In particular, if $\AffPow_\FF(f) = s$, then
	 $f$ satisfies a $\SDE(2s-1,0)$.
\end{proposition}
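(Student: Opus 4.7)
The plan is to prove existence by a dimension count in the space of candidate operators. Let $\mathcal{V}$ denote the $\FF$-vector space of formal differential operators $L = \sum_{i=0}^{2t-1} P_i(x)\,\partial^i$ with $\deg P_i \leq i + \delta$. Its dimension is
\[
\dim \mathcal{V} \;=\; \sum_{i=0}^{2t-1}(i + \delta + 1) \;=\; 2t(\delta + 1) + \binom{2t}{2} \;=\; 2t^2 + 2t\delta + t.
\]
I will show that the subspace $\mathcal{W} \subseteq \mathcal{V}$ cut out by the $t$ conditions $L f_j = 0$, where $f_j := Q_j(x)(x - a_j)^{e_j}$, has positive dimension; any nonzero element of $\mathcal{W}$ will then give an SDE satisfied by each $f_j$, and hence by $f$ by linearity.

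The structural fact making the count work is that for every $j$ and every $i \leq e_j$, the derivative factors as $f_j^{(i)} = (x-a_j)^{e_j - i}\, R_{j,i}(x)$ with $\deg R_{j,i} \leq \delta$. This is immediate from expanding $Q_j(x) = \sum_{m=0}^{\delta} c_{j,m}(x - a_j)^m$, writing $f_j = \sum_m c_{j,m}(x - a_j)^{e_j + m}$, and differentiating term by term. Consequently, when $e_j \geq 2t-1$ the polynomial $L f_j$ is divisible by $(x - a_j)^{e_j - (2t-1)}$, and the quotient $\sum_i P_i(x)(x-a_j)^{(2t-1)-i} R_{j,i}(x)$ has degree at most $(2t-1) + 2\delta$; when $e_j < 2t - 1$ the polynomial $L f_j$ itself has smaller degree. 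Either way, the equation $L f_j = 0$ is equivalent to the vanishing of at most $2t + 2\delta$ coefficients, i.e.\ at most $2t + 2\delta$ linear constraints on the $P_i$.

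Imposing these conditions for all $j$ gives at most $t(2t + 2\delta) = 2t^2 + 2t\delta$ linear constraints on $\mathcal{V}$, so $\dim \mathcal{W} \geq \dim \mathcal{V} - (2t^2 + 2t\delta) = t > 0$. Any nonzero $L \in \mathcal{W}$ furnishes the desired $\SDE(2t-1, \delta)$, which by construction is satisfied by each $f_j$ and hence by $f$. The final assertion is the special case $t = s$, $\delta = 0$ applied to an optimal decomposition. The main obstacle is isolating the correct constraint count: a naive bound using just $\deg L f_j$ would give roughly $\sum_j e_j$ constraints and defeat the dimension count, so the crux is to exploit the divisibility $f_j^{(i)} = (x - a_j)^{e_j - i} R_{j,i}$ in order to factor out the large common power $(x - a_j)^{e_j - (2t-1)}$ from $L f_j$ before reading off coefficient conditions.
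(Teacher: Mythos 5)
Your proof is correct and takes essentially the same dimension-counting approach as the paper: both compare the dimension of the space of candidate operators (computed identically as $2t^2 + 2t\delta + t$) against the number of linear constraints $Lf_j = 0$, exploiting the factorisation $f_j^{(i)} = (x-a_j)^{e_j-i}R_{j,i}$ with $\deg R_{j,i} \leq \delta$ to show each term contributes few constraints. Your per-term constraint count of $2t + 2\delta$ is in fact the careful bound (the paper states $\dim V_i \leq 2t+\delta$ at the corresponding step, which appears to be a small slip, but either bound is strictly below $2t + 2\delta + 1$, so the conclusion is unaffected).
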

\begin{proof}
  If we can find a $\SDE(2t-1,\delta)$ which is satisfied by all the $f_i$, by
  linearity the same $\SDE$ will be satisfied by $f$ and the theorem will be proved.
  The existence of this common $\SDE$ is equivalent to the existence of a nonzero
  solution for the following linear system in the unknowns $a_{j,k}$:
  $$\sum_{j,k} a_{j,k}x^jf_i^{(k)}(x)=0,$$
  where $1 \leq i \leq t$, $0 \leq k \leq 2t-1$ and $0 \leq j \leq k+\delta$.
  There are $(\delta + 1) + (\delta + 2) + \cdots + (\delta + 2t) = (2\delta + 2t+1)t$ unknowns, so we need to show that the matrix of this linear system
  has rank smaller than $(2\delta + 2t+1)t$.
  It suffices to show that for each fixed value of $i \in \{1,\ldots,t\}$,
  the subsystem:
  $$\sum_{j,k} a_{j,k}x^jf_i^{(k)}(x)=0\
  (0 \leq k \leq 2t-1, 0 \leq j \leq k+\delta)$$
  has a matrix of rank $<2\delta + 2t+1$.
  In other words, we have to show that the subspace $V_i$ spanned by
  the polynomials $x^jf_i^{(k)}(x)$ has dimension less than $2\delta + 2t+1$.
  But $V_i$ is included in the subspace spanned by the polynomials
  $$\{ (x-a_i)^{e_i  + j};\ -(2t-1) \leq j \leq \delta, e_i+j \geq 0\}.$$
  This is due to the fact that the polynomials $x^j$ and $Q_i(x)$
  belong respectively to the  spans of the polynomials
  $\{ (x-a_i)^{\ell}\, \vert \, 0 \leq \ell \leq j\},$ and
  $\{ (x-a_i)^{\ell}\, \vert \, 0 \leq \ell \leq \delta\}.$
  We conclude that $\dim V_i \leq 2t + \delta < 2\delta + 2t+1$.
\end{proof}

\begin{remark} \label{findSDE}
A polynomial $f$ satisfies a $\SDE(k,l)$ if and only if the polynomials
$(x^j f^{(i)}(x))_{0 \leq i \leq k, 0 \leq j \leq i+l}$ are linearly dependent
over $\FF$. The existence of such a $\SDE$ can therefore be decided efficiently 
by linear algebra, and when a $\SDE(k,l)$ exists it can be found explicitly by solving the corresponding linear system
{  (see, e.g., \cite[Corollary 3.3a]{Schrijver} for an analysis of linear system solving in the bit model of computation)}. We use this fact repeatedly
in the algorithms of Sections~\ref{distinctsec} and~\ref{repeatsec}.
\end{remark}

\medskip 
In this paper we will use some results concerning the set of solutions of a $\SDE$. They are particular cases of properties
that apply to linear homogeneous differential equations.

\medskip

\begin{lemma}\label{dimsolution}
	The set of polynomial solutions of a $\SDE$ of order $k$ is a vector space of dimension at most $k$.
\end{lemma}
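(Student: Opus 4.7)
My plan is to show the solution set is the kernel of a linear operator and then bound its dimension by producing an injection into $\FF^k$ via evaluation of the low-order derivatives at a well-chosen point. Setting $L(f) := \sum_{i=0}^k P_i(x) f^{(i)}(x)$, the map $L : \FF[x] \to \FF[x]$ is $\FF$-linear, so $\ker L$ is an $\FF$-subspace of $\FF[x]$, which handles the vector-space part of the assertion.

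For the dimension bound I would induct on $k$. If $P_k \equiv 0$ then the equation is really a $\SDE$ of order at most $k-1$, and the inductive hypothesis already gives dimension at most $k-1$. Otherwise, since $\FF$ is of characteristic zero (hence infinite) and $P_k$ is a nonzero polynomial with only finitely many roots, I can choose $x_0 \in \FF$ with $P_k(x_0) \neq 0$. I then introduce the $\FF$-linear ``jet'' map $\Phi : \ker L \to \FF^k$ defined by
\[
\Phi(f) \;=\; \bigl(f(x_0),\, f'(x_0),\, \ldots,\, f^{(k-1)}(x_0)\bigr),
\]
and the bound $\dim \ker L \leq k$ reduces to showing that $\Phi$ is injective.

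The main (and essentially the only nontrivial) step is this injectivity. Suppose $f \in \ker L$ satisfies $\Phi(f) = 0$. Evaluating the SDE at $x_0$ kills every term with index $i < k$, leaving $P_k(x_0)\, f^{(k)}(x_0) = 0$ and hence $f^{(k)}(x_0) = 0$. Differentiating the SDE $m$ times via the Leibniz rule and evaluating at $x_0$, the inductive assumption $f^{(j)}(x_0) = 0$ for all $j < k+m$ kills every resulting term except $P_k(x_0)\, f^{(k+m)}(x_0)$, so $f^{(k+m)}(x_0) = 0$. A straightforward induction on $m$ therefore shows that every derivative of $f$ vanishes at $x_0$, and the finite Taylor expansion
\[
f(x) \;=\; \sum_{j \geq 0} \frac{f^{(j)}(x_0)}{j!}\,(x - x_0)^j
\]
forces $f = 0$. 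The only delicate point is precisely this appeal to characteristic zero (needed so that the factors $j!$ are invertible and the Taylor expansion recovers $f$), a hypothesis that is in force throughout the paper.
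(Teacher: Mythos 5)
Your proof is correct. The paper itself does not supply a proof of this lemma: it is stated right after the remark that such results ``are particular cases of properties that apply to linear homogeneous differential equations,'' so the authors are deferring to the classical theory. Your argument is exactly the standard one for a linear ODE at an ordinary (non-singular) point, specialized to polynomials: show the solution set is a kernel, pick $x_0$ with $P_k(x_0)\neq 0$ (possible since $\FF$ is infinite), and prove the $(k-1)$-jet map at $x_0$ is injective by the Leibniz induction $f^{(k+m)}(x_0)=0$ for all $m$, finishing with the Taylor expansion (valid in characteristic zero). The reduction to $P_k\equiv 0$ by induction on $k$ is a sensible way to cover the case where the stated order overshoots the true order; one small implicit convention you should make explicit is that a $\SDE$ is assumed nontrivial (not all $P_i\equiv 0$), without which the lemma is false and your base case $k=0$ would degenerate. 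With that caveat spelled out, the argument is complete and faithfully fills the gap the paper leaves to the reader.
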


\medskip
Given two $\SDE$ of order $k$: 
	$$\sum_{i=0}^k p_i(x)g^{(i)}(x) = 0 	\quad \text{ and } \quad  \sum_{i=0}^k q_i(x)g^{(i)}(x) = 0,$$
we say that they are equivalent if $p_k q_i = q_k p_i$ for all $i \in \{0,\ldots,k-1\}$. 
The following result can be found in \cite[Property 61]{PS76} and will only be used in the appendix.
We include a short proof.

\begin{lemma}\label{uniqueSDE}
For any set of $\FF$-linearly independent polynomials
	$f_1, \dots, f_k \in \FF[x]$, 
there exists a unique $\SDE$ (up to equivalence) of order $k$ 
satisfied simultaneously by all the $f_i$'s.
\end{lemma}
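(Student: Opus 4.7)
The plan is to prove existence by an explicit Wronskian-type construction, and uniqueness by a reduction-of-order argument that invokes Lemma~\ref{dimsolution}.

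For existence, I would consider the $(k+1)\times(k+1)$ determinant
\[
D(g)(x)=\det\begin{pmatrix}
f_1 & f_2 & \cdots & f_k & g \\
f_1' & f_2' & \cdots & f_k' & g' \\
\vdots & \vdots & & \vdots & \vdots \\
f_1^{(k)} & f_2^{(k)} & \cdots & f_k^{(k)} & g^{(k)}
\end{pmatrix}.
\]
Expanding along the last column produces an identity of the form $D(g)=\sum_{i=0}^{k}P_i(x)\,g^{(i)}(x)$, where each $P_i(x)$ is (up to sign) a $k\times k$ minor with entries in $\FF[x]$, so the $P_i$'s are polynomials. The leading coefficient $P_k(x)$ equals $\Wr(f_1,\dots,f_k)$, which is nonzero by Proposition~\ref{indepWronskian}, so this really is a differential equation of order $k$. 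For $g=f_j$ the matrix has two identical columns, hence $D(f_j)=0$. Thus $\sum_{i=0}^{k}P_i(x)\,g^{(i)}=0$ is an SDE of order $k$ satisfied by each $f_j$.

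For uniqueness, suppose $\sum_{i=0}^{k}p_i(x)g^{(i)}=0$ and $\sum_{i=0}^{k}q_i(x)g^{(i)}=0$ are two SDEs of order $k$ both satisfied by $f_1,\dots,f_k$. Multiply the first by $q_k$ and the second by $p_k$ and subtract; the top-order terms cancel and we obtain
\[
\sum_{i=0}^{k-1}\bigl(q_k(x)p_i(x)-p_k(x)q_i(x)\bigr)\,g^{(i)}(x)=0,
\]
a linear ODE of order at most $k-1$ with polynomial coefficients, still satisfied by all $f_1,\dots,f_k$. If this equation were nontrivial (that is, if some coefficient $q_kp_i-p_kq_i$ were not identically zero), then after dividing out possible leading zeros we would have an SDE of order $\leq k-1$ admitting $k$ linearly independent polynomial solutions, contradicting Lemma~\ref{dimsolution}. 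Therefore $q_k(x)p_i(x)=p_k(x)q_i(x)$ for every $i\in\{0,\dots,k-1\}$, which is exactly the equivalence relation on SDEs defined just before the lemma.

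The main obstacle, and the reason uniqueness requires care, is handling the possibility that the subtracted equation has its own higher-order coefficients vanishing: one cannot immediately say the result has order $k-1$. However, the dimension bound of Lemma~\ref{dimsolution} applies to any nonzero linear ODE with polynomial coefficients of order at most $k-1$, so if any $q_kp_i-p_kq_i$ is nonzero we still contradict the fact that $f_1,\dots,f_k$ are $k$ linearly independent solutions. The only remaining (routine) verification is that the leading coefficient $P_k=\Wr(f_1,\dots,f_k)$ in the existence step is indeed nonzero, which is exactly the content of Proposition~\ref{indepWronskian}.
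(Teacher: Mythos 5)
Your uniqueness argument is exactly the paper's: form $r_i = q_k p_i - p_k q_i$ (the paper's $r_i$, up to sign), observe that if some $r_j\ne 0$ the resulting equation has order $\le k-1$ and is still satisfied by the $k$ linearly independent $f_i$'s, contradicting Lemma~\ref{dimsolution}. The one thing you add is an explicit existence argument via the bordered Wronskian $\Wr(f_1,\dots,f_k,g)$ with leading coefficient $\Wr(f_1,\dots,f_k)\ne 0$ by Proposition~\ref{indepWronskian}; the paper's short proof only treats uniqueness, leaving existence to the cited reference, so your version is a small but genuine completion rather than a different route. Your care about possible vanishing of further leading coefficients in the subtracted equation is sound, since Lemma~\ref{dimsolution} bounds the solution space by the (true) order $k'\le k-1$, still below $k$.
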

\begin{proof}
	Suppose there exist two different $\SDE$ satisfied by $f_1,\ldots,f_k$, namely: 
		$$\sum_{i=0}^k p_i(x)g^{(i)}(x) = 0	\quad \text{ and } \quad  \sum_{i=0}^k q_i(x)g^{(i)}(x) = 0.$$
	Then, we set $r_i := p_k q_i - q_k p_i$ for all $i \in \{0,\ldots,k\}$. By definition we have that $r_k = 0$ and we aim at proving that $r_i = 0$ for all $i$.
	Assume that there exists $j \in \{0,\ldots,k-1\}$ such that $r_j \neq 0$. Then, the following $\SDE$
		$$\sum_{i=0}^{k-1} r_i(x)\, g^{(i)}(x) = 0$$
	has order $\leq k-1$ and is satisfied by $f_1,\ldots,f_k$, a contradiction to Lemma \ref{dimsolution}.
\end{proof}

\section{Structural results} \label{structure}

In this section we compare the expressive power of our 3 models: sums of affine powers, sparsest shift and the Waring decomposition.
We will see in Section~\ref{charzero} that some polynomials have a much smaller
expression as a sum of affine powers than in the sparsest shift or Waring 
models. Moreover, we show that the Waring and sparsest shift models are ``orthogonal'' in the sense that (except in one trivial case) no polynomial can 
have a small representation in both models at the same time.

We begin this investigation of structural properties with the field of real 
numbers, where an especially strong version of orthogonality holds true.
We also show that some real polynomials have a short expression 
as a sum of affine powers over the field of complex numbers, but not over 
the field of real numbers. This observation has algorithmic implications: 
given a polynomial $f \in \FF[X]$, we may have to 
work in a field extension of $\FF$ to find the optimal representation for~$f$.
These ``real'' results can be derived fairly quickly from results in our 
previous paper~\cite{GK}.
We then move to arbitrary fields of characteristic zero 
in Section~\ref{charzero}. Finally, we study the uniqueness of 
optimal representations in Section~\ref{uniquesec}.
It turns out that the algorithms of Sections~\ref{distinctsec} 
and~\ref{repeatsec} only work in a regime where the uniqueness of optimal
representations is guaranteed.

\subsection{The real case} \label{realsec}

In \cite{GK} the authors considered polynomials with real coefficients and proved the following result.
\begin{theorem}\cite[Theorem 13]{GK} \label{lbtool}
Consider a polynomial identity of the form:
	\[
		\sum_{i=1}^k \alpha_i (x-a_i)^d = \sum_{i=1}^l \beta_i (x-b_i)^{e_i}
	\]
where the $a_i \in \RR$ are distinct constants, the constants
$\alpha_i \in \RR$ are not all zero, the $\beta_i \in \RR$ and $b_i \in \RR$ are arbitrary
 constants, and $e_i < d$ for every $i$. Then, we must have $k+l
\geq \lceil (d+3)/2 \rceil$.
\end{theorem}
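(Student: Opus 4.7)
The plan is to extract two complementary lower bounds from the identity — one on $k$ in terms of $e := \max_j e_j$, and a matching one on $l$ — and then combine them. The two bounds pull in opposite directions with respect to $e$, so their sum yields the advertised $\lceil(d+3)/2\rceil$ estimate. The decisive real-analytic ingredient is Rolle's theorem, used in the guise of Birkhoff interpolation; this is precisely what prevents any analogous linear-in-$d$ bound from holding over $\CC$.

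First, I would rewrite the identity as $P(x) := \sum_{i=1}^k \alpha_i(x-a_i)^d = \sum_{j=1}^l \beta_j(x-b_j)^{e_j}$ and set $e := \max_j e_j \leq d-1$. Since the right-hand side has degree at most $e$, the $(e+1)$-st derivative of $P$ vanishes identically; on the left this reads
\[
\sum_{i=1}^k \alpha_i (x-a_i)^{d-e-1} \equiv 0.
\]
Expanding in the monomial basis and using that the $a_i$ are distinct, this reduces to a Vandermonde system $\sum_i \alpha_i a_i^t = 0$ for $0 \le t \le d-e-1$, which admits a nonzero solution for $(\alpha_1,\dots,\alpha_k)$ only when $k \ge d-e+1$, i.e., $k + e \ge d+1$. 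This first bound is purely algebraic and does not use the reality of $\FF$.

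The second step — which I expect to be the main obstacle — is to derive a matching constraint of the form $2l \ge e+2$. The heuristic is that $P$, of degree exactly $e$, is rigidly constrained by its left-hand presentation: Rolle's theorem applied repeatedly between consecutive real nodes $a_i$ forces $P$ and its iterated derivatives to have many real zeros, a count growing linearly with $k$. A representation of $P$ as a sum of $l$ mixed affine powers, on the other hand, caps by a Birkhoff-type real-zero count the root multiplicities achievable in each derivative. Balancing the two estimates, while making sure that mixed exponents $e_j<d$ do not weaken the bound below $2l \ge e+2$, is the delicate part; this is where the real-variable structure really bites.

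Finally, combining the two inequalities: from the first, $e \ge d+1-k$; from the second, $e \le 2l-2$; hence $k + 2l \ge d+3$, and minimizing $k+l$ subject to this linear constraint with $k, l \ge 0$ produces $k + l \ge \lceil(d+3)/2\rceil$, with the ceiling reflecting the parity of $d$ in the combinatorial optimization. The crux of the whole argument is the second bound: over $\CC$ the analogous inequality is false, which makes clear that the proof cannot proceed by routine coefficient matching and must invoke a genuinely real tool (Rolle / Birkhoff interpolation).
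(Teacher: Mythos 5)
Your first bound is correct and genuinely useful: differentiating $e+1$ times (where $e := \max_j e_j < d$) kills the right-hand side, and the linear independence of $(x-a_1)^{d-e-1},\dots,(x-a_k)^{d-e-1}$ for distinct $a_i$ forces $k \ge d-e+1$, i.e.\ $k+e \ge d+1$. (This is pure linear algebra and holds over any field; you correctly note that.) The combinatorial conclusion at the end is also fine \emph{given} the inequality $k+2l\ge d+3$. The problem is the middle.

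You never actually prove the second bound $2l \ge e+2$, and in fact it is false. Take $d$ even and
\[
(x+1)^d - (x-1)^d = \sum_{\substack{j\ \text{odd}\\ 1\le j\le d}} 2\binom{d}{j}x^{d-j},
\]
so $k=2$, $l=d/2$, $e=d-1$. Then $2l = d < d+1 = e+2$, and likewise $k+2l = d+2 < d+3$, so the intermediate inequality you feed into the final optimization also fails. (The theorem itself is saved because $k+l = 2 + d/2 = \lceil (d+3)/2\rceil$ in this case, i.e.\ the bound is tight, but your route to it breaks.) The underlying issue is that any correct real-analytic bound of this type must couple $l$ with $k$ (the Rolle / Birkhoff count you gesture at depends on how many distinct real nodes $a_i$ are available on the left, not just on $\deg P = e$); a bound on $l$ in terms of $e$ alone cannot capture that interaction, and the example above shows exactly where the slack is lost. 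Note also that the paper does not contain a proof of this statement at all — it is imported verbatim as Theorem~13 of~\cite{GK}, where the actual argument is a Birkhoff interpolation (P\'olya condition) argument; you would need to carry out that counting honestly, with the $k$-dependence built in, rather than assert the shape $2l\ge e+2$.
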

Theorem~\ref{lbtool} will be our main tool in Section~\ref{realsec}.
As a consequence of this result,
we first give a sufficient condition for a polynomial 
to have a unique optimal expression in the model $\AffPow_\RR$.
\begin{corollary} \label{realuniquenessExpression}
	Let $f \in \RR[x]$ be a polynomial of the form:
	\begin{equation} \label{uniquereal}
		f = \sum_{i=1}^s \alpha_i (x-a_i)^{e_i}
	\end{equation}
	with $\alpha_i \not= 0$.
	For every $e \in \NN$ we denote by $n_e$ the number of exponents smaller than $e$, i.e., $n_e = \#\{i : e_i \leq e\}$.\\
	If $2n_e \leq \lceil (e+3)/2 \rceil$ for all $e \in \NN$, then $\AffPow_\RR(f) = s$.
	Moreover, if $2n_e <  \lceil (e+3)/2 \rceil$ for all $e$ 
then (\ref{uniquereal}) is the unique optimal expression for $f$.
\end{corollary}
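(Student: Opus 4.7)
The plan is to peel off the highest-exponent layer of any putative second decomposition using Theorem~\ref{lbtool}, and then iterate. Suppose $f$ admits a second decomposition $f = \sum_{j=1}^{s'} \beta_j (x-b_j)^{f_j}$ with $\beta_j \neq 0$, where within each decomposition no triple $(a,e)$ is repeated (combining if necessary). Let $d$ be the largest exponent appearing in either decomposition, and separate the degree-$d$ terms to obtain
$$
L \;:=\; \sum_{i:\,e_i=d}\alpha_i(x-a_i)^d - \sum_{j:\,f_j=d}\beta_j(x-b_j)^d \;=\; \sum_{j:\,f_j<d}\beta_j(x-b_j)^{f_j} - \sum_{i:\,e_i<d}\alpha_i(x-a_i)^{e_i} \;=:\; R.
$$
Merging common shifts in $L$ yields $L = \sum_{k=1}^K \gamma_k (x-c_k)^d$ with the $c_k$ pairwise distinct and the $\gamma_k$ nonzero. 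Set $K_1 = \#\{i:e_i=d\}$ and $K_2 = \#\{j:f_j=d\}$, so $K \leq K_1+K_2$ and $R$ has at most $(s-K_1)+(s'-K_2)$ summands, whence $K+\#R \leq s+s'$.

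If $K \geq 1$, Theorem~\ref{lbtool} applied to the identity $L = R$ gives $K + \#R \geq \lceil(d+3)/2\rceil$. Since $n_d = s$, the hypothesis produces $\lceil(d+3)/2\rceil \geq 2s$, and the chain $s+s' \geq K + \#R \geq \lceil(d+3)/2\rceil \geq 2s$ forces $s' \geq s$, proving the first claim. Under the strict hypothesis we would instead obtain $s + s' > 2s$, which is incompatible with $s'=s$, so for the uniqueness claim we are forced into the case $K = 0$. If $K = 0$, the degree-$d$ parts of the two decompositions cancel; a Vandermonde-style argument shows that the polynomials $(x-c)^d$ are linearly independent when indexed by any distinct $c$'s with $|\{c\}| \leq d+1$, and the hypothesis together with $s'=s$ makes $K_1+K_2 \leq 2s \leq \lceil(d+3)/2\rceil \leq d+1$ (for $d\geq 1$). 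Linear independence then forces $\{a_i:e_i=d\} = \{b_j:f_j=d\}$ and matching of the coefficients. Subtracting this common degree-$d$ contribution yields $f'$ admitting two decompositions of sizes $s-K_1$ and $s'-K_1$, and strong induction on $s$ concludes uniqueness, with the base case $s=0$ forcing $f=0$ and hence $s'=0$.

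The main technical obstacle is the descent in the case $K=0$: one must verify that the (strict) hypothesis $2 n_e \leq \lceil(e+3)/2\rceil$ is inherited by the truncated first decomposition of $f'$. For $e < d$ the count $n'_e$ equals the original $n_e$, so the inequality is immediate; for $e \geq d$ the truncated count is $s - K_1$, and the bound follows from $2(s - K_1) \leq 2s \leq \lceil(d+3)/2\rceil \leq \lceil(e+3)/2\rceil$. One should also handle the degenerate range $d=0$ separately (where the $(x-c)^0$ are not linearly independent), but there the hypothesis forces $s \leq 1$ and the statement becomes trivial: a nonzero constant cannot be expressed with zero terms. A minor housekeeping point is the initial combining step for the alleged second decomposition, which is harmless since it can only decrease $s'$ and we are trying to prove a lower bound on it.
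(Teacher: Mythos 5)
Your approach is essentially the paper's: isolate the highest-exponent ($=d$) layer of both decompositions, apply Theorem~\ref{lbtool}, and descend. You are in fact more careful than the paper in the case where the top layers cancel ($K=0$): the paper jumps from ``$k=0$'' to ``the highest degree terms are the same,'' but $K=0$ only says the degree-$d$ \emph{parts} agree as polynomials, and deducing term-by-term matching needs the linear independence of $\{(x-c)^d : c\in C\}$ for $|C|\le d+1$, which you supply together with the cardinality check $K_1+K_2\le 2s\le\lceil(d+3)/2\rceil\le d+1$ and the verification that the hypothesis is inherited by the truncation of $f'$. The $d=0$ edge case you dispose of is indeed the only place that check could fail.

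However, the argument for the first claim ($\AffPow_\RR(f)=s$) as written is incomplete: you conclude ``proving the first claim'' from the case $K\ge 1$ alone, but $K=0$ is not vacuous there --- a putative decomposition with $s'<s$ could agree with the given one on all degree-$d$ terms, and then Theorem~\ref{lbtool} (which needs at least one nonzero term on the left) gives nothing. When $K=0$ you must run exactly the Vandermonde-plus-descent you perform for the uniqueness claim: match the degree-$d$ terms, subtract them, check the inherited hypothesis, and induct on $s$ to conclude $s'-K_1\ge s-K_1$ and hence $s'\ge s$. All the ingredients are present in your write-up (you do this descent verbatim two sentences later), so the fix is to route the first claim through the same induction in the $K=0$ branch rather than declaring it done after $K\ge 1$.
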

\begin{proof}
	Suppose that $f$ can be written in another way
	\begin{equation} \label{uniquereal2}
		f = \sum_{j=1}^p \beta_j (x - b_j)^{f_j}
	\end{equation}
	with $p \leq s$.
	Set $d = \max\left( (e_i)_{1\leq i \leq s} \cup (f_j)_{1\leq j \leq p} \right)$ and denote by $s'$ (respectively, $p'$) the index such that $d = e_1 = \dots = e_{s'} > e_{s' + 1} \geq \dots \geq e_s$ (respectively, $d = f_1 = \dots = f_{p'} > f_{p' + 1} \geq \dots \geq f_p$). Note that one of the two indices $s',p'$ 
will be equal to 0 if the exponent $d$ appears only in one of the two expressions~(\ref{uniquereal}) and~(\ref{uniquereal2}).

	Combining equations (\ref{uniquereal}) and (\ref{uniquereal2}), we obtain the following equality:
	\[
		\sum_{i=1}^{s'} \alpha_i (x-a_i)^d - \sum_{j=1}^{p'} \beta_j (x-b_j)^d
		=
		-\sum_{i=s'+1}^{s} \alpha_i (x-a_i)^{e_i} + \sum_{j=p'+1}^{p} \beta_j (x-b_j)^{f_j}
	\]
	We can rewrite this as
	\[
		\sum_{i=1}^{k} \alpha'_i (x-a'_i)^d
		=
		\sum_{i=1}^{l} \beta'_i (x-b'_i)^{e'_i}
	\]
	with $\alpha_i' \not= 0$, $k \leq s' + p'$ and $l \leq (s - s') + (p - p')$. \\
To prove the first assertion, let us assume that 
$2n_e \leq \lceil (e+3)/2 \rceil$ for all $e$.
	Assume also for contradiction that $p < s$ and $k > 0$.
	By Theorem \ref{lbtool}, we must have $k + l \geq \lceil (d+3)/2 \rceil$.
	The upper bounds on $k$ and $l$ imply $2s > s + p \geq k+l \geq \lceil (d+3)/2 \rceil$.
	However we have from our assumption that $2s = 2n_d \leq 2\lceil (d+3)/2 \rceil$, which contradicts the previous inequality.
	This shows that $p < s \Rightarrow k = 0$, i.e., if $p < s$ then the highest degree terms are the same.
Continuing by induction, we find that all the terms in the two expressions are
equal. In particular we would have $p = s$, a contradiction.
	This shows that $p \geq s$, i.e., that $\AffPow_\RR(f) = s$.
	
To prove the second assertion, 
let us now assume further that $2n_e < \lceil (e+3)/2 \rceil$ for all $e$.
	Assume also that $p = s$.
	By Theorem \ref{lbtool}, either $k = 0$
or $k+l \geq \lceil (d+3)/2 \rceil$.
In the second case,
the upper bounds on $k$ and $l$ imply that $2s = s + p \geq k+l \geq \lceil (d+3)/2 \rceil$.
This is in contradiction with the assumption that 
$2n_d < \lceil (d+3)/2 \rceil$.
We conclude that that $k$ must be equal to 0, i.e., 
the highest degree terms are the same.
Continuing by induction, we obtain that all the terms of the two decompositions are equal, thus showing that~(\ref{uniquereal})
is the  unique optimal expression for $f$ in this model.
\end{proof}

 Let $\mathbb K$ be a field extension of $\FF$. Theorem 1 in \cite{LS} shows that whenever the value $\Sparsest_{\mathbb K}(f)$ is "small", then it is equal to
$\Sparsest_\FF(f)$; more precisely, 
if $\Sparsest_{\mathbb K}(f) \leq (d+1)/2$ then $\Sparsest_{\mathbb K}(f) = \Sparsest_\FF(f)$. This is no longer the case for the Affine Power model as the following
example shows.
 
\begin{example}\label{realvscomplex}
 	For every $d \in \NN$, we consider the polynomial 
 	\begin{equation}\label{complexandreal}
 		f_d := \sum_{j \equiv 3 \, ({\rm mod}\, 4) \atop 0 \leq j \leq d} 4 \binom{d}{j} x^{d-j} \in \RR[x].
 	\end{equation}
 	We can express $f_d$ as $f_d = (x+1)^d - (x-1)^d + i (x+i)^d - i (x-i)^d$, which proves that $\AffPow_{\CC}(f_d) \leq 4$. 
Moreover, in expression~(\ref{complexandreal}) we have $n_e \leq \lceil (e+1)/4 \rceil$ for all $e \in \NN$. 
Since $2\lceil (e+1)/4 \rceil \leq \lceil (e+3)/2 \rceil$,
it follows from Corollary~\ref{realuniquenessExpression} 
that this expression for $f_d$ is optimal over the reals, i.e., 
 	$\AffPow_{\RR}(f_d) = \lfloor (d+1)/4 \rfloor$.
\end{example}

As a consequence of Theorem \ref{lbtool} we can easily derive the following 
result.
\begin{corollary}\label{sparsepluswaringReal}
	Let $f \in \RR[x]$ be a polynomial of degree $d$.
	Either $f = \alpha (x-a)^d$ for some $\alpha, a \in \RR$ (and $ \Waring_\RR(f) = \Sparsest_\RR(f) = 1$),
	or the following holds:
	\[
		\Waring_\RR(f) + \Sparsest_\RR(f)  \geq \frac{d+3}{2}
	\]
\end{corollary}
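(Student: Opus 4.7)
The plan is to apply Theorem~\ref{lbtool} to the polynomial identity obtained by equating the optimal Waring and sparsest shift decompositions of $f$. Suppose $f \in \RR[x]$ has degree $d$ and is not of the form $\alpha(x-a)^d$, so that $k := \Waring_\RR(f) \geq 2$. Fix minimal decompositions
\[
f = \sum_{i=1}^k \alpha_i (x-a_i)^d = \sum_{j=1}^l \beta_j (x-b)^{e_j},
\]
where the $a_i$ are distinct, all $\alpha_i, \beta_j$ are nonzero, the $e_j$ are pairwise distinct, and $l = \Sparsest_\RR(f)$. The goal is to show $k + l \geq \lceil (d+3)/2 \rceil$.

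The argument splits according to whether some exponent $e_j$ equals $d$. If $e_j < d$ for every $j$, then the identity $\sum_i \alpha_i(x-a_i)^d = \sum_j \beta_j(x-b)^{e_j}$ already satisfies the hypotheses of Theorem~\ref{lbtool} verbatim, immediately yielding $k + l \geq \lceil (d+3)/2 \rceil$.

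Otherwise, because the $e_j$ are distinct, exactly one of them (say $e_{j_0}$) equals $d$, and I would move that term to the left-hand side:
\[
\sum_{i=1}^k \alpha_i (x-a_i)^d - \beta_{j_0} (x-b)^d = \sum_{j \neq j_0} \beta_j (x-b)^{e_j}.
\]
After collecting terms with equal bases, the left-hand side can be written as $\sum_{m=1}^{k'} \gamma_m (x-c_m)^d$ with distinct $c_m \in \{a_1,\ldots,a_k,b\}$ and nonzero $\gamma_m$, so that $k' \leq k+1$. If this expression were identically zero, then $f = \beta_{j_0}(x-b)^d$ would be of the excluded form $\alpha(x-a)^d$; hence $k' \geq 1$. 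Theorem~\ref{lbtool} then applies and gives $k' + (l-1) \geq \lceil (d+3)/2 \rceil$, which combined with $k' \leq k+1$ yields $k + l \geq \lceil (d+3)/2 \rceil$.

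The only real subtlety lies in checking that the left-hand side is nonzero in the second case; this is the unique point where the trivial case $f = \alpha(x-a)^d$ must be explicitly excluded. Everything else is a direct application of Theorem~\ref{lbtool} together with the minimality of the two fixed decompositions (which ensures distinct shifts and nonzero coefficients on each side).
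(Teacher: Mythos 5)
Your proof is correct and takes essentially the same route as the paper's: both move the degree-$d$ term $\beta_{j_0}(x-b)^d$ from the sparsest-shift side to the Waring side and then invoke Theorem~\ref{lbtool}. The only cosmetic difference is that you absorb the paper's two subcases ($b$ new vs.\ $b$ coinciding with some $a_i$) into the single bound $k' \le k+1$, and you include a vacuous first case (since $\deg f = d$ forces some $e_j = d$, as the paper notes by writing $e_l = d$ outright).
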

\begin{proof}
	We set $k = \Waring_\RR(f)$  and $l = \Sparsest_\RR(f)$ and assume that $l \geq 2$. We write $f$ in two different ways:
		$$ f = \sum_{i=1}^{k}  \alpha_i (x - a_i)^d=  \sum_{j=1}^{l} \beta_i (x - a)^{e_i},$$
	where the $a_j \in \RR$ are all distinct, 
and $e_1 < \cdots < e_{l} = d$.
Let us  move the term $\beta_l (x-a)^d$ to the left
	hand side of the equation. We then have two cases to consider:
	\begin{itemize}
		\item if $a \neq a_i$ for all $i$, we have $k+1$ terms on the left hand side of the equation and $l-1$ terms on the right hand side.
Theorem \ref{lbtool} shows that 
$(k+1) + (l-1) \geq (d+3)/2$.
		
\item If $a = a_i$ for some $i$, we have 
$k$ or
$k-1$ terms on the left hand side of the equation  
and $l-1$ terms on the right hand side.
By Theorem \ref{lbtool}, 
$k + (l-1) \geq (d+3)/2$.
	\end{itemize}
\end{proof}

\begin{remark} Consider the degree $d \geq 2$ polynomial 
$$f := (x+1)^d  + (x-1)^d = \sum_{i {\text \ even} \atop 0 \leq i \leq d} 2 \binom{d}{i} x^{d-i}.$$
We observe that $\Waring_\RR(f) = 2$ and $\Sparsest_\RR(f) \leq \lceil (d+1)/2 \rceil$. Hence, the inequality in Corollary \ref{sparsepluswaringReal} is optimal up to one
unit. 
\end{remark}

A similar proofs to that of Corollary \ref{sparsepluswaringReal} yield the following result:

\begin{corollary}\label{affpowpluswaringReal}
	Let $f \in \RR[x]$ be a polynomial of degree $d$.
	Either $\AffPow_\RR(f) = \Waring_\RR(f)$  
	or the following inequality holds:
	\[
		\Waring_\RR(f) + \AffPow_\RR(f) \geq \frac{d+3}{2}
	\]
\end{corollary}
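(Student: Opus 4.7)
The plan is to follow the proof of Corollary~\ref{sparsepluswaringReal} closely. Since Waring decomposition is a special case of Model~\ref{affpowmodel}, we always have $\AffPow_\RR(f) \leq \Waring_\RR(f)$, so it suffices to assume this inequality is strict and derive $\Waring_\RR(f) + \AffPow_\RR(f) \geq (d+3)/2$. Set $k = \Waring_\RR(f)$, $l = \AffPow_\RR(f)$ with $l < k$, and fix optimal decompositions
$$f = \sum_{i=1}^{k} \alpha_i (x - a_i)^d = \sum_{j=1}^{l} \beta_j (x - b_j)^{e_j},$$
with the $a_i$ pairwise distinct and all $\alpha_i,\beta_j \neq 0$; by optimality of the AffPow expression we may further assume that the pairs $(b_j,e_j)$ are pairwise distinct. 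Let $e = \max_j e_j$; since $\deg f = d$, we have $e \geq d$.

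First assume $e = d$ and put $T = \{j : e_j = d\}$. Subtracting the two expressions for $f$ and moving the exponent-$d$ AffPow terms to the left gives
$$\sum_{i=1}^{k} \alpha_i (x - a_i)^d - \sum_{j \in T} \beta_j (x - b_j)^d = -\sum_{j \notin T} \beta_j (x - b_j)^{e_j},$$
where every exponent on the right is strictly less than $d$. Collapsing repeated nodes on the left produces an expression $\sum_{i=1}^{k'} \gamma_i (x-c_i)^d$ with distinct $c_i$ and $k' \leq k + |T|$. If some $\gamma_i$ is nonzero, Theorem~\ref{lbtool} gives $k' + (l - |T|) \geq (d+3)/2$, and combined with $k' \leq k + |T|$ this yields $k + l \geq (d+3)/2$. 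Otherwise the right side also vanishes, so $f = \sum_{j \in T} \beta_j (x-b_j)^d$ is a Waring decomposition of size $|T| \leq l < k$, contradicting the definition of $k$.

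The case $e > d$ is handled analogously but one applies Theorem~\ref{lbtool} at exponent $e$. Put $T = \{j : e_j = e\}$ and isolate $\sum_{j \in T} \beta_j (x - b_j)^e$ on one side, so that the other side consists of the $k$ Waring terms and the $l - |T|$ remaining AffPow terms, all of exponent strictly less than $e$. Since the $b_j$ for $j \in T$ are pairwise distinct, if $\sum_{j \in T} \beta_j (x-b_j)^e$ is not identically zero Theorem~\ref{lbtool} (with $e$ in place of $d$) gives $|T| + k + (l - |T|) \geq (e+3)/2$, hence $k + l \geq (e+3)/2 \geq (d+3)/2$. If instead that polynomial vanishes, then removing the $|T| \geq 1$ corresponding terms from the AffPow decomposition produces a smaller one, contradicting the minimality of $l$.

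The main subtlety, as in the proof of Corollary~\ref{sparsepluswaringReal}, is the bookkeeping around possible coincidences between the Waring shifts $a_i$ and the AffPow shifts $b_j$, which may cause some coefficients to cancel after merging and reduce the effective number of top-degree terms on the left. In each such degenerate subcase, however, the collapse forces a Waring or AffPow decomposition smaller than $k$ or $l$, contradicting optimality; so Theorem~\ref{lbtool} applies in the remaining case and yields the stated bound.
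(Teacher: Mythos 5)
Your proof is correct, and it fills in what the paper only gestures at: the paper states that Corollary~\ref{affpowpluswaringReal} follows by an argument ``similar to'' the proof of Corollary~\ref{sparsepluswaringReal}, without writing it out. You carry out that program faithfully but also more carefully.

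The one genuinely new case you address is $e = \max_j e_j > d$. In the Sparsest Shift setting all terms share a shift $a$, so the top exponent in the decomposition is forced to equal $d = \deg f$; the paper's proof of Corollary~\ref{sparsepluswaringReal} exploits this and never needs to consider higher exponents. In the $\AffPow$ model the top exponent can exceed $d$ (indeed Corollary~\ref{upperbounddegreeterms} gives $e \leq d + 2s - 2$ over $\RR$), so your Case~$e>d$ is a real gap that a literal ``replace $\Sparsest$ by $\AffPow$'' rewrite of the paper's proof would leave open. Your handling of it — isolating the degree-$e$ block on one side and applying Theorem~\ref{lbtool} at exponent $e$ — is exactly right and yields $k + l \geq (e+3)/2 \geq (d+3)/2$.

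Two small remarks. First, your use of $\AffPow_\RR(f) \leq \Waring_\RR(f)$ to reduce ``$\AffPow \neq \Waring$'' to ``$l < k$'' is a clean simplification compared to the two-case shift analysis the paper uses in Corollary~\ref{sparsepluswaringReal}. Second, in your Case~$e>d$ the subcase distinction on whether $\sum_{j\in T}\beta_j(x-b_j)^e$ vanishes is harmless but unnecessary: Theorem~\ref{lbtool} only requires the coefficients $\beta_j$ ($j \in T$) to be not all zero (which holds since they are all nonzero) and the shifts to be distinct; it does not require the left-hand side to be a nonzero polynomial. So the theorem gives $k+l \geq (e+3)/2$ in both subcases directly, and the vanishing subcase need not be split off.
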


\subsection{Fields of characteristic zero} \label{charzero}

We now switch from the real field to an arbitrary field $\FF$ 
of characteristic zero.
By definition we have $\AffPow_{\FF}(f) \leq \Waring_{\FF}(f)$ and $\AffPow_\FF(f) \leq \Sparsest_\FF(f)$ for any polynomial $f \in {\FF}[X]$. 
We show in Example~\ref{affpowsmaller} that there are polynomials $f$
such that $\AffPow_\FF(f)$ is much smaller than both 
$\Waring_\FF(f)$ and  $\Sparsest_\FF(f)$.

We first make some basic observations about Sparsest Shift. 
For any $a \in \FF$,  the polynomials $\{(x-a)^i\, \vert \, i \in \mathbb{N}\}$ are linearly independent, hence $f$ can be uniquely 
expressed as $f = \sum_{i = 0}^d \alpha_i (x-a)^i$ 
where $\alpha_i = f^{(i)}(a)/i!$.
Consider such a decomposition for $f$, and let $s$ be the number of nonzero terms. It follows that the $d+1-s$ derivatives $f^{(i)}$ with $\alpha_i=0$ admit
$a$ as a common root.
\begin{example} \label{affpowsmaller}
 	For every $d \in \mathbb{N}$, we consider the polynomial $f_d := (x+1)^d - d x^{d-1} \in \mathbb{C}[x]$.
 	It is easy to check that $\AffPow(f_d) = 2$ for all $d \geq 2$. By \cite[Proposition 3.1]{BCG} we have that if $x^{d-1} = \sum_{i = 1}^s \alpha_i (x-a_i)^d$
 	with $\alpha_i, a_i \in \mathbb{C}$, then $s \geq d$; and thus we get that $\Waring_{\mathbb{C}}(f_d) \geq d-1$.
 	
One can easily check that for every $i \in \{0,\ldots,d-1\}$, the polynomials $f_d^{(i)} = \frac{d!}{(d-i)!} f_{d-i}$ and $f_d^{(i+1)} = \frac{d!}{(d-i-1)!} f_{d-i-1}$ do not share a common root. Consider a decomposition of $f$ in the sparsest shift model. By the above observations, for any pair of consecutive coefficients in this decomposition at least one of the 2 coefficients is nonzero.
This implies 
that $\Sparsest_\mathbb{C}(f) \geq \lceil (d+1)/2 \rceil$.
\end{example}

In the remainder of Section~\ref{charzero} we give (in Proposition~\ref{sparsetimeswaringChar0}) a weaker version of Corollary \ref{sparsepluswaringReal} that works for any field of characteristic zero. Moreover, for $\FF = \CC$ we  
provide a family of polynomials showing that the bound from Proposition~\ref{sparsetimeswaringChar0}  is sharp. 

We will use Jordan's lemma \cite{GrY} 
(see \cite[Lemma 1.35]{IaKa} for a recent reference), 
which can be restated as follows.

\begin{lemma}\label{JordanLemma}Let $d \in \ZZ^+$, $e_1,\ldots,e_t \in \{1,\ldots,d\}$, and let $a_1,\ldots,a_t \in \FF$ be distinct constants. If $\,\sum_{i = 1}^t (d+1-e_i) \leq d+1$, then the set of
polynomials 
$$\{ (x-a_i)^{e} \, \vert \, 1 \leq i \leq t,\, e_i \leq e \leq d\}$$ is linearly independent.
\end{lemma}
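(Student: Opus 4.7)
The plan is to prove Jordan's lemma by induction on $t$. The base case $t=1$ is immediate: the polynomials $(x-a_1)^e$ for $e\in\{e_1,\ldots,d\}$ have pairwise distinct degrees. For the inductive step $t\geq 2$, introduce $S_i:=\mathrm{span}\{(x-a_i)^e : e_i\leq e\leq d\}\subseteq V:=\FF[x]_{\leq d}$, a subspace of dimension $n_i:=d+1-e_i$. By the inductive hypothesis applied to $a_1,\ldots,a_{t-1}$ (whose parameters still satisfy $\sum_{i<t} n_i\leq d+1$), the subspaces $S_1,\ldots,S_{t-1}$ are in direct sum. It then suffices to prove that $(S_1+\cdots+S_{t-1})\cap S_t=\{0\}$, whereupon the $t$-node statement follows.

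The first observation I would use is that $\sum_i n_i\leq d+1$ rewrites as $e_i+e_j\geq d+1$ for all $i\neq j$, so any polynomial of degree at most $d$ divisible by the coprime product $(x-a_i)^{e_i}(x-a_j)^{e_j}$ is identically zero. In particular, the reduction map $\rho:V\to R:=\FF[x]/(x-a_t)^{e_t}$ is injective on each $S_i$ with $i<t$, and sends it to an $n_i$-dimensional subspace of $R$ (whose dimension $e_t$ satisfies $e_t\geq\sum_{i<t} n_i$). To establish the intersection claim it is enough to show that the images $\rho(S_1),\ldots,\rho(S_{t-1})$ are in direct sum inside $R$. Setting $y=x-a_t$ and $c_i:=a_i-a_t\neq 0$, this becomes: if $\sum_{i<t}(y-c_i)^{e_i}\tilde q_i(y)\equiv 0\pmod{y^{e_t}}$ with $\deg\tilde q_i<n_i$, then every $\tilde q_i$ is zero.

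I expect this last step to be the main obstacle. Indeed, lifting the congruence to the polynomial equation $\sum_{i<t}(y-c_i)^{e_i}\tilde q_i(y)=y^{e_t}r(y)$ with $\deg r\leq n_t-1$, and introducing $c_t:=0$, $\tilde q_t:=-r$, recasts it as Jordan's lemma for $t$ nodes $c_1,\ldots,c_t$ in the new variable, so induction on $t$ alone is circular. The classical resolution is a confluent-Vandermonde computation: the required linear independence amounts to non-vanishing of an explicit determinant in the $c_i$, namely that of the $(\sum_{i<t} n_i)\times(\sum_{i<t} n_i)$ matrix whose columns are the coordinate vectors of $y^m(y-c_i)^{e_i}\bmod y^{e_t}$ (for $i<t$, $0\leq m<n_i$) in the basis $(1,y,\ldots,y^{e_t-1})$; one computes its degree and leading monomial in the $c_i$ and checks non-vanishing from the fact that the $c_i$ are pairwise distinct and nonzero. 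Equivalently one can invoke Hermite's interpolation theorem, which packages essentially the same content.
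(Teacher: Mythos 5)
The paper does not prove Lemma~\ref{JordanLemma}: it cites it from \cite{GrY} (see also \cite[Lemma~1.35]{IaKa}) and uses it as a black box, so there is no in-paper proof to compare yours against. Your reductions are all sound: $\sum_i(d+1-e_i)\leq d+1$ does give $e_i+e_j\geq d+1$ for $i\neq j$; hence the reduction map $\rho$ modulo $(x-a_t)^{e_t}$ is injective on each $S_i$ with $i<t$; and the direct-sum claim for the images $\rho(S_1),\ldots,\rho(S_{t-1})$ inside $\FF[y]/(y^{e_t})$ does imply the $t$-node statement, granted the inductive hypothesis on $S_1,\ldots,S_{t-1}$.

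However, as you yourself note, that final direct-sum claim is Jordan's lemma in disguise (re-lifting the congruence yields a $t$-node instance again), so the reduction does not actually decrease the difficulty, and the entire content of the lemma is pushed into the last sentence, which asserts rather than proves the non-vanishing of a confluent-Vandermonde-type determinant. That step is precisely where the lemma lives: since $S_i=\{p:\ p^{(k)}(a_i)=0,\ 0\leq k<e_i\}$, the direct-sum statement is exactly Hermite interpolability, and neither "compute the leading monomial of the confluent Vandermonde" nor "invoke Hermite's theorem" is a one-line appeal that you can leave unjustified while everything else is spelled out. There is also a small technical imprecision: the matrix whose columns are the coordinate vectors of $y^m(y-c_i)^{e_i}\bmod y^{e_t}$ for $i<t$, $0\leq m<n_i$, is $e_t\times\sum_{i<t}n_i$, which is rectangular in general (only $\sum_{i<t}n_i\leq e_t$ is guaranteed), so "its determinant" must be replaced by a full-column-rank argument or the choice of an appropriate square submatrix. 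In short, this is an honest and correct reduction plan with the decisive step missing; to turn it into a proof you would need to actually carry out the generalized Vandermonde evaluation (e.g., exhibit the factorization into powers of $c_i-c_j$ and of the nonzero $c_i$) or give a proof of Hermite interpolation.
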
	

\begin{proposition}\label{sparsetimeswaringChar0}
	Let $f \in \FF[x]$ be a polynomial of degree $d$.
	Either $f = \alpha (x-a)^d$ for some $\alpha, a \in \FF$ (and $\Waring_\FF(f) = \Sparsest_\FF(f) = 1$),
	or the following holds:
	\[
		\Waring_\FF(f) \cdot  \Sparsest_\FF(f) \geq d + 1
	\]
\end{proposition}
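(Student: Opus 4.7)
The plan is to compare the two decompositions simultaneously and extract, from the ``missing'' Taylor coefficients of the sparse form, a strong constraint on a generalized power sum built from the Waring nodes. Assume $f$ is not of the form $\alpha(x-a)^d$, so $k := \Waring_\FF(f) \geq 2$ and $l := \Sparsest_\FF(f) \geq 2$. Pick optimal decompositions
$$f = \sum_{i=1}^k \alpha_i(x-a_i)^d = \sum_{j=1}^l \beta_j(x-a)^{e_j}$$
with $\alpha_i,\beta_j \neq 0$, the $a_i$ pairwise distinct, and $0 \leq e_1 < \cdots < e_l$; comparing leading coefficients forces $e_l = d$. Translating $x \mapsto x-a$, I assume $a = 0$, so $f(x) = \sum_j \beta_j x^{e_j}$ is $l$-sparse in the monomial basis while still equaling $\sum_i \alpha_i(x-a_i)^d$. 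Let $K \in \{k-1, k\}$ denote the number of indices with $a_i \neq 0$; note $K \geq 1$.

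Expanding the Waring form with the binomial theorem yields $[x^m] f = (-1)^{d-m}\binom{d}{m}\, s_{d-m}$ for each $m \in \{0, \ldots, d-1\}$, where $s_n := \sum_{a_i \neq 0} \alpha_i a_i^n$ (the $a_i = 0$ contribution drops whenever $n \geq 1$). The sparse form forces $[x^m] f = 0$ on $\{0,\ldots,d-1\}\setminus\{e_1,\ldots,e_{l-1}\}$, so $s_n = 0$ on $\{1,\ldots,d\}\setminus E$, where $E := \{d-e_j : 1 \leq j \leq l-1\} \subseteq \{1,\ldots,d\}$ has cardinality $l-1$. A standard Vandermonde argument rules out $K$ consecutive zeros of $(s_n)$: such a block would give a system $B\cdot D\cdot(\alpha_i)_{a_i\neq 0}^{\,T} = 0$, with $B$ the Vandermonde in the $K$ distinct nonzero values $a_i$ and $D$ the diagonal of the $a_i^n$; both factors are invertible, so the relevant $\alpha_i$ all vanish, contradicting $k \geq 2$. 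Adding sentinels $0$ and $d+1$, the elements of $E$ partition $\{1,\ldots,d\}$ into exactly $l$ (possibly empty) runs of forced zeros, with total length $d-l+1$ and each run of length at most $K-1$. Hence $d-l+1 \leq l(K-1)$, i.e.\ $lK \geq d+1$, and since $K \leq k$ I conclude $kl \geq d+1$.

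The main subtlety I expect is the possibility that the sparsest shift $a$ coincides with one of the Waring nodes $a_i$: that term degenerates to $\alpha_i x^d$ after translation, contributing only to $[x^d]f$ and dropping out of every other coefficient equation, so the exponential sum relevant to the Vandermonde argument has effective rank $K = k-1$ rather than $k$. Rather than bifurcating the proof, I handle both situations uniformly by carrying $K$ throughout and invoking $K \leq k$ only at the very last step. A secondary technical point is verifying $e_l = d$, which keeps $n = 0$ out of the forced-zero set (where the $a_i = 0$ term would re-enter and the Vandermonde factorization would break down); this follows at once from $\deg f = d$ together with distinctness of the $e_j$.
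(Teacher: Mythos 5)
Your proof is correct and uses a genuinely different argument from the paper. The paper also compares the two optimal decompositions but proceeds by differentiation: if some gap $e_{t+1}-e_t$ were $\geq k+1$, it differentiates $f$ exactly $e_t+1$ times to kill the low-degree sparse-shift terms, then invokes Jordan's lemma on the linear independence of shifted-power families to derive a contradiction; the resulting bound $e_{i+1}-e_i \leq k$ for all $i$ then telescopes to $d+1 \leq kl$. Your argument instead normalizes $a=0$, reads off from the vanishing Taylor coefficients that the power sum $s_n = \sum_{a_i \neq 0} \alpha_i a_i^n$ vanishes on all of $\{1,\ldots,d\}\setminus E$ with $|E|=l-1$, forbids $K$ consecutive zeros via an invertible Vandermonde-times-diagonal factorization, and counts runs using the sentinels $0$ and $d+1$. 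The two arguments extract the same combinatorial content (each gap between consecutive exponents in the sparse representation is at most $k$), but your version replaces Jordan's lemma with elementary Vandermonde invertibility and is fully self-contained. A small advantage of your presentation is the uniform tracking of $K\in\{k-1,k\}$, which absorbs the degenerate case $a\in\{a_1,\ldots,a_k\}$ without a case split; the paper's proof also goes through in that case, but does not flag it explicitly, and the student-style bookkeeping makes the robustness transparent.
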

\begin{proof}
	We set $k = \Waring_\FF(f)$ and $l = \Sparsest_\FF(f)$ and assume that $k,l \geq 2$. We express $f$ in two different ways:
		$$ f = \sum_{i=1}^{k} \alpha_i (x - a_i)^d =  \sum_{j=1}^{l} \beta_j (x - a)^{e_j},$$
	with $a_j \in \FF$ all distinct and $e_0 := -1 < e_1 < \cdots < e_l = d$. First, we are going to prove that $e_{i+1} - e_i \leq k$ for all 
	$i \in \{0,\ldots,l-1\}$. Indeed, if there exists $t \in \{0,\ldots,l-1\}$ such that $e_{t+1} - e_t \geq k + 1$, then we set $r := e_{t} + 1$ and differentiate
	the previous equality $r$ times to obtain 
		$$ f^{(r)} =  \sum_{i=1}^{k} \alpha_i \frac{d!}{(d-r)!} (x - a_i)^{d-r} = \sum_{j=t+1}^{l} \beta_j \frac{e_j !}{ (e_j - r)! } (x - a)^{e_j-r},$$
	where $e_j - r = e_j - e_t - 1 \geq e_{t+1} - e_t - 1 \geq k$ for all $j \in \{t+1,\ldots,l\}$. 
	From this equality, we deduce that the set 
		$$\mathcal B := \{(x-a_i)^{d-r} \, \vert \,1 \leq i \leq k\} \cup \{(x-a)^{e_i - r} \, \vert \, t+1 \leq i \leq l\}$$
	is linearly dependent. However,  
		$$\mathcal B \subseteq  \{(x-a_i)^{d-r} \, \vert \,1 \leq i \leq k\} \cup \{(x-a)^{i} \, \vert \, k \leq i \leq d-r\}.$$
The $d - r + 1$ polynomials on the right-hand side are of degree at most $d-r$, and they are linearly independent by Jordan's lemma.
This is a contradiction since $\mathcal B$ is linearly dependent. 
 We have proved that $e_{i+1} - e_i \leq k$ for all $i \in \{0,\ldots,l-1\}$, 
and we conclude that
		$$d + 1 = e_l - e_0 = \sum_{i = 1}^{l} (e_{i} - e_{i-1}) \leq k l.$$
\end{proof}

\begin{remark}\label{tightproduct}
	One can slightly modify \cite[Proposition 19]{GK} to obtain the following equality of complex polynomials of degree $d$:
	\[
		\sum_{j=1}^k (x + \xi^j)^d = \sum_{0 \leq i \leq d \atop i \equiv 0 \ (\textup{mod}\ k)} k \binom{d}{i} x^{d-i}
	\]
	where $k \in \NN$ and $\xi \in \CC$ is a $k$-th primitive root of unity. This equality shows that there are polynomials of degree $d$ 
such that $\Waring_\CC(g) \leq k$ and
	$\Sparsest_\CC(g) \leq  \lceil (d+1)/ k \rceil$ and, thus the bound from Proposition \ref{sparsetimeswaringChar0} is tight. \\

\end{remark}


 \subsection{Uniqueness results for sums of affine powers} \label{uniquesec}

The following result is an analogue of Theorem \ref{lbtool} for polynomials with coefficients over $\FF$, where $\FF$ is any field of characteristic zero.

\begin{proposition}\label{LowerBound}
	Consider a polynomial identity of the form:
	\[
		\sum_{i = 1}^k \alpha_i (x - a_i)^d = \sum_{i = 1}^l \beta_i (x - b_i)^{e_i}
	\]
	where the $a_i \in \FF$ are distinct, the $\alpha_i \in \FF$ are not all zero, $\beta_i, b_i \in \FF$ are arbitrary, and $e_i < d$ for every $i$.
	Then we must have $k+l > \sqrt{2(d+1)}$.
\end{proposition}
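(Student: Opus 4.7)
The plan is a multiplicity argument at a well-chosen shift, relying on the Wronskian bounds of Lemma~\ref{lemma:WrMult} and Proposition~\ref{factorWronskian}. Write $s=k+l$; the goal is to derive $s(s-1)\ge 2(d+1)$, which gives $s^2>2(d+1)$ and hence the claim.

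I first dispose of easy situations: if $d\le s-2$ then $s\ge d+2>\sqrt{2(d+1)}$ and there is nothing to do, so assume $d\ge s-1$. By discarding zero-coefficient terms and merging repeated $(b_j,e_j)$ pairs I may assume the identity is \emph{minimal}, i.e., the $s$ polynomials $(x-a_i)^d$ and $(x-b_j)^{e_j}$ are linearly dependent but any $s-1$ of them are linearly independent.

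The main case is that some $a_i$, call it $a_1$, differs from every $b_j$. Rearranging the identity to isolate the $\alpha_1$ term,
\[
\alpha_1(x-a_1)^d \;=\; \sum_{j=1}^{l} \beta_j(x-b_j)^{e_j} - \sum_{i=2}^{k} \alpha_i(x-a_i)^d,
\]
the left-hand side has multiplicity exactly $d$ at $a_1$. On the right-hand side I see $s-1$ linearly independent polynomials, each of the form $c(x-\xi)^m$ with $\xi\ne a_1$. Lemma~\ref{lemma:WrMult} then gives
\[
d \;\le\; (s-2) + \mult[a_1]{\Wr},
\]
with $\Wr$ the Wronskian of those $s-1$ summands. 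To bound $\mult[a_1]{\Wr}$ via Proposition~\ref{factorWronskian} I decompose each summand as $Q_\mu^{d_\mu}g_\mu$ using $(Q_\mu,g_\mu)=(x-\xi,c)$ when $m\ge s-2$, and $(Q_\mu,g_\mu)=(1,c(x-\xi)^m)$ otherwise. In both cases the contribution $\deg g_\mu + (n-1)\deg Q_\mu$ (with $n=s-1$) is at most $s-2$, and the product $Q:=\prod_\mu Q_\mu^{d_\mu-n+1}$ satisfies $Q(a_1)\ne 0$ because no nontrivial $Q_\mu$ vanishes at $a_1$. Therefore
\[
\mult[a_1]{\Wr} \;\le\; (s-1)(s-2) - \binom{s-1}{2} \;=\; \binom{s-1}{2},
\]
and combining bounds gives $d \le (s-2)+\binom{s-1}{2} = (s-2)(s+1)/2$, which rearranges to $s(s-1)\ge 2(d+1)$.

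The edge case is that every $a_i$ coincides with some $b_j$; then $k\le l$ and $s\ge 2k$, and one cannot isolate a single $(x-a_i)^d$ with pure multiplicity $d$ at any shift. This is the main obstacle in the proof. It can be handled either by (i) isolating a term at a shift $b_{j^*}\in B\setminus A$ (whenever such a shift exists), running the same Wronskian estimate to bound $e_{j^*}$, and combining with the lower bound $e_{\max}\ge d-k+1$ (obtained by differentiating the identity until its right-hand side vanishes and invoking the Vandermonde linear independence of the family $(x-a_i)^{d-m}$ for $d-m\ge k-1$), or by (ii) grouping together all summands at each shift $a_i$ into a single polynomial $P_i$ and running the Wronskian argument on the resulting family $P_1,\dots,P_k$ at the $k$ distinct shifts $a_1,\dots,a_k$. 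Either route recovers the quadratic inequality $s(s-1)\ge 2(d+1)$ and concludes the proof.
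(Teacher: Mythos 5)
Your main case is sound and uses a genuinely more direct route than the paper: you isolate $\alpha_1(x-a_1)^d$, apply Lemma~\ref{lemma:WrMult} to read off $d\le (s-2)+\mult[a_1]{\Wr}$ directly, and then bound $\mult[a_1]{\Wr}$ via Proposition~\ref{factorWronskian}. (The threshold should be $m\ge s-1$, not $m\ge s-2$, to meet the hypothesis $d_\mu\ge n$; this does not change the bound, since for $m\le s-2$ the trivial decomposition already gives contribution $\le s-2$.) The computation $d\le(s-2)+\binom{s-1}{2}=(s-2)(s+1)/2$ and the resulting $s(s-1)\ge 2(d+1)$ agree with the paper's estimate.

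The problem is the edge case. Your direct argument collapses precisely when $a_1\in\{b_1,\ldots,b_l\}$, because then one of the summands on the right is $\beta_j(x-a_1)^{e_j}$: its $Q_\mu=(x-a_1)$ vanishes at $a_1$, so the hypothesis $Q(a_1)\neq 0$ in Proposition~\ref{factorWronskian} fails (and taking $Q_\mu=1$ instead makes the degree contribution $e_j$, which may be as large as $d-1$). You flag this as ``the main obstacle'' and offer two sketches, but neither is carried out, and neither obviously closes the gap. Route (i) requires a shift in $B\setminus A$; when $B\subseteq A$ there is none, and even when one exists the Wronskian estimate bounds $e_{j^*}$, which need not be $e_{\max}$, so it does not combine cleanly with $e_{\max}\ge d-k+1$. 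Route (ii), grouping the terms at each $a_i$ into a polynomial $P_i$, loses control of $\mult[a_i]{P_i}$ (it equals the \emph{smallest} exponent appearing at $a_i$, not $d$), so the multiplicity argument no longer starts from $d$.

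The paper sidesteps this case split entirely by a Cramer's-rule trick. After isolating $\alpha_1(x-a_1)^d$ and choosing a linearly independent subfamily of the right-hand side, it observes that some member of that subfamily must be $(x-a_2)^d$ with $a_2\ne a_1$ (degree matching forces an exponent-$d$ term, and those only come from the $a_i$'s), and then writes $\lambda_1$ as the ratio
\[
\lambda_1=\frac{\Wr\bigl(g,\,l_2^{f_2},\ldots,l_p^{f_p}\bigr)}{\Wr\bigl(l_1^{f_1},l_2^{f_2},\ldots,l_p^{f_p}\bigr)}.
\]
Factoring out the high-multiplicity parts from both Wronskians, any troublesome power of $(x-a_1)$ coming from a shift $b_j=a_1$ appears identically in numerator and denominator and cancels, leaving $\lambda_1(x-a_2)^{d-(p-1)}W_2=(x-a_1)^{d-(p-1)}W_1$ with $\deg W_1,\deg W_2\le p(p-1)/2$; now $\mult[a_1]{\cdot}$ can be compared on both sides with no case distinction because $a_1\ne a_2$. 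This is the idea your proof is missing, and it is what makes the bound unconditional.
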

\begin{proof}
	We assume $\alpha_1 \not= 0$ and we have the following equality:
	\[
		\alpha_1 (x - a_1)^d = - \sum_{i = 2}^k \alpha_i (x - a_i)^d + \sum_{i = 1}^l \beta_i (x - b_i)^{e_i}
	\]
	Consider an independent subfamily on the right hand side of this equality. We obtain a new identity of the form:
	\[
		g = \sum_{i=1}^p \lambda_i \ell_i^{\, r_i}
	\]
	with $g(x) = \alpha_1 (x-a_1)^d$, and $p \leq k + l - 1$. 
	Since ${\rm deg}(g) = d$ and $e_i < d$ for all $i$; then there exists $i$ such that $r_i = d$. We assume without loss of generality that $\ell_1 = x - a_2$ and $r_1 = d$.
	
	We take the derivatives of this equality to obtain the following system:
	\begin{align*}
		g &= \sum_{i=1}^p \lambda_i \ell_i^{\, r_i} \\
		g^{'} &= \sum_{i=1}^p \lambda_i \left[ \ell_i^{\, r_i} \right]^{'}\\
		\vdots  &  \\
		g^{(p-1)} &= \sum_{i=1}^p \lambda_i \left[ \ell_i^{\, r_i} \right]^{(p-1)}\\
	\end{align*}
	
	Using Cramer's rule, we obtain:
	\[
		\lambda_1 = \frac{\Wr(g, \ell_2^{\, r_2}, \dots, \ell_p^{\, r_p})}{\Wr(\ell_1^{\, r_1},\ell_2^{\, r_2}, \dots, \ell_p^{\, r_p})}
	\]
	
	We define $\Delta = \{i : 2 \leq i \leq p, r_i \geq p\}$ and, following Proposition \ref{factorWronskian}, we factorise the Wronskians:
	\[
		\lambda_1 = \frac{(x-a_1)^{d - (p-1)} \prod_{i \in \Delta} \ell_i^{\, r_i - (p-1)} \cdot W_1}{(x-a_2)^{d - (p-1)} \prod_{i \in \Delta} \ell_i^{\, r_i - (p-1)} \cdot W_2}
	\]
	where $W_1,W_2$ are the remaining determinants.
	
	\noindent	
	After some simplifications, we obtain the following identity:
	\[
		\lambda_1 (x-a_2)^{d - (p-1)} W_2 = (x-a_1)^{d - (p-1)} W_1
	\]
	Notice now that since we have factorised the large $r_i$'s, the $i^{th}$ row of $W_1$ and $W_2$ contains polynomials with degree bounded by $p - i$, thus $\deg W_1, \deg W_2 \leq p(p-1)/2$.
	
	Moreover, since $a_1 \not= a_2$, we compute the multiplicity of $a_1$ on both sides of the identity and obtain that
	\[
		\mult[a_1]{(x-a_1)^{d - (p-1)} W_1} = \mult[a_1]{\lambda_1 (x-a_2)^{d - (p-1)} W_2} = \mult[a_1]{W_2}.
	\]
	The previous remark on the degree of $W_2$ therefore implies that
	\[
		d - (p-1) \leq \frac{p(p-1)}{2}
	\]
	Finally, we set $s = l + k$ and we use the fact that $p \leq s - 1$ to obtain the desired lower bound:
	\begin{align*}
		d  &\leq \frac{(p+2)(p-1)}{2} \cr
		d  &\leq \frac{(s+1)(s-2)}{2} \cr
		2d &\leq s^2 - s - 2 \cr
	\end{align*}
and finally, $2(d+1) < s^2$.
\end{proof}

\begin{remark}
	The same equality as in Remark \ref{tightproduct} shows that the order of this bound is tight when $\FF = \CC$, the field of complex numbers.
	Indeed, choosing $k = \sqrt{d+1}$ leads to the equality
	\[
		\sum_{i=1}^{k} (x + \xi^i)^d = \sum_{j = 0}^{k-1} k \binom{d}{jk} x^{d-jk}
	\]
	which has $2k = 2\sqrt{d+1}$ terms.
\end{remark}

As a consequence of Proposition \ref{LowerBound} we obtain that whenever 
$\AffPow_\FF(f)$ is sufficiently small, the
terms of highest degree in an optimal expression of $f$ as $f = \sum_{i = 1}^s \alpha_i (x-a_i)^{e_i}$ are uniquely determined.

\begin{corollary}\label{uniquenessHighestDegree}
	Let $f \in \FF[x]$ be a polynomial of the form :
	\[
		f = \sum_{i=1}^k \alpha_i (x-a_i)^d + \sum_{j=1}^l \beta_j (x-b_j)^{e_j}
	\]
	with $e_j < d$.
	If $k + l \leq \sqrt{\frac{d+1}{2}}$, then the highest degree terms are unique. In other words, for every expression of $f$ 
	as
	\[
		f = \sum_{i=1}^{k'} \alpha_i' (x-a_i')^d + \sum_{j=1}^{l'} \beta_j' (x-b_j')^{e_j'}
	\]
	with $e_j' < d$ and $k' + l' \leq \sqrt{\frac{d+1}{2}}$, then $k = k'$ and there exists a permutation $\pi: \{1,\ldots,k\} \rightarrow \{1,\ldots,k\}$ such that
	$\alpha_i = \alpha_{\pi(i)}'$ and $a_i = a_{\pi(i)}'$ for all $i \in \{1,\ldots,k\}$.
\end{corollary}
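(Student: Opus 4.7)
The strategy is to subtract the two decompositions of $f$, isolate the degree-$d$ part, and apply Proposition~\ref{LowerBound} to force this degree-$d$ part to vanish. Uniqueness of the highest-degree terms will then follow from linear independence of the polynomials $(x-c)^d$ for distinct $c$.

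First, I would reduce to a clean setup: up to relabeling I may assume all $\alpha_i, \alpha_i'$ are nonzero and that, within each decomposition, the nodes $a_i$ (resp.\ $a_i'$) are pairwise distinct (if not, combining repeated nodes and discarding vanishing terms only decreases $k$ or $k'$, leaving the hypothesis $k+l, k'+l' \le \sqrt{(d+1)/2}$ intact). Subtracting the two given expressions for $f$ and moving lower-degree terms to the right gives
\[
\sum_{i=1}^k \alpha_i(x-a_i)^d - \sum_{i=1}^{k'} \alpha_i'(x-a_i')^d \;=\; \sum_{j=1}^{l'} \beta_j'(x-b_j')^{e_j'} - \sum_{j=1}^l \beta_j(x-b_j)^{e_j}.
\]
On the left, I would group terms sharing a common node: the result has the form $\sum_{i=1}^{k''} \gamma_i(x-c_i)^d$ with the $c_i$ pairwise distinct and $\gamma_i\ne 0$, where $k'' \le k+k'$ (possibly strictly less, due to cancellations). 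On the right we have at most $l''\le l+l'$ terms, all with exponent strictly less than $d$.

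Now suppose for contradiction that the left-hand side does not vanish, i.e.\ $k''\ge 1$. Then the above is a nontrivial identity of exactly the form appearing in Proposition~\ref{LowerBound}, so
\[
k'' + l'' \;>\; \sqrt{2(d+1)}.
\]
But by the hypotheses of the corollary,
\[
k'' + l'' \;\le\; (k+k') + (l+l') \;=\; (k+l) + (k'+l') \;\le\; 2\sqrt{\tfrac{d+1}{2}} \;=\; \sqrt{2(d+1)},
\]
a contradiction. Hence $\sum_{i=1}^k \alpha_i(x-a_i)^d = \sum_{i=1}^{k'} \alpha_i'(x-a_i')^d$ as polynomials.

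Finally, since the $a_i$ are distinct and the $a_i'$ are distinct, the polynomials $\{(x-a_i)^d\}_i \cup \{(x-a_i')^d\}_i$ grouped by common node are linearly independent across distinct node values (this is immediate by comparing Taylor expansions at each node, or via Jordan's lemma \ref{JordanLemma}). Hence for every $c$ appearing as some $a_i$ it must also appear as some $a_{\pi(i)}'$, with matching coefficient, and conversely. This produces the required bijection $\pi$ and forces $k=k'$. The main obstacle is simply the bookkeeping in the grouping step---making sure that the count $k''+l''$ after combining does not exceed $(k+l)+(k'+l')$---but the inequality above uses only the crude bound, so no delicate accounting is needed.
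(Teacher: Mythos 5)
Your proof is correct and follows essentially the same route as the paper: subtract the two decompositions, move the lower-degree terms to one side, and apply Proposition~\ref{LowerBound} with the bound $2\sqrt{(d+1)/2}=\sqrt{2(d+1)}$ to force the degree-$d$ parts to coincide. The paper's own argument is terser (it leaves the grouping-by-node step and the final linear-independence bookkeeping implicit), but there is no substantive difference in method.
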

\begin{proof}
	Let us assume that we have another different decomposition for $f$:
	\[
		f = \sum_{i=1}^{k'} \alpha_i' (x-a_i')^d + \sum_{j=1}^{l'} \beta_j' (x-b_j')^{e_j'}
	\]
	with $k' + l' \leq \sqrt{(d+1)/2}$. Hence, we have the following equality:
	\[
		\sum_{i=1}^{k} \alpha_{i} (x-a_i)^d -  \sum_{i=1}^{k'} \alpha_{i}' (x-a_i')^d
		= \sum_{j=1}^{l} \beta_{j} (x-b_j)^{e_{j}} - \sum_{j=1}^{l'} \beta_{j}' (x-b_j')^{e_{j}'}
	\]
	Since $k + k' + l + l' \leq \sqrt{2(d+1)}$, the result follows from Proposition~\ref{LowerBound}.
\end{proof}

\medskip

Finally, as a direct consequence of  Corollary \ref{uniquenessHighestDegree}, 
we obtain a
a sufficient condition for a polynomial to have a unique optimal expression in the $\AffPow$ model:
\begin{corollary} \label{uniquenessExpression}
	Let $f \in \FF[x]$ be a polynomial of the form:
	\[
		f = \sum_{i=1}^s \alpha_i (x-a_i)^{e_i}
	\]
	For every $e \in \NN$ we denote by $n_e$ the number of exponents smaller than $e$, i.e., $n_e = \#\{i : e_i \leq e\}$.
	If $n_e \leq \sqrt{\frac{e+1}{2}}$ for all $e \in \NN$, then $\AffPow_\FF(f) = s$ and the optimal representation of
	$f$ is unique.
\end{corollary}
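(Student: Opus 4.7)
My plan is to proceed by induction on $s$ and to use Corollary~\ref{uniquenessHighestDegree} at each step to peel off the terms of highest degree. The base case $s=0$ will reduce to $f=0$, which forces any competing decomposition to be empty. For the inductive step, I will assume the result for all valid decompositions of size strictly smaller than $s$, and consider any second decomposition $f = \sum_{j=1}^{s'} \alpha'_j (x-a'_j)^{e'_j}$ with $s'\le s$ and all $\alpha'_j\neq 0$ (dropping any zero term costs nothing).

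I will set $D := \max\bigl(\max_i e_i,\ \max_j e'_j\bigr)$ and let $k$, resp.~$k'$, denote the number of terms with exponent exactly $D$ in the first, resp.~second, decomposition; write $l = s-k$, $l' = s'-k'$. To invoke Corollary~\ref{uniquenessHighestDegree} I need both $k+l = s$ and $k'+l' = s'$ to be bounded by $\sqrt{(D+1)/2}$. Both inequalities follow from the hypothesis applied at $e=D$, since $n_D = s$ and $s'\le s$. The corollary will then yield $k=k'$ together with a bijection matching the degree-$D$ terms of the two decompositions. In particular $k\geq 1$, for otherwise $D$ would appear only in the second decomposition, forcing $k'\geq 1 = k$, a contradiction.

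Subtracting these common top-degree terms will produce a polynomial $g$ endowed with two decompositions of sizes $l < s$ and $l'\leq l$, in which every exponent is $<D$. I will then verify that the condition $n_e \leq \sqrt{(e+1)/2}$ is still satisfied by the first decomposition of $g$: for $e<D$ the counting function is unchanged from that of $f$, while for $e\geq D$ the count equals $l\leq s\leq \sqrt{(D+1)/2}\leq \sqrt{(e+1)/2}$. Applying the induction hypothesis to $g$ will match the two remaining decompositions and therefore the two original decompositions of $f$, which in particular forces $s'=s$. This simultaneously delivers $\AffPow_\FF(f)=s$ and uniqueness.

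The main obstacle is essentially bookkeeping: I must ensure that the induction strictly decreases the size parameter (handled by the observation $k\geq 1$) and that the $n_e$ condition genuinely transfers to the peeled polynomial $g$. All of the substantive content has already been packaged into Corollary~\ref{uniquenessHighestDegree}, so the statement really is a ``direct consequence'' as advertised in the paper.
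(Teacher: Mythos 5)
Your proof is correct and follows exactly the route the paper intends: the paper labels Corollary~\ref{uniquenessExpression} a ``direct consequence'' of Corollary~\ref{uniquenessHighestDegree} and leaves the iterative peeling argument implicit, and the analogous real-field statement (Corollary~\ref{realuniquenessExpression}) is proved by precisely this kind of induction on the highest-degree terms. You have simply made the bookkeeping explicit --- checking $s, s' \le \sqrt{(D+1)/2}$ via $n_D = s$, observing $k\ge 1$ so the size strictly drops, and verifying that the $n_e$ bound transfers to the peeled polynomial --- and all of those checks are sound.
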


\begin{remark}
	Whenever $f \in \RR[x]$ satisfies the hypotheses of Corollary \ref{uniquenessExpression} and one term in the expression of $f$ is of the form
	$\alpha_i (x - a_i)^{e_i}$ with $a_i \in \CC-\RR$, then there exists $j \not= i$ such 
	that $\alpha_j = \overline{\alpha_i}, a_j = \overline{a_i}$ and $e_j = e_i$. 
	Indeed, if we have a decomposition for $f$, taking the conjugate of $\alpha_i$ and $a_i$ for all $i$ gives another decomposition of $f$, but by 
	Corollary \ref{uniquenessExpression} these two decompositions must be identical. In Proposition \ref{Galois} we will prove a more general version of this fact.
	
\end{remark}

Another consequence of Proposition \ref{LowerBound} is the following upper bound on the degree of the terms involved in an optimal expression of $f$ 
in the model $\AffPow_{\FF}$.

\begin{corollary}\label{upperbounddegreeterms} Let $f \in \FF[x]$ be a polynomial of degree $d$ written as
	\[
		f = \sum_{i = 1}^s \alpha_i (x-a_i)^{e_i}
	\]
	with $\alpha_i, a_i \in \FF$, $e_i \in \mathbb{N}$ and $s = \AffPow_\FF(f)$. We set $e := {\rm max}\{e_i \, : \, 1 \leq i \leq s\}$, then
	$e < d + \frac{s^2}{2}$ and, if $\, \FF = \mathbb{R}$, then $e \leq d + 2s - 2$. In particular, 
	we have that $e < d + \frac{(d+2)^2}{8}$ and, if $\, \FF = \mathbb{R}$, then $e \leq 2d$.
\end{corollary}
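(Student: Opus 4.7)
The plan is to exploit the identity $f^{(d+1)}\equiv 0$ (which holds because $\deg f=d$) in order to produce a nontrivial linear dependence among certain shifted affine powers, and then apply Proposition~\ref{LowerBound} (or Theorem~\ref{lbtool} when $\FF=\RR$) to that dependence.

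If $e\le d$ the four asserted inequalities hold trivially, so I may assume $e>d$ and set $T=\{i:e_i\ge d+1\}$, $t=|T|\le s$. Differentiating the given decomposition exactly $d+1$ times yields
\[
0\;=\;f^{(d+1)}\;=\;\sum_{i\in T}\alpha_i'\,(x-a_i)^{e_i-d-1},\qquad \alpha_i':=\alpha_i\,\frac{e_i!}{(e_i-d-1)!}\ne 0.
\]
Because the decomposition is optimal, the pairs $(a_i,e_i)$ are pairwise distinct (otherwise two terms could be merged, contradicting $s=\AffPow_\FF(f)$); in particular, the $a_i$'s indexed by $\{i\in T:e_i=e\}$ are all distinct.

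Set $e':=e-d-1$ and $k:=|\{i\in T:e_i=e\}|\ge 1$. Isolating the terms of maximal exponent puts the dependence in the form
\[
\sum_{\substack{i\in T\\ e_i=e}}\alpha_i'\,(x-a_i)^{e'}\;=\;-\sum_{\substack{i\in T\\ e_i<e}}\alpha_i'\,(x-a_i)^{e_i-d-1},
\]
whose LHS has $k\ge 1$ terms with distinct $a_i$'s and nonzero $\alpha_i'$'s, and whose RHS has $t-k$ terms of exponent strictly less than $e'$. This matches exactly the shape required by Proposition~\ref{LowerBound} (with the degree parameter $d$ of the proposition replaced by $e'$), so $t=k+(t-k)>\sqrt{2(e'+1)}=\sqrt{2(e-d)}$, whence $e-d<t^2/2\le s^2/2$. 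Over $\FF=\RR$, Theorem~\ref{lbtool} applied to the same identity yields $t\ge\lceil (e'+3)/2\rceil=\lceil (e-d+2)/2\rceil$, so $2t\ge e-d+2$ and hence $e\le 2t+d-2\le 2s+d-2$.

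The ``in particular'' inequalities are deduced from the first two by substituting the a priori bound $s\le\lceil (d+2)/2\rceil$ on the size of an optimal decomposition of a polynomial of degree $d$ (which can be extracted from \cite[Proposition~18]{GK}, cited in the footnote to Theorem~\ref{diffnodesintro}, together with the constraint $\deg f=d$). The only technical point I anticipate having to verify is that the hypotheses of Proposition~\ref{LowerBound} are indeed met after rearrangement: one must check $k\ge 1$ (immediate, since $e=\max e_i$ is attained) and the distinctness of the $a_i$ on the LHS (immediate from optimality); once these are in place the rest is routine calculation.
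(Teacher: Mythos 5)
Your proof is correct and follows essentially the same route as the paper's: differentiate $d+1$ times to kill the low-degree terms, cast the resulting identity into the shape required by Proposition~\ref{LowerBound} (resp.\ Theorem~\ref{lbtool} over $\RR$), and then substitute the worst-case bound on $s$ from~\cite[Proposition~18]{GK}. You are slightly more explicit than the paper in two places that the paper leaves implicit: you check that the pairs $(a_i,e_i)$ in an optimal decomposition are pairwise distinct (so the shifts attached to the maximal exponent $e$ are distinct, as Proposition~\ref{LowerBound} requires), and you separate the maximal-exponent terms onto one side of the identity before invoking the proposition. One small slip: the bound from~\cite[Proposition~18]{GK} is $s\le\lceil (d+1)/2\rceil\le(d+2)/2$, not $s\le\lceil (d+2)/2\rceil$; for odd $d$ the latter equals $(d+3)/2$ and would not yield the stated $e<d+(d+2)^2/8$. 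With the correct constant $(d+2)/2$ the final inequalities follow exactly as you claim.
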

\begin{proof} 
	If $e = d$, then the result is trivial. Assume therefore that $e > d$. 
 Now,
	we differentiate $d+1$ times the expression for $f$ 
to obtain the identity:
	\[
		0 = f^{\,(d+1)} = \sum_{e_i > d} \alpha_i \frac{e_i!}{(e_i - d - 1)!} (x-a_i)^{e_i-d-1}.
	\]
	By Proposition \ref{LowerBound} we have $s > \sqrt{2(e - d)}$ and we conclude that $e < d + \frac{s^2}{2}$.
	When $\FF = \mathbb{R}$, by Theorem \ref{lbtool} we have $s \geq (e-d+2)/2$ and we conclude that $e \leq d + 2s - 2$. 
	To finish the proof it suffices to recall that $s = \AffPow_{\FF}(f) \leq \lceil (d+1)/2 \rceil \leq (d+2)/2$; see \cite[Proposition 18]{GK}.
\end{proof} 

\bigskip

\begin{remark}
	On can find examples that are close to the bound of Corollary \ref{upperbounddegreeterms}. Indeed, if we take $k = \sqrt{d + 1}$ in Remark \ref{tightproduct}, we get an expression of the $0$ polynomial with $2k$ terms, namely:
	\[		
		\sum_{j=1}^k (x + \xi^j)^d - \sum_{0 \leq i \leq d \atop i \equiv 0 \ (\textup{mod}\ k)} k \binom{d}{i} x^{d-i} = 0
	\]
	 If we integrate this expression $7(d+1)$ times we get a polynomial 
	 \[		
		f := \sum_{j=1}^k (x + \xi^j)^{8d+7} - \sum_{0 \leq i \leq d \atop i \equiv 0 \ (\textup{mod}\ k)} k \binom{d}{i} x^{8d+7-i},
	\]
	of degree $< 7(d+1)$ with $s := \AffPow_\FF(f) = 2k$ (by Corollary \ref{uniquenessExpression}) and whose  maximum exponent in the optimal expression is $8d + 7 = 7(d+1) + d < {\rm deg}(f) + (s^2-4)/4$.  
\end{remark}

\begin{remark}
	As a consequence of Corollary \ref{upperbounddegreeterms}, we obtain a naive brute force algorithm to find one optimal expression for any polynomial $f$.
	Indeed, for a fixed integer $s$, there are only a finite number of sequences of exponents $(e_1, \dots, e_s)$ with $e_i \leq d + s^2/2$.
	For one sequence, one can try to find an expression with these exponents by solving a system of polynomial equations in $2s$ variables.
	The smallest $s$ with a solution gives the value of $\AffPow_\FF(f)$.
\end{remark}

\bigskip

Also, as a byproduct of Corollary \ref{upperbounddegreeterms}, we obtain the 
exact value of $\AffPow_{\FF} (f)$ for a generic polynomial $f$ of degree $d$.
It turns out to be equal to the worst case value of $\AffPow_{\FF} (f)$,
obtained in \cite[Proposition 18]{GK}.
\begin{corollary}For a generic polynomial $f \in \FF[x]$ of degree $d$, 
 $\AffPow_{\FF}(f) = \lceil \frac{d+1}{2} \rceil$.
\end{corollary}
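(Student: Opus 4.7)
The plan is to combine an upper bound that holds for every polynomial with a lower bound that holds for a generic polynomial. The upper bound $\AffPow_{\FF}(f) \leq \lceil (d+1)/2 \rceil$ is already given by \cite[Proposition 18]{GK} and is invoked at the end of the proof of Corollary \ref{upperbounddegreeterms}. So the task reduces to showing that the set
\[
S := \{f \in \FF[x]_{\leq d} : \AffPow_{\FF}(f) \leq s'\},\qquad s' := \lceil (d+1)/2 \rceil - 1,
\]
lies inside a proper Zariski-closed subset of the affine space $\FF[x]_{\leq d}$.

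My next step would be to parameterize $S$ by a finite-dimensional parameter space. If $f \in S$, then by Corollary \ref{upperbounddegreeterms} applied to an optimal decomposition, $f$ can be written as $\sum_{i=1}^{s'}\alpha_i(x-a_i)^{e_i}$ with integer exponents $e_i$ bounded by some explicit $N$ depending only on $d$ and $s'$ (allowing some $\alpha_i = 0$ covers the case where the optimal number of terms is strictly smaller than $s'$). For each of the finitely many exponent tuples $\bar{e} = (e_1,\dots,e_{s'}) \in \{0,1,\dots,N\}^{s'}$, I would consider the polynomial map
\[
\Phi_{\bar{e}}:\FF^{2s'}\longrightarrow \FF[x],\qquad (\alpha_1,\dots,\alpha_{s'},a_1,\dots,a_{s'}) \longmapsto \sum_{i=1}^{s'}\alpha_i(x-a_i)^{e_i}.
\]
The Zariski closure of the image of each $\Phi_{\bar e}$ has dimension at most $2s'$, and $S$ is contained in the union of the intersections of these images with $\FF[x]_{\leq d}$. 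Hence $S$ lies inside a constructible subset of $\FF[x]_{\leq d}$ of dimension at most $2s'$.

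The concluding step is a dimension count: regardless of the parity of $d$,
\[
2s' = 2\bigl(\lceil (d+1)/2 \rceil - 1\bigr) \leq d < d+1 = \dim_{\FF}\FF[x]_{\leq d}.
\]
The Zariski closure of $S$ is therefore a proper subvariety, so a generic polynomial of degree $d$ satisfies $\AffPow_{\FF}(f) > s'$, and together with the upper bound this forces $\AffPow_{\FF}(f)=\lceil (d+1)/2 \rceil$. The substantive ingredient (and the potential obstacle if one tried to argue without it) is Corollary \ref{upperbounddegreeterms}: without the a priori bound $e_i \leq N$ on the exponents the parameter space would be a countable rather than finite union, and the dimension argument would no longer immediately exhibit a proper Zariski-closed exceptional set defined over $\FF$.
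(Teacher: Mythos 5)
Your proposal matches the paper's proof essentially step for step: the upper bound from \cite[Proposition 18]{GK}, the parameterization of small-rank polynomials by maps $\FF^{2s'}\to\FF[x]$ indexed by exponent tuples, the use of Corollary~\ref{upperbounddegreeterms} to cut the union down to finitely many exponent tuples, and the dimension count $2s' \leq d < d+1$. The only cosmetic difference is that you spell out the parameterizing maps $\Phi_{\bar e}$ explicitly and note that allowing $\alpha_i=0$ absorbs decompositions with fewer than $s'$ terms, whereas the paper states this containment more tersely; the mathematics is the same.
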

\begin{proof} 
	The set of polynomials of degree $\leq d$ can be seen as a variety $W$ of dimension $d+1$. 
	Given $f \in \FF[x]$ a polynomial of degree $d$, by \cite[Proposition 18]{GK} we have $\AffPow_{\FF} (f) \leq \lceil \frac{d+1}{2} \rceil$.
	For $k < \lceil \frac{d+1}{2} \rceil$, let us show that the set of polynomials $g$ of degree $d$ such that $\AffPow_{\FF}(g) \leq k$ is
	contained in a variety of dimension $2k < d+1$. For every $e_1,\ldots,e_k \in \mathbb{N}$ the set of polynomials that can be written 
	as $\sum_{i = 1}^{k} \alpha_i (x - a_i)^{e_i}$ with $a_i, \alpha_i \in \FF$
	is contained in a variety $V_{e_1,\ldots,e_k}$ of dimension $2k$. If we set $M := d + \frac{(d+2)^2}{8}$, Corollary \ref{upperbounddegreeterms} proves
	that in every optimal expression of a polynomial of degree $d$, the exponents $e_i$ are $\leq M$; thus the set of polynomials with $\AffPow_{\FF}(f) \leq k$
	and degree $d$ is contained in $\bigcup_{e_i \leq M} V_{e_1,\ldots,e_k}$, which	is a variety of dimesion $\leq 2k$ (it is a 
	finite union of varieties of dimension $\leq 2k$).
\end{proof} 

\section{Algorithms for distinct nodes} \label{distinctsec}

The goal of this and the following section is to provide algorithms that receive as input a polynomial $f$ and computes 
$s = \AffPow_\FF(f)$ and the triplets $(\alpha_i,a_i,e_i)$ for $i \in \{1,\ldots,s\}$ such that $f = \sum_{i = 1}^s \alpha_i (x-a_i)^{e_i}$.
We will not able to solve the problem in all its generality but under certain hypotheses. This section concerns the case where the $a_i$ in
the optimal expression of $f$ are all distinct. In this setting, our main result is Theorem \ref{diffnodesimproved} where we
solve the problem when the number $n_e$ of exponents in the optimal expression 
that are $\leq e$ is 'small'. A key point to obtain the algorithms is given by the following Proposition. Roughly speaking, this result says that if $f$ satisfies
a $\SDE$, then every term in the optimal expression of $f$ with exponent $e_i$ big enough also satisfies the same $\SDE$.

\begin{proposition}\label{bigexponents}
	Let $f \in \FF[x]$ be written as
	\[
		f = \sum_{i = 1}^s \alpha_i (x - a_i)^{e_i},
	\]
	with $\alpha_i \in \FF$ nonzero, the $a_i \in \FF$ are all distinct, and $e_i \in \NN$.
	Whenever $f$ satisfies a $\SDE(k,l)$, then for all $e_i \geq k + (k+l)(s-1) + \binom{s}{2}$ 
	we have that $(x-a_i)^{e_i}$ satisfies the same SDE.
\end{proposition}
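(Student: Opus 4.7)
The plan is to apply the differential operator $L := \sum_{j=0}^{k} P_j(x)\,\frac{d^j}{dx^j}$ associated with the SDE to the decomposition of $f$, and then run a Wronskian-based multiplicity argument to isolate the large-exponent term. Throughout we set $u_i := \alpha_i\, L[(x-a_i)^{e_i}]$; since $L$ is linear and $L[f]=0$, these polynomials satisfy $\sum_{i=1}^s u_i = 0$.

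First I would compute $L[(x-a)^e]$ explicitly: for $e\geq k$, expanding the derivatives and factoring yields $L[(x-a)^e] = (x-a)^{e-k} R(x)$ with $\deg R \leq k+l$, because each summand $P_j(x)\frac{e!}{(e-j)!}(x-a)^{e-j}$ has degree at most $(j+l)+(e-j)=e+l$ and is divisible by $(x-a)^{e-k}$. Fix the index $i$ with $e_i \geq k+(k+l)(s-1)+\binom{s}{2}$; relabel it as $i=1$. The goal is to show $u_1=0$, which is equivalent to $(x-a_1)^{e_1}$ being annihilated by $L$. Assume for contradiction that $u_1 \neq 0$, and write $u_1 = \alpha_1 (x-a_1)^{e_1-k} R_1(x)$ with $R_1 \neq 0$, so that $\mult[a_1]{u_1} \geq e_1-k$.

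Next I would produce an upper bound on $\mult[a_1]{u_1}$ from the identity $u_1 = -\sum_{i=2}^s u_i$. Let $m$ be the dimension of the $\FF$-span of the nonzero $u_i$ for $i \geq 2$; choose a basis $u_{j_1},\dots,u_{j_m}$ with $m \leq s-1$ and write $u_1 = \sum_{r=1}^m \gamma_r u_{j_r}$. Lemma \ref{lemma:WrMult} gives
\[
\mult[a_1]{u_1} \leq m-1 + \mult[a_1]{\Wr(u_{j_1},\dots,u_{j_m})},
\]
after removing the scalars $\gamma_r$ by multilinearity of the Wronskian. To control the Wronskian, apply Proposition \ref{factorWronskian} with $Q_r = x - a_{j_r}$: each $u_{j_r}$ is divisible by $(x-a_{j_r})^{\max(0,e_{j_r}-k)}$, the cofactor has degree at most $k+l$, and the factored product $\prod_r(x-a_{j_r})^{\,\cdot}$ does not vanish at $a_1$ because the $a_i$ are distinct. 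This yields
\[
\mult[a_1]{\Wr(u_{j_1},\dots,u_{j_m})} \leq m(k+l) + m(m-1) - \binom{m}{2} = m(k+l) + \binom{m}{2}.
\]

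Combining the two bounds,
\[
e_1 - k \leq m-1 + m(k+l) + \binom{m}{2} \leq (s-1)(k+l) + (s-2) + \binom{s-1}{2} = (k+l)(s-1) + \binom{s}{2} - 1,
\]
using the identity $\binom{s-1}{2}+(s-2)=\binom{s}{2}-1$. Hence $e_1 < k + (k+l)(s-1)+\binom{s}{2}$, contradicting the hypothesis; therefore $u_1=0$ and $(x-a_1)^{e_1}$ satisfies the SDE. The main technical obstacle will be the application of Proposition \ref{factorWronskian} when some of the indices $j_r$ have $e_{j_r}<k+m$, in which case the hypothesis $d_r \geq m$ is not met: one has to observe that the column-by-column factorization argument in that proof still gives the desired degree bound (column $r$ either contributes $(x-a_{j_r})^{d_r-m+1}$ together with a remainder whose entries have degree $\leq (k+l)+m-1-i$, or, if nothing is factored out, the raw entries are already bounded by the same quantity since the small exponent forces $\deg u_{j_r} < k+m-1+l$). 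This mild strengthening of Proposition \ref{factorWronskian} is what makes the uniform bound $m(k+l)+\binom{m}{2}$ hold regardless of the sizes of the other $e_{j_r}$.
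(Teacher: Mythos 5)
Your proof is correct and follows essentially the same route as the paper's: apply the differential operator $L$ to each term, observe that $u_1 = -\sum_{i\geq 2} u_i$, pass to a linearly independent subfamily, and combine Lemma \ref{lemma:WrMult} with Proposition \ref{factorWronskian} to bound $\mult[a_1]{u_1}$ from above and contradict the lower bound $e_1 - k$. The one place you go beyond the paper's write-up is in explicitly noting that Proposition \ref{factorWronskian} as stated requires $d_r \geq m$, which need not hold for all $r$, and sketching why the degree bound still goes through when some exponents are small; the paper applies the proposition directly without flagging this, so your extra remark is a genuine clarification rather than a different approach.
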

\begin{proof}
	We assume that  $e_1 \geq k + (k+l)(s-1) + \binom{s}{2}$ and that $f$ satisfies the following $\SDE(k,l)$:
	\[
		\sum_{i = 0}^k P_i(x)\, g^{(i)}(x) = 0,
	\]
	with $\deg(P_i) \leq i + l$.
	By contradiction, we assume that $(x-a_1)^{e_1}$ does not satisfy this equation.
	For every $j \in \{1,\ldots,s\}$,  we denote by $f_j$ and $R_j$ the polynomials such that
	\[
		f_j = \sum_{i = 0}^k P_i(x)\, ((x-a_j)^{e_j})^{(i)} = R_j(x)\, (x-a_j)^{d_j},
	\]
	where $d_j := \max\{e_j-k,0\}$.
 	We observe that  $\deg(f_j) \leq e_{j} + l $, so $\deg(R_j) \leq k + l$, and that $-f_1 = \sum_{j = 2}^s f_j \neq 0$.
	We consider a linearly independent subfamily of $f_2,\ldots,f_s$, namely $\{f_j \, \vert \, j \in J\}$ with
	$J = \{j_1,\ldots,j_p\} \subseteq \{2,\ldots,s\}$. 
	Then by Proposition \ref{factorWronskian} we have that
	\[
		\begin{array}{cl} e_1-k = d_1  \leq \mult[a_1]{f_1} & \leq p - 1 + \sum_{j \in J}{\rm deg}(R_j) + (p-1)p - \binom{p}{2} \\ & \leq p-1 + (k+l)p + \binom{p}{2}. \end{array}
	\] 
	Since $p \leq s-1$, we get that $e_1 \leq k + s - 2 + (k+l)(s-1) + \binom{s-1}{2}  < k + (k+l)(s-1) + \binom{s}{2}$, a contradiction. 
\end{proof}

\bigskip


\medskip

As a consequence of Proposition \ref{bigexponents}, we get Corollary \ref{bigexpCor} and Theorem \ref{th-algbigexp}. They provide an effective
method to obtain the optimal expression of a polynomial $f$ in the Affine Power model whenever all the terms involved have big exponents and all the nodes
are different.

\begin{corollary}\label{bigexpCor}
	Let $f \in \FF[x]$ be written as
		$f = \sum_{i = 1}^s \alpha_i (x - a_i)^{e_i},$ 
	with $\alpha_i \in \FF \setminus \{0\}$, $a_i \in \FF$ all distinct, and $e_i \geq 5 s^2/2\ $ for all $i$.
	Then, 
	\begin{enumerate}[\quad a)]
		\item $\{(x-a_i)^{e_i}\, \vert \, 1 \leq i \leq s\}$ are linearly independent,

		\item If $f = \sum_{i = 1}^t \beta_i (x - b_i)^{d_i}$ with $t \leq s$, then $t = s$ and we have the equality
	$\{(\alpha_i,a_i,e_i) \, \vert \, 1 \leq i \leq s\} = \{(\beta_i,b_i,d_i) \, \vert \, 1 \leq i \leq s\}$; in particular,
		$\AffPow_\FF(f) = s$,
	
		\item $f$ satisfies a $\SDE(2s-1,0)$,
		
		\item if $f$ satisfies a $\SDE(k,0)$ with $k \leq 2s-1$ then $(x-a_i)^{e_i}$ also satisfies it for all $i \in \{1,\ldots,s\}$, and
		
		\item $f$ does not satisfy any $\SDE(k,0)$ with $k < s$.
	\end{enumerate}
\end{corollary}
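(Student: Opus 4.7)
The plan is to derive the five items from machinery already in place -- Propositions~\ref{smallequation}, \ref{bigexponents}, and \ref{factorWronskian}, Lemma~\ref{dimsolution}, and Corollary~\ref{uniquenessExpression} -- with the threshold $e_i \geq 5s^2/2$ calibrated precisely so that Proposition~\ref{bigexponents} applies with $l = 0$ and any order $k$ up to $2s-1$. I would prove the items in the order (c), (d), (a), (e), (b), because (e) depends on (a) and (d), and (b) will follow from the uniqueness corollary.

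Item (c) is immediate from Proposition~\ref{smallequation} applied with $\delta = 0$ and $Q_i(x) = \alpha_i$. For (d) I would invoke Proposition~\ref{bigexponents} with $l = 0$: the required lower bound on the exponents is $ks + \binom{s}{2}$, and a short computation shows that for $k \leq 2s-1$ this is at most $(5s^2 - 3s)/2$, comfortably below the assumption $e_i \geq 5s^2/2$.

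For (a), I would suppose for contradiction that a nontrivial minimal dependence involves some $t \geq 2$ of the powers $(x-a_i)^{e_i}$, say those indexed by $1,\ldots,t$. Writing $\lambda_1 (x-a_1)^{e_1}$ as minus the sum of the remaining $t-1$ terms and applying Proposition~\ref{factorWronskian} to that sum with $Q_j = x - a_j$ and $g_j = \lambda_j$ a constant, I obtain an upper bound on $\mult[a_1]{\cdot}$ of the right side that is quadratic in $t$ and hence at most quadratic in $s$. The crucial point is that $Q(a_1) \neq 0$ because the nodes are distinct, so the quantitative bound in that proposition is available. On the other hand the left-hand side has multiplicity exactly $e_1 \geq 5s^2/2$ at $a_1$, producing the contradiction. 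Once (a) is established, (e) follows at once: any $\SDE(k,0)$ with $k < s$ would, by (d), admit $s$ linearly independent polynomial solutions $(x-a_i)^{e_i}$, violating Lemma~\ref{dimsolution}.

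For (b), the hypothesis $e_i \geq 5s^2/2$ makes the quantity $n_e$ of Corollary~\ref{uniquenessExpression} equal to $0$ for $e < 5s^2/2$ and at most $s$ otherwise; in the latter range $\sqrt{(e+1)/2} > s \geq n_e$, so the hypothesis of that corollary holds and yields both $\AffPow_\FF(f) = s$ and uniqueness of the optimal representation. Any other decomposition with $t \leq s$ terms is then necessarily optimal, so $t = s$ and it coincides with the original up to permutation. The only step requiring nontrivial work is (a); the main subtlety there is checking that $Q(a_1) \neq 0$ in Proposition~\ref{factorWronskian}, which is exactly where the distinct-nodes hypothesis becomes essential.
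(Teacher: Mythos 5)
Your proof is correct and uses essentially the same machinery as the paper. The only organizational difference is that you establish the linear independence in item (a) directly via the Wronskian multiplicity bound of Proposition~\ref{factorWronskian} and then deduce item (b) from Corollary~\ref{uniquenessExpression}, whereas the paper proves (b) first via a direct appeal to Proposition~\ref{LowerBound} and observes that (a) follows; since Proposition~\ref{LowerBound} and Corollary~\ref{uniquenessExpression} are themselves consequences of the same Wronskian argument, the two routes are equivalent in substance.
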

\begin{proof}
	Notice first that (b) implies (a). Assume now that (b) does not hold, then there is another expression of $f$ as $f = \sum_{i = 1}^t \beta_i (x-b_i)^{d_i}$ with $t \leq s$. Hence, 
	by Proposition \ref{LowerBound}, we get that 
	\[
		2s \geq t + s > \sqrt{2 (\min(\{e_1,\ldots,e_s\}) + 1)} \geq \sqrt{5s^2},
	\]
	a contradiction. From Proposition \ref{smallequation} we get (c). If $f$  satisfies a $\SDE(k,0)$ with $k \leq 2s-1$,
	then for all $i \in \{1,\ldots,s\}$ we have that
	\[
		e_i \geq 5s^2 / 2 \geq (2s-1)s + \binom{s}{2} \geq k s + \binom{s}{2}.
	\]
	Hence, Proposition \ref{bigexponents} yields that $(x-a_i)^{e_i}$ is also a solution of this
	equation for all $i$, proving (d). Finally, $f$ cannot satisfy a $\SDE(k,0)$ with $k < s$; otherwise by (a) and (d), the vector space of solutions to this equation 
	has  dimension $\geq s$, which contradicts Lemma \ref{dimsolution}.
\end{proof}

\begin{th-algorithm}[Big exponents]\label{th-algbigexp}
	Let $f \in \FF[x]$ be a polynomial that can be written as 
	\[
		f = \sum_{i = 1}^s \alpha_i (x - a_i)^{e_i},
	\]
 	where the constants $a_i \in \FF$ are all distinct, $\alpha_i \in \FF \setminus \{0\}$ and $e_i > 5 s^2/2$. 
 	Then, $\AffPow_\FF(f) = s$. Moreover, there is a polynomial time algorithm {\tt Build}$(f)$ that receives $f = \sum_{i = 0}^d f_i x^i \in \FF[x]$ as input and computes 
 	the $s$-tuples of coefficients  $C(f) = (\alpha_1,\ldots,\alpha_s)$, of nodes $N(f) = (a_1,\ldots,a_s)$ and exponents $E(f) = (e_1,\ldots,e_s)$. 
	The algorithm {\tt Build}$(f)$ works as follows:

	\begin{enumerate}[\; \bf Step 1.]
		\item Take $r$ the minimum value such that $f$ satisfies a $\SDE(r, 0)$ and compute explicitly one of these $\SDE$.

		\item Compute $B = \{(x-b_i)^{d_i}\, \vert \, 1 \leq i \leq l\}$, the set of all the solutions of the $\SDE$ of the form $(x-b)^e$
		with $(r+1)^2/2 \leq e \leq {\rm deg}(f) + (r^2/2)$. 

		\item Determine $\beta_1,\ldots,\beta_{l}$ such that $f = \sum_{i = 1}^{l}  \beta_i (x-b_i)^{d_i}$

		\item Set $I := \{i \, \vert \,\beta_i \neq 0\}$ and output the sets $C(f) = (\beta_i \,\vert \, i \in I), \, N(f) = (b_i \,\vert \, i \in I)$
		and $E(f) = (d_i\,\vert \, i \in I)$. 
	\end{enumerate}
\end{th-algorithm}
\begin{proof}
	Corollary \ref{bigexpCor}  proves the correctness of this algorithm.
	Indeed, by Corollary \ref{bigexpCor}.(c) and (e), the value $r$ computed in {\bf Step 1} satisfies that $s \leq r \leq 2s-1$. 
	We claim that the set $B$ computed in {\bf Step 2} satisfies that: 
	\begin{itemize}
		\item[(1)] it contains the set  $\{(x-a_i)^{e_i} \, \vert \, 1\leq i \leq s)\}$,  
		\item[(2)] it has at most $r$ elements, and 
		\item[(3)] all its elements are $\FF$-linearly independent.
	\end{itemize} 
	The first claim follows from Corollary \ref{bigexpCor}.(d), the fact that $(r+1)^2 / 2 \leq  (2s)^2/2 < 5s^2/2$, and from Corollary \ref{upperbounddegreeterms},
	since $e_i \leq {\rm deg}(f) + (s^2/2) \leq {\rm deg}(f) + (r^2/2)$ for all $i$. To prove the second claim assume that $B$ has more than $r$ elements, then we take
	$t_1,\ldots,t_{r+1} \in B$. To reach a contradiction, by Lemma \ref{dimsolution} it suffices to prove that $t_1,\ldots,t_{r+1}$ are linearly independent.
	If this were not the case, by Proposition \ref{LowerBound}, we would have that $r+1 > \sqrt{(r+1)^2 + 2}$, which is not possible.
	A similar argument and the fact that $B$ has at most $r$ elements proves the third claim.
	By (1) and (3), the expression of $f$ as a combination of the elements of $B$ is unique and is the desired one.

	Finally, the four steps can be perfomed in polynomial time. Only the first
	two steps require a justification. 
	See Remark~\ref{findSDE} in Section~\ref{prelim} regarding Step~1.
	In Step~2 we substitute for each value of $e$ the polynomial $(x-b)^e$ in the
	SDE. This yields a polynomial $g(x)$ whose coefficients are polynomials
	in $b$ of degree at most $r \leq 2s-1$. We are looking for the values of $b$ which
	make $g$ identically 0, so we find $b$ as a root of the gcd of the coefficients 
	of $g$.
\end{proof}

\bigskip

  In the following result we are going to 
  analyze the bitsize complexity of the algorithm proposed; for this purpose we 
  assume that the output (and, hence, the input) have integer coefficients. 
  With this analysis we 
  intend to show a rough overestimate on the number of bitsize operations 
showing the polynomial time nature of the algorithm.

We recall that by the dense size of a polynomial $f = \sum_{i=0}^d f_ix^i \in \ZZ[X]$
we mean $size(f) := \sum_{i=0}^d [1+\log_2(1+|f_i|)]$. 
Also for an $n \times m$ matrix $M$ with rational entries $p_{i j}/q_{i j}$ where $p_{i j} \in \ZZ$, $q_{i j} \in \ZZ^+$, the  bit size  of $M$ is $size(M) := \sum_{i=1}^n \sum_{j = 1}^m [1+\log_2(1+|p_{i j}|)+\log_2(1+q_{i j})].$ 
 The notation $f(n) =  \overline{\mathcal O}(g(n))$ means that there exists
a $k \in \NN$ such that $f(n) = \mathcal O(g(n) \log^k(\max(|g(n)|,2)))$.

\begin{proposition}\label{runningtimebitsizeoperation}
Let $f$ be a polynomial of degree $d$ that can be written as 	
$$f = \sum_{i = 1}^s \alpha_i (x - a_i)^{e_i},$$ 
where the constants $a_i \in \ZZ$ are all distinct, $\alpha_i \in \ZZ \setminus \{0\}$ and $e_i > 5 s^2/2$. 
The algorithm ${\tt Build}(f)$ in Theorem \ref{th-algbigexp} outputs the optimal expression of $f$ in the $\AffPow$ model in $\mathcal{\mathcal O}(d^{\,6.5} size(f) + d^{\,8})$ bitsize operations.
\end{proposition}
\begin{demo}A first observation is that the value $r$ computed in {\bf Step 1} of the algorithm is upper bounded in terms of $d$. Indeed, by hypothesis $5s^2/2 \leq {\max}(e_i)$ and, by Corollary \ref{upperbounddegreeterms}, ${\max}(e_i) \leq d + (s^2/2)$, which implies that $d \geq 2s^2$. Moreover, in Corollary \ref{bigexpCor} we show that $r \leq 2s-1$; this gives $r = \mathcal O(\sqrt{d})$. Also by Corollary \ref{bigexpCor}, we have that $s \leq r$ and then ${\max}(e_i) = O(d)$.  

Let us study now the number of bitsize operations needed to obtain a $\SDE(r,0)$ satisfied by $f$ assuming that we know in advance the value of $r$ in {\bf Step 1} of the algorithm.
We propose to follow the idea of Remark \ref{findSDE} and find the $\SDE$ by computing a vector in the kernel of the matrix $M$
 whose entries are the coefficients of the polynomials $x^j f^{(i)}$ with $0 \leq j \leq i  \leq r$. We have that $M$ has $1 + \cdots + r = (r+1)r/2$ rows and $d+1$ columns.
  Since $size(x^j f^{(i)}) = \mathcal O(size(f) + i d \log(d))$, we have that  $size(M) =  \mathcal O(\sum_{i = 0}^r  (i+1)(size(f) + id\log(d))) = {\mathcal O}(r^2 (size(f) + r d \log(d)))$, which is $\overline{\mathcal O}(d\, size(f) + d^{\,2.5})$. 
  Now, we can obtain the required $\SDE$ by means of
the Gauss pivoting method on $M$. Let $E$ be the matrix in echelon form obtained by the  Gauss method. By \cite[Theorem 3.3]{Schrijver}, to compute $E$ one needs $\mathcal O(r^4 d)$  arithmetic operations, which is $O(d^{\,3})$, and the biggest size
of a coefficient appearing during the process of elimination by pivoting is $\mathcal O(size(M))$. Thus, the number of bitsize operations needed to obtain the $\SDE(r,0)$ is $ \overline{\mathcal O}(d^3\, size(M))$. Also, the biggest size of a coefficient appearing in the $\SDE(r,0)$ found is $\mathcal O(size(M))$.  After multiplying by an appropriate integer, we can assume that each of these coefficients are integers of size $\mathcal O(size(M))$.

We now lift the assumption that $r$ is known in advance.
To perform {\bf Step 1} we follow Remark \ref{findSDE} and
we check whether $f$ satisfies a $\SDE(\ell,0)$ starting from  $\ell = 0$
and increasing $\ell$.
We observe that at each step, we can check if $f$ satisfies a $\SDE(\ell,0)$ by checking if the matrix $M_{\ell}$ whose rows are the coefficients of the polynomials $x^i f^{(j)}$ with $0 \leq i \leq j  \leq \ell$ has full row rank. This can be easily checked from the matrix $E_{\ell}$ in echelon form obtained by applying the Gauss method to $M_{\ell}$. Since $M_{\ell}$ and $E_{\ell}$ are respectively submatrices of  $M_{\ell+1}$ and $E_{\ell+1}$,  the procedure of computing the 
$\SDE$ of smallest order satisfied by $f$ can be done incrementally. Moreover, all the matrices $M_{\ell}$ and $E_{\ell}$ are submatrices of the matrices $M$ and $E$ described above.  So, it is interesting to notice that knowing the exact value of $r$ in advance does not give any advantage and {\bf Step 1} can be performed in $\overline{\mathcal O}(d^{,3} size(M))$ bitsize operations.


To perform {\bf Step 2} we propose the following strategy. Assume that the $\SDE$ obtained in {\bf Step 1} is $\sum_{i = 0}^r P_i(x) f^{(i)}(x) = 0$.
For each value $e: \, (r+1)^2/2 \leq e \leq d + (r^2/2)$, we input in the $\SDE$ the polynomial $(x-Y)^e$, where $Y$ is a new variable; we obtain an equation of the form $g(x,Y) = 0$. We first observe that $g(x,Y) = (x-Y)^{e-r} h(x,Y)$, where $h(x,Y) \in \ZZ[x,Y]$ has degree $\leq r$. We write $h = \sum_{i = 0}^r h_i x^i$, where $h_i \in \ZZ[Y]$ is of degree $\leq r-i$.

The 
bit size of any coefficient of $h_i$ is $\overline{\mathcal O}(r^2\, size(M))$, which is $\overline{\mathcal O}(d\, size(M))$. Moreover,  since every $h_i \in \ZZ[Y]$ has degree $\leq r$, by \cite[Proposition 21.22]{Bostanetal}, the cost of computing the integer roots of each $h_i$ is $\overline{\mathcal O}(d^{\,2}\, size(M))$. 
Since we have to solve $r+1$ equations and take the common roots, this makes $\overline{\mathcal O}(d^{\,2.5} size(M))$ bitsize operations and since we have to do it for at most $d$ values of $e$, this gives $\overline{\mathcal O}(d^{\,3.5} size(M))$ bit operations overall.
Moreover, the $b_j$'s computed divide the independent term of all the $h_i$ and, hence, the 
bit size of each $b_i$ is $\overline{\mathcal O}(d\, size(M)).$

In {\bf Step 3}, the corresponding matrix has at most $d+1+(r^2/2)$ rows, at most $r$ columns (see the proof of Theorem \ref{th-algbigexp}), and its size is $\overline{\mathcal O}(d^{\,3.5} size(M))$. Since the rank of this matrix is $\leq r$ (indeed, as we proved in Theorem \ref{th-algbigexp}, this matrix has full column rank), when  we are performing Gaussian 
elimination and treating a new row we have at most $r$  already treated nonzero rows. As a consequence of this, we have to perform $\mathcal O(r^2 d)$ arithmetic operations to solve the system of equations by Gaussian elimination through pivoting. Hence the cost of this step is $\overline{\mathcal O}(d^{\,5.5} size(M))$, giving a total cost of $\mathcal{\mathcal O}(d^{\,6.5} size(f) + d^{\,8})$ bitsize operations.
\end{demo}

\bigskip

Now, we can proceed with the main result of this section:

\begin{th-algorithm}[Different nodes] \label{diffnodesimproved}
	Let $f \in \FF[x]$ be a polynomial that can be written as 
	\[
		f = \sum_{i = 1}^s \alpha_i (x - a_i)^{e_i},
	\]
	where the constants $a_i \in \FF$ are all distinct, $\alpha_i \in \FF \setminus \{0\}$, and  $e_i \in \NN$. 
Assume moreover that $n_i \leq (3i/4)^{1/3} - 1$ for all 
$i \geq 2$, where $n_i$ denotes the number of indices $j$ such that
 $e_j \leq i$.

	Then, $\AffPow_\FF(f) = s$. Moreover, there is a polynomial time algorithm {\tt Build}$(f)$ that receives $f = \sum_{i = 0}^d f_i x^i \in \FF[x]$ as input and computes the $s$-tuples of coefficients $C(f) = (\alpha_1,\ldots,\alpha_s)$, of nodes $N(f) = (a_1,\ldots,a_s)$ and exponents $E(f) = (e_1,\ldots,e_s)$.
	The algorithm {\tt Build}$(f)$ works as follows:

	\begin{enumerate}[\; \bf Step 1.]
		\item We take $t$ the minimum value such that $f$ satisfies a SDE$(t, 0)$ and compute explicitly one of these SDE.
	
		\item Consider $B := \{(x-b_i)^{d_i}\, \vert \, 1 \leq i \leq l\}$, the set of all the solutions of the SDE of the form $(x-b)^e$
		with $ (t+1)^2/2 \leq e \leq {\rm deg}(f) + \frac{({\rm deg}(f)+2)^2}{8}$ and assume that $d_1 \geq d_2 \geq \cdots \geq d_l \geq d_{l+1} := (t+1)^2/2$.
		
		\item We take $r \in \{1,\ldots,l\}$ such that $d_{r} - d_{r+1} > r^2/2$ and $d_{r+1} < {\rm deg}(f)$.
		
		\item We set $j := d_{r} - (r^2/2)$ and express $f^{(j)}$ as $f^{(j)} = \sum_{i = 1}^r \beta_i \frac{d_i !}{(d_i - j)!} (x-b_i)^{d_i - j}$
		with $\beta_1,\ldots,\beta_r \in \mathbb F$. We set $I := \{i \, \vert \, \beta_i \neq 0\}$.		
					 
		\item We set $\widetilde{f} := \sum_{i = 1}^{r} \beta_i (x-b_i)^{d_i}$ and $h := f - \widetilde{f}$. 
		
			If $h = 0$, then $C(f) = (\beta_i \, \vert  \, i \in I), \, N(f) = (b_i \, \vert  \, i \in I)$ and $E(f) = (d_i \, \vert  \, i \in I)$. 

			Otherwise, we set $h := f - \widetilde{f}$ and we have that $C(f) = (\beta_i \, \vert  \, i \in I)\, \cup \, C(h), \, N(f) = (b_i \, \vert  \, i \in I)\, \cup \, N(h)$
			and $E(f) = (d_i \, \vert \, i \in I)\, \cup \, E(h)$, where the triplet $(C(h),N(h),E(h))$ is the output of {\tt Build}$(h)$.
	\end{enumerate}
\end{th-algorithm}
\begin{proof}
	By Corollary \ref{uniquenessExpression} we have that $\AffPow_{\FF}(f) = s$. Concerning the algorithm, first we observe that the value $t$ computed in {\bf Step 1} 
	is $\leq 2s-1$ by Proposition \ref{smallequation}. Moreover, we claim that the
	set $B$ computed in {\bf Step 2} has $l \leq t$ elements. Otherwise, by Lemma \ref{dimsolution}, 
	there exists a set $I \subseteq \{1,\ldots,l\}$ of size $\leq t+1$ 
	and there exist
		$\{\gamma_i \, \vert \, i \in I\} \subseteq \FF \setminus \{0\}$ 
	such that
	$\sum_{i \in I} \gamma_i (x-b_i)^{d_i} = 0$. Setting $m := {\rm max}\{d_i\, \vert \, i \in I\} \geq (t+1)^2/2$,
	Proposition \ref{LowerBound} yields that $t+1 \geq \vert I \vert > \sqrt{2(m+1)} > t+1$, a contradiction.

	Now we set $L := 5s^2/ 2$ and consider the set $C := \{(x-a_i)^{e_i} \, \vert \, e_i \geq L\}$  where the $a_i$'s are the nodes in the optimal expression of $f$.
	We have that $C \not= \emptyset$; indeed, if we set
	 $e_{\max} := \max\{e_i \,\vert \, 1 \leq i \leq s\}$, then $s = n_{e_{\max}} \leq (3e_{\max}/4)^{1/3}-1$ and $L  \leq 4(s+1)^3/3  \leq e_{\max}.$ 
	 
	Since 
	$$ t s + \binom{s}{2} \leq (2s-1)s + \binom{s}{2} \leq 5s^2/2,$$
	Proposition \ref{bigexponents} yields that all the elements of $C$ are solution of the SDE and, by Corollary \ref{upperbounddegreeterms} we
	know that $e_i \leq {\rm deg}(f) + \frac{({\rm deg}(f)+2)^2}{8}$ for all $i \in \{1,\ldots,s\}$. Hence $C \subseteq B$. In particular, there exists
	a $\tau \in \{1,\ldots,l\}$ such that $d_1 \geq d_{\tau} = e_{\max} \geq \frac{4}{3} (s+1)^3$.

	Now we take $k := {\max}\{i\, \vert \, d_i > L\}$ (we have that $1 \leq k \leq l \leq t \leq 2s-1$) and we are going to prove that 
	\begin{itemize}
		\item there exists $r \in \{\tau,\ldots,k-1\}$ such that $d_r - d_{r+1} > r^2/2$, or
		\item $d_k - L >  k^2/2$.
	\end{itemize} 
	Indeed, if this is not the case, then we get the following contradiction:
		$$
			\begin{array}{rl} \frac{4s^3}{3} \leq  & \frac{4(s+1)^3}{3} - L \leq e_{\max} - L = d_{\tau} - L = \sum_{i = \tau}^{k-1} (d_i - d_{i+1}) + d_k -  L \leq \\
			\leq & \frac{1}{2}\sum_{i=\tau}^k i^2 \leq  \frac{1}{2}\sum_{i=1}^k i^2  = \frac{k(k+1)(2k+1)}{12} \leq  \frac{(2s-1) 2s (4s-1)}{12} < \frac{4s^3}{3}. \end{array} 
		$$
	
	We take $r \in \{1,\ldots,k-1\}$ such that $d_r - d_{r+1} > r^2/2$, or $r = k$ if such a value does not exist (and $d_k - L > k^2/2$).  We claim that $e_{\max} \geq d_r$
	if and only if  $d_{r+1} < {\rm deg}(f)$ and, thus, the $r$ described in {\bf Step 3} always exists. If $d_{r+1} < {\rm deg}(f)$, since ${\rm deg}(f) \leq e$
	and $C \subseteq B$, then $d_r \leq e_{\max}$ 
	(since $e_{\max} = d_{\tau}$, it cannot be sandwiched between two consecutive elements $d_r,d_{r+1}$ of this sequence).

	Conversely,
	assume now that $e_{\max} \geq d_r$ and let us prove that $d_{r+1} < {\rm deg}(f)$.  To prove this we first observe that
	setting $j := d_{r}-(r^2/2)$, then $f^{(j)}$ can be uniquely expressed as a linear combination of $B' := \{(x-b_i)^{d_i - j} \,\vert \,\, 1 \leq j \leq r\}$ .
	Indeed, $f^{(j)} = \sum_{e_i \geq j} \alpha_i \frac{e_i !}{(e_i - j)!} (x-a_i)^{e_i-j}$ with
	$\alpha_i \not= 0$ and $(x-a_i)^{e_i-j} \in B'$ for all $e_i \geq j$, and if there is another way of expressing $f^{(j)}$ as a linear 
	combination of $B'$, then by Proposition \ref{LowerBound} we get that $r > \sqrt{2 ({\rm min}\{d_i\, 
	\vert \, 1 \leq i \leq r\} - j + 1)} \geq \sqrt{r^2 + 2} > r$, a contradiction. So, if $d_{r+1} \geq {\rm deg}(f)$,
	then $f^{(j)} = 0$ and the only expression of $f^{(j)}$ as a linear
	combination of $B'$ would be the one in which every coefficient is $0$, a contradiction.
	Hence, the value $r$ computed in {\bf Step 3} exists.
	
	We have seen that 
	$f^{(j)}$ can be uniquely expressed as a linear combination of $B'$ as $f^{(j)} = \sum_{e_i \geq j} \alpha_i \frac{e_i !}{(e_i - j)!} (x-a_i)^{e_i-j}$.
	Hence, in {\bf Step 4}, one finds all the $(\alpha_i, a_i, e_i)$ such that $e_i \geq j$. In {\bf Step 5}, either the polynomial $h$ is $0$ and we have finished
	or $h = \sum_{e_i < j} \alpha_i (x-a_i)^{e_i}$ is written as a linear combination  of strictly less than $s$ terms  and satisfies the hypotheses of the Theorem, so by induction we are done. 
\end{proof}

{  Note that in Step~2 of this algorithm we need to compute polynomial roots,
just as in the corresponding step of Theorem~\ref{th-algbigexp} (see the proof 
of Theorem~\ref{th-algbigexp} for details). One difference, however, 
is that we do not use the roots $b_i$ only to output the coefficients 
of the optimal decomposition: we also use the $b_i$ in the subsequent iterations of the algorithm since the polynomials $\tilde{f}$ and $h$ of Step~5 are defined in terms of the $b_i$, and we call the algorithm recursively on input~$h$.
From this discussion one might be lead to think that if $f$ 
has its coefficients in a subfield $\KK$ of $\FF$, the coefficients of 
$\tilde{f}$ and $h$ may lie outside $\KK$. We show in Proposition~\ref{Galois} that this is not the case: $\tilde{f}$ 
and $h=f-\tilde{f}$ always lie in $\KK[X]$. We do not know if $\tilde{f}$ can be computed from $f$ with a polynomial number of arithmetic operations and 
comparisons (in the words of Section~\ref{modcomp}, this would be a way 
to eliminate root finding from the ``internal working'' of the algorithm).}

\begin{proposition} \label{Galois}
	Let $\KK$ be a subfield of $\FF$. Let $f \in \KK[x]$ be a polynomial that can be expressed in the $\AffPow_\FF$ model as
		\[ f = \sum_{i = 1}^s \alpha_i (x-a_i)^{e_i} {\text \ with \ } \alpha_i, a_i \in \FF, \]
	and $n_e = \#\{i : e_i \leq e\} \leq \sqrt{\frac{e+1}{2}}$ for all $e \in \NN$.
	Then, for all $m, M \in \NN$, the truncated expression 
		\[ \widetilde{f} = \sum_{m \leq e_i \leq M} \alpha_i (x-a_i)^{e_i} \]
	belongs to $\KK[x]$. In particular, whenever  $f \in \FF[x]$ satisfies the hypotheses
	of Theorem \ref{diffnodesimproved} and $f \in \KK[x]$, then the polynomial $\widetilde{f}$ computed in {\bf Step 5} of the algorithm also belongs to $\KK[x]$.
\end{proposition}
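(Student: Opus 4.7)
My plan is Galois-theoretic and rests on the uniqueness granted by Corollary~\ref{uniquenessExpression}. First, I fix an algebraic closure $\Omega$ of $\FF$ and consider an arbitrary $\KK$-algebra automorphism $\sigma$ of $\Omega$. Since $f\in\KK[x]$ we have $\sigma(f)=f$, and applying $\sigma$ coefficient-wise to the given decomposition yields
\[
f \;=\; \sum_{i=1}^{s} \sigma(\alpha_i)\bigl(x-\sigma(a_i)\bigr)^{e_i}.
\]
This is another decomposition of $f$ in $\AffPow_{\Omega}$ whose multiset of exponents, and hence whose sequence $n_e$, is unchanged, so Corollary~\ref{uniquenessExpression} (whose proof via Proposition~\ref{LowerBound} is valid over any field of characteristic zero) forces it to coincide with the original one up to reindexing. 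Concretely, there exists a permutation $\pi_\sigma$ of $\{1,\ldots,s\}$ with $\sigma(\alpha_i)=\alpha_{\pi_\sigma(i)}$, $\sigma(a_i)=a_{\pi_\sigma(i)}$ and $e_{\pi_\sigma(i)}=e_i$ for all $i$.

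The exponent-preserving property of $\pi_\sigma$ then implies that it stabilises the index set $I=\{i:m\le e_i\le M\}$, and therefore
\[
\sigma(\widetilde{f})\;=\;\sum_{i\in I}\sigma(\alpha_i)\bigl(x-\sigma(a_i)\bigr)^{e_i}\;=\;\sum_{i\in I}\alpha_{\pi_\sigma(i)}\bigl(x-a_{\pi_\sigma(i)}\bigr)^{e_{\pi_\sigma(i)}}\;=\;\widetilde{f}.
\]
Since this holds for every $\sigma\in\operatorname{Aut}(\Omega/\KK)$, each coefficient of $\widetilde{f}\in\Omega[x]$ lies in the fixed subfield $\Omega^{\operatorname{Aut}(\Omega/\KK)}$. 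In characteristic zero this fixed subfield is exactly $\KK$: an algebraic element outside $\KK$ is moved by some element of $\operatorname{Gal}(\overline{\KK}/\KK)$, while a transcendental over $\KK$ is moved by suitably permuting a transcendence basis of $\Omega/\KK$, both types of automorphism extending to all of $\Omega$. Hence $\widetilde{f}\in\KK[x]$, which is the first assertion. The final sentence then follows by checking that the hypothesis $n_i\le(3i/4)^{1/3}-1$ of Theorem~\ref{diffnodesimproved} is stronger than $n_e\le\sqrt{(e+1)/2}$ (routine for $i\ge 2$, and at $i\in\{0,1\}$ both sides are trivial once all exponents are large), so that the first part applies to the polynomial $\widetilde{f}$ produced in Step~5.

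The one delicate point is ensuring that Corollary~\ref{uniquenessExpression} genuinely extends from $\FF$ to the possibly larger field $\Omega$. This is essentially automatic because the corollary's hypothesis is a combinatorial condition on the exponents alone (and $\sigma$ fixes these), and because its proof goes through Proposition~\ref{LowerBound}, which is stated for arbitrary characteristic zero fields. The rest of the argument is just Galois-theoretic bookkeeping, with no computation in the coefficients themselves.
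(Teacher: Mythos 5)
Your argument is correct and follows essentially the same Galois-theoretic route as the paper's proof: apply a $\KK$-automorphism to the decomposition, use the uniqueness from Corollary~\ref{uniquenessExpression} to deduce that the triplets are merely permuted with exponents preserved, and conclude $\sigma(\widetilde{f})=\widetilde{f}$ for every $\sigma$. The one variation is that you work with the full group $\operatorname{Aut}(\Omega/\KK)$ for an algebraic closure $\Omega$ of $\FF$, whereas the paper first observes from the uniqueness that the $\alpha_i,a_i$ must be algebraic over $\KK$ and then works inside the finite Galois extension $\TT$ they generate, so that only classical finite Galois theory is needed; your variant is also sound, though it relies on the (slightly less elementary, still standard) fact that the fixed field of $\operatorname{Aut}(\Omega/\KK)$ equals $\KK$ even when $\Omega/\KK$ has transcendentals, and when justifying this you should replace ``permuting a transcendence basis'' (which does nothing if the basis is a singleton) by an explicit automorphism such as $\alpha\mapsto\alpha+1$ on a transcendence basis element, extended to all of $\Omega$.
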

\begin{proof} 
	By Corollary \ref{uniquenessExpression}, we know that $\AffPow_\FF(f) = s$ and, hence, $\alpha_i, a_i$ are algebraic over $\KK$.
	We denote by $\TT$ the splitting field of the minimal polynomials of all the
	$\alpha_i, a_i$ over $\KK$ (i.e., the smallest field $\TT$ such that $\KK(\alpha_i,a_i) \subset \TT$ and $\KK \subset \TT$ is normal).
	Since $\KK$ is of characteristic $0$ (and, thus, the
	extension $\KK \subset \TT$ is separable), then $\KK \subset \TT$ is a Galois extension. 
	
	Take now $\sigma$ any element of the Galois group of the extension $\KK \subset \TT$.
	Since $f \in \KK[x]$, if we apply $\sigma$ to $f$ we obtain that $f = \sigma(f)  = \sum_{i = 1}^s \sigma(\alpha_i) (x- \sigma(a_i))^{e_i}$.
	Moreover, by Corollary \ref{uniquenessExpression},
	we know that $\AffPow_\TT(f) = s$ and $f$ has a unique optimal expression in the $\AffPow_\TT$ model, then $\{(\alpha_i, a_i, e_i) \, \vert \, 1 \leq i \leq s\} =
		\{(\sigma(\alpha_i), \sigma(a_i), e_i) \, \vert \, 1 \leq i \leq s\}. $
	 In particular, for every $e \in \NN$, we have that 
	 	\begin{equation}\label{permut} \{(\alpha_i, a_i, e_i) \, \vert \, e_i = e\} =
		\{(\sigma(\alpha_i), \sigma(a_i), e_i) \, \vert\, e_i = e\}.\end{equation}
	Now, we consider $\widetilde{f} = \sum_{m \leq e_i \leq M} \alpha_i (x-a_i)^{e_i}$,  by (\ref{permut}) we get that 
		\[ \sigma(\widetilde{f}) = \sum_{m \leq e_i \leq M} \sigma(\alpha_i) (x-\sigma(a_i))^{e_i} = 
		\sum_{m \leq e_i \leq M} \alpha_i (x-a_i)^{e_i} = \widetilde{f}.\]
	Summarizing, if we denote $\widetilde{f} = \sum_{i = m}^M f_i x^i \in \TT[x]$,
	we have proved that $\sigma(f_i) = f_i$ for every $i \in \{0,\ldots,M\}$ and every $\sigma$ in the Galois group of the extension $\KK \subset \TT$. This 
	 proves (see, e.g., \cite[Theorem 7.1.1]{Cox}) that $f_i \in \KK$ for all $i \in \{0,\ldots,M\}$ and $\widetilde{f} \in \KK[x]$.

\end{proof}

{ 

We define the size of the set of triplets $\{(\alpha_i,a_i,e_i) \, \vert \, 1 \leq i \leq s\} \subset \ZZ \times \ZZ \times \NN$
as $\sum_{i = 1}^s [1 + \log_2(1 + |a_i|) + \log_2(1 + |\alpha_i|) + e_i]$. As mentioned in the introduction,
it is not clear that the size of the output of the algorithm proposed in Theorem \ref{diffnodesimproved} is polynomially bounded 
in the input size (i.e., in the bit size of $f$ given as a sum of monomials). However, it is straightforward to check that
the input size is polynomially bounded by the output size. Indeed, the degree of $f$ is upper bounded by the maximum value of the $e_i$ and 
every coefficient of $f$ can be seen as the evaluation
of a small polynomial in the $\alpha_i, a_i$'s. In the following result we prove that the algorithm works in polynomial time in 
the size of the output. Hence, a positive answer to Question \ref{questinputoutput} together with Corollary \ref{upperbounddegreeterms} would directly
yield that the algorithm works in polynomial time (in the size of the input).

\begin{proposition}\label{polyoutput}
Let $f \in \ZZ[x]$ be written as 
	$$f = \sum_{i = 1}^s \alpha_i (x - a_i)^{e_i}$$ 
with $a_i \in \ZZ$, $\alpha_i \in \ZZ \setminus \{0\}$, $e_i \in \NN$
and {  assume} that this decomposition satisfies the conditions of Theorem~\ref{diffnodesimproved}:  the constants $a_i$ are all distinct, and  $n_i \leq (3i/4)^{1/3} - 1$ for all 
$i \geq 2$, where $n_i$ denotes the number of indices $j$ such that $e_j \leq i$. 

Then, the algorithm in Theorem \ref{diffnodesimproved} works
in polynomial time in the size of the output.\end{proposition}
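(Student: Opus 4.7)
The plan is to verify that, under the integrality hypotheses on the $a_i$ and $\alpha_i$, every quantity manipulated by the recursion of Theorem~\ref{diffnodesimproved} has bit size polynomial in $S=\sum_{i=1}^s[1+\log_2(1+|a_i|)+\log_2(1+|\alpha_i|)+e_i]$, and that each of the five steps can be carried out within that bit complexity. First I would bound the dense bit size of the input $f$: since $\deg(f)\leq e_{\max}\leq S$ and each coefficient of $f$ equals $\sum_i\alpha_i\binom{e_i}{k}(-a_i)^{e_i-k}$, its magnitude has logarithm $O(\mathrm{poly}(S))$. The same bound applies to every polynomial $h$ passed to a recursive call: by the correctness analysis of Theorem~\ref{diffnodesimproved}, $h=\sum_{e_i<j}\alpha_i(x-a_i)^{e_i}$ is a sub-sum of the original decomposition, so its dense size is again $O(\mathrm{poly}(S))$, and the recursion depth is at most $s$ since each call strictly decreases the number of terms.

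Next I would analyse the five steps in turn. Step~1 reduces to solving a linear system with $O(s^2)$ unknowns and $O(\deg f)$ equations, which is done in polynomial bit complexity by Remark~\ref{findSDE}; by Hadamard's inequality the minimal $\SDE(t,0)$ produced has coefficients of bit size $\mathrm{poly}(S)$. Step~2 reduces, for each candidate exponent $e$ in an interval of length $O(S^2)$, to substituting $(x-b)^e$ in this SDE and reading off the resulting polynomial identity in $b$; this gives a univariate polynomial in $b$ of degree $\leq 2s-1$ with coefficients of bit size $\mathrm{poly}(S)$, whose roots are the admissible nodes. Step~3 is an $O(s)$-time scan. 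Steps~4 and~5 amount to solving one more linear system for the $\beta_i$'s and then forming $\widetilde f$ and $h=f-\widetilde f$.

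The main obstacle is that some of the roots $b_i$ produced in Step~2 may be algebraic irrationals, so Step~4 a priori works in a nontrivial extension of $\mathbb{Q}$. I would resolve this by combining two ingredients. By Proposition~\ref{Galois}, Galois-conjugate contributions combine to give $\widetilde f\in\mathbb{Q}[x]$, and since the truly contributing $b_i$ lie among the integer $a_i$'s (the spurious irrational roots carry $\beta_i=0$), in fact $\widetilde f\in\ZZ[x]$ with dense size $O(\mathrm{poly}(S))$. One can therefore implement Step~2 by enumerating only integer roots via the rational root theorem and solve Step~4 entirely over $\mathbb{Q}$; alternatively one can keep the algebraic roots in an exact representation (minimal polynomial plus isolating interval), arithmetic on which is polynomial in the relevant bit sizes. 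Either way every intermediate quantity has polynomial bit size, each of the five steps runs in $\mathrm{poly}(S)$ time, and multiplying by the $O(s)\leq O(S)$ recursion depth yields the claimed overall polynomial bound.
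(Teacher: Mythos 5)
Your argument follows the paper's proof closely: bound the dense bit size of $f$ polynomially in the output size, note that the recursive calls are made on sub-sums of the original decomposition so their output size stays bounded and the recursion depth is at most $s$, and check that each of the five steps runs in polynomial time, restricting Step~2 to integer roots so that the internal arithmetic stays over $\mathbb{Q}$. One technical slip: enumerating the integer roots ``via the rational root theorem'' is not polynomial time in the worst case, since the constant term may have superpolynomially many divisors; the paper instead invokes the LLL algorithm \cite{LLL}, which factors integer polynomials (and hence finds their integer roots) in polynomial time. With that replacement (or, as you also suggest, by keeping the algebraic roots in an exact representation), your proof is correct and takes essentially the same route as the paper's; the appeal to Proposition~\ref{Galois} is a harmless extra justification that the paper's more terse proof omits, since restricting to integer roots already keeps everything over $\ZZ$.
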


\begin{proof} 
We write $f = \sum_{j = 0}^d f_j x^j$ with $f_j \in \ZZ$ and $d = {\rm deg}(f) \leq {\rm max}\{e_1,\ldots,e_s\}$. 
We have that $f_j = \sum_{e_i \geq j} \alpha_i \binom{e_i}{j} a_i^{e_i-j}$ for all $j \in \{0,\ldots,d\}$. Thus, the size of $f$ is polynomially bounded by the size of the output.
To perform {\bf Step 1} we follow Remark \ref{findSDE}. We note that the coefficients of the polynomials appearing in the $\SDE$ are polynomially bounded by
the size of $f$. In {\bf Step 2}
we have to compute the integral roots of polynomials of degree $t \leq s$ with integral coefficients, which can also be done in polynomial time (see, e.g., \cite{LLL}).
{\bf Step 4} can also be performed in polynomial time by solving a linear system of equations (see, e.g., \cite[Corollary 3.3a]{Schrijver}) . The result follows from the fact that 
the polynomial $h$ defined
in {\bf Step 5} can be written as $h = \sum_{j \in J} \alpha_j (x-a_j)^{e_j}$ for some set $J \subset \{1,\ldots,s\}$ of at most $s-1$ elements.
{  After the first iteration, the algorithm is therefore 
 called recursively on polynomials $h$ with an output size bounded by the
output size of the original $f$.}
\end{proof} }

\section{Algorithms for repeated nodes} \label{repeatsec}

 This section is a continuation of the previous one and concerns the case where the nodes $a_i$ in the optimal expression of $f$ in the Affine Power model are not necessarily 
 different. The section is divided in two. In the first subsection we provide algorithms when 
all the exponents corresponding to a repeated node appear in a small interval.
The second one handles the case where the difference between 
two consecutive exponents corresponding to the same node is always large.

\subsection{Small intervals} \label{verybig}

We begin with the following result generalizing Proposition \ref{bigexponents}, which corresponds to $\delta = 0$.
\begin{proposition}\label{bigExpRepeatedNodesRootSDE}
	Let $\delta \in \NN^+$ and let $f \in \FF[x]$ be written as
	\[
		f = \sum_{i = 1}^t Q_i(x)\, (x-a_i)^{e_i},
	\]
	with distinct $a_i \in \FF$, $Q_i \in \FF[X]$ with $\deg(Q_i) \leq \delta$ and $e_i \in \NN$ for all $i$.
	Assume that $f$ satisfies the following SDE of parameters $k,l$: 
	\[
		\sum_{i = 0}^k P_i(x) f^{(i)}(x) = 0.
	\]
	If $e_i \geq  k + (t - 1)(k + l + \delta) + \binom{t}{2}$, then $Q_i(x)\,(x-a_i)^{e_i}$ satisfies the same SDE; as a consequence $P_k(a_i) = 0$.
\end{proposition}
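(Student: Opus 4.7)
My plan is to adapt the proof of Proposition~\ref{bigexponents}, extending it to accommodate the polynomial factors $Q_i$ of degree at most $\delta$. First I would observe that we may assume without loss of generality that $Q_i(a_i) \neq 0$ for all $i$: otherwise, the largest power of $(x-a_i)$ dividing $Q_i$ can be absorbed into the factor $(x-a_i)^{e_i}$, which only increases $e_i$ (preserving the hypothesis) and lowers $\deg Q_i$.

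For each $j \in \{1,\ldots,t\}$, I would define
\[
  f_j := \sum_{i=0}^{k} P_i(x)\,\bigl(Q_j(x)(x-a_j)^{e_j}\bigr)^{(i)}.
\]
Expanding $Q_j$ around $a_j$ gives $Q_j(x)(x-a_j)^{e_j} = \sum_{m=0}^{\delta} c_{j,m}(x-a_j)^{e_j+m}$, so each $f_j$ factors as $f_j = R_j(x)(x-a_j)^{e_j - k}$ with $\deg R_j \leq k + l + \delta$ (since $\deg f_j \leq e_j + l + \delta$). Linearity of the SDE gives $\sum_j f_j = 0$. Suppose for contradiction that $Q_i(x)(x-a_i)^{e_i}$ does not satisfy the SDE; WLOG $i=1$, so $f_1 \neq 0$ and $-f_1 = \sum_{j \geq 2} f_j$. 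I would extract a maximal $\FF$-linearly independent subfamily $\{f_j : j \in J\}$ among the nonzero $f_j$ for $j\geq 2$, with $p := |J| \leq t-1$, write $-f_1 = \sum_{j \in J} \gamma_j f_j$, and apply Proposition~\ref{factorWronskian} to $\{\gamma_j f_j : j \in J,\ \gamma_j \neq 0\}$ with $Q_j := x - a_j$ and $d_j := e_j - k$. Since $a_1 \neq a_j$ for $j \in J$, the factored Wronskian is nonzero at $a_1$, which after invoking Lemma~\ref{lemma:WrMult} yields
\[
  e_1 - k \leq \mult[a_1]{f_1} \leq (p-1) + p(k+l+\delta) + \binom{p}{2}.
\]
Substituting $p \leq t-1$ and using the elementary identity $\binom{t}{2} - \binom{t-1}{2} = t-1 > t-2$, this contradicts the hypothesis $e_1 - k \geq (t-1)(k+l+\delta) + \binom{t}{2}$, so $Q_i(x)(x-a_i)^{e_i}$ must satisfy the SDE.

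For the final claim $P_k(a_i) = 0$: since $Q_i(x)(x-a_i)^{e_i}$ solves the SDE, I would expand it as $\sum_m c_{i,m}(x-a_i)^{e_i+m}$, apply Leibniz's rule to each derivative, divide through by $(x-a_i)^{e_i-k}$, and evaluate at $x = a_i$; every term with $j < k$ or with $m \geq 1$ retains a positive power of $(x-a_i)$ and vanishes, leaving only $P_k(a_i)\,Q_i(a_i)\,e_i!/(e_i-k)! = 0$, which forces $P_k(a_i) = 0$ by the preliminary reduction $Q_i(a_i) \neq 0$. The main obstacle I anticipate is the careful bookkeeping when invoking Proposition~\ref{factorWronskian}: one must verify the multiplicity hypothesis $d_j \geq p$ for each $j \in J$ (which may require further restricting $J$ to indices with $e_j \geq k+p$ and arguing separately that the other $f_j$'s cannot spoil the argument) and track the extra $\delta$ term through all the degree bounds so that the final constant matches $k + (t-1)(k+l+\delta) + \binom{t}{2}$ exactly.
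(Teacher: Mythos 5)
Your proof is correct and follows essentially the same route as the paper's: define $f_j = \sum_i P_i (Q_j(x)(x-a_j)^{e_j})^{(i)}$, use linearity of the SDE to write $-f_1 = \sum_{j\geq 2} f_j$, pass to an independent subfamily, and apply Proposition~\ref{factorWronskian} to bound $\mult[a_1]{f_1} = e_1 - k$, then derive the contradiction with the exponent lower bound. Your WLOG reduction to $Q_i(a_i)\neq 0$ and the Leibniz-expansion-plus-evaluation argument for $P_k(a_i)=0$ are minor streamlined reformulations of the paper's factorization argument, and the concern you raise about $d_j\geq p$ in Proposition~\ref{factorWronskian} is the same one the paper dispatches with the remark that the result still holds when some $Q_i(x)=1$.
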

\begin{proof}
	We take $i = 1$. We assume that $e_1 \geq  k + (t - 1)(k + l + \delta) + \binom{t}{2}$
	and that $f$ satisfies a $\SDE(k,l)$
	\[
		\sum_{i = 0}^k P_i(x) f^{(i)}(x) = 0
	\]
	By contradiction, suppose that $Q_1(x)(x-a_1)^{e_1}$ does not satisfy this equation.
	For every $j \in \{1, \ldots, t\}$, we denote by $g_j$ and $R_j$ the polynomials such that
	\[
		g_j = \sum_{i = 0}^k P_i(x)\, (Q_j(x)(x - a_j)^{e_j})^{(i)} = R_j(x)(x - a_j)^{d_j}
	\]
	where $d_j := \max\{0,e_j - k\}$ for all $j$, and with $\deg R_j \leq  \delta + e_j + l - d_j \leq k + l + \delta$.
	We have the equality
	\begin{equation} \label{eq1}
		-g_1 = \sum_{i = 2}^t g_i \neq 0.
	\end{equation}
	We consider a linearly independent subfamily of $g_2,\ldots,g_t$, namely $\{g_j \, \vert \, j \in J\}$ with
	$J = \{j_1,\ldots,j_p\} \subseteq \{2,\ldots,t\}$. 
	Then by Proposition \ref{factorWronskian} we have that
	\[
		\begin{array}{cl} e_1-k = d_1  \leq \mult[a_1]{g_1} & \leq p - 1 + \sum_{j \in J}{\rm deg}(R_j) + p(p-1) - \binom{p}{2} \\ & \leq p-1 + (k+l + \delta)p + \binom{p}{2}. \end{array}
	\] 
	Taking into account that $p \leq t-1$, we finally obtain the inequality
	\[
		e_1 - k \leq t - 2 + (t - 1)(k + l + \delta) + \tbinom{t - 1}{2},
	\]
	which yields a contradiction.
	
	Now, we take $l_1 \geq e_1$ and $R_1 \in \FF[x]$ such that $(x-a_1)^{l_i}\, R_1(x) = (x-a_1)^{e_1}\, Q_1(x)$ and $R_1(a_1) \neq 0$.
	Since $(x-a_1)^{e_1}\, Q_1(x)$ is a solution of the $\SDE$, we have that:
 	\[
 		\sum_{i = 0}^k P_i(x)\, ((x-a_1)^{l_1}\, R_1(x))^{(i)} = 0,
 	\]
 	we deduce that there exists $q \in \FF[x]$ such that $P_k(x) (x-a_1)^{l_1-k} h(x) = (x-a_1)^{l_1-k+1} q(x)$, from where we deduce that $P_k(a_1) = 0$.
\end{proof}

\bigskip

From Proposition \ref{bigExpRepeatedNodesRootSDE} we shall now derive Corollary \ref{corBigExpRepeatedNodesRootSDE} and Theorem \ref{th-algVeryLargeExp}. They provide an effective
method to obtain the optimal expression of a polynomial $f$ in the Affine Power model whenever 
all the exponents corresponding to a repeated node are required to 
lie in a small interval.

\begin{corollary}\label{corBigExpRepeatedNodesRootSDE}
	Let $\delta \in \ZZ^+$ and let $f \in \FF[x]$ be a polynomial written as
	\[
		f = \sum_{i = 1}^t Q_i(x)\, (x-a_i)^{e_i},
	\]
	where:
	\begin{itemize} 
		\item $Q_i(x) = \sum_{j = 1}^{s_i} \gamma_{i,j} (x-a_i)^{\epsilon_{i,j}} \in \FF[x]$ with $\gamma_{i,j} \neq 0$ and 
		$0 = \epsilon_{i,0} < \epsilon_{i,1} < \cdots < \epsilon_{i,s_i} \leq \delta$,
		\item the $a_i$'s are elements of $\FF$ and are all distinct, and
		\item $e_i \geq 5 t^2 (\delta+1)^2 / 2$ for all $i$.
	\end{itemize}
	Then, 
	\begin{enumerate}[\quad a)]
		\item the set of polynomials $\{Q_i(x)\, (x-a_i)^{e_i} \, \vert \, 1 \leq i \leq t\}$ is linearly independent,

		\item $\AffPow_\FF(f) = \sum_{i = 1}^t s_i$ and the optimal representation of $f$ is unique,

		\item $f$ satisfies a $\SDE(2t-1,\delta)$,
		
		\item if $f$ satisfies the $\SDE(k,\delta)$
			$$\sum_{i = 0}^k P_i(x) f^{(i)}(x) = 0$$
		and $k \leq 2t-1$; then $Q_i(x) (x-a_i)^{e_i}$ also satisfies it and $P_r(a_i) = 0$ for all $i \in \{1,\ldots,s\}$, and
		
		\item $f$ does not satisfy any $\SDE(k,\delta)$ with $k < t$.
	\end{enumerate}
\end{corollary}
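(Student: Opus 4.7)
The plan is to mirror the template of Corollary~\ref{bigexpCor}, adapted to the present setting where each summand $Q_i(x)(x-a_i)^{e_i}$ is itself a small sum of affine powers at the node $a_i$. I would establish the five assertions in the order (b), (a), (c), (d), (e), since (a) follows from (b) and (e) depends on (a) and (d).

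For (b), expanding each $Q_i(x) = \sum_j \gamma_{i,j}(x-a_i)^{\epsilon_{i,j}}$ rewrites $f$ as an $\AffPow_\FF$ expression of size $s := \sum_i s_i$ in which all exponents $e_i + \epsilon_{i,j}$ are at least $5t^2(\delta+1)^2/2$. Since $s \leq t(\delta+1)$, the counting function $n_e$ is zero for $e$ below the minimum exponent and at most $s$ otherwise; a short check gives $t(\delta+1) \leq \sqrt{(5t^2(\delta+1)^2/2 + 1)/2}$, so $n_e \leq \sqrt{(e+1)/2}$ holds for every $e$. Corollary~\ref{uniquenessExpression} then yields $\AffPow_\FF(f) = s$ together with the uniqueness of the optimal representation. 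For (a), any nontrivial linear relation among the $t$ polynomials $Q_i(x)(x-a_i)^{e_i}$ would let us eliminate one of them, say the $j$-th, producing an $\AffPow_\FF$ expression of $f$ with at most $s - s_j < s$ terms and contradicting (b).

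Part (c) is a direct application of Proposition~\ref{smallequation} with parameters $t$ and $\delta$. The heart of the argument is (d): I would invoke Proposition~\ref{bigExpRepeatedNodesRootSDE}, whose hypothesis demands $e_i \geq k + (t-1)(k+l+\delta) + \binom{t}{2}$. Substituting $k \leq 2t-1$ and $l = \delta$ and simplifying, this bound is at most $5t^2/2 + 2\delta t - 3t/2 - 2\delta$, which in turn is at most $5t^2(\delta+1)^2/2$ for all $t \geq 1$, $\delta \geq 0$. The hypothesis on the $e_i$ is therefore met, and Proposition~\ref{bigExpRepeatedNodesRootSDE} delivers both that each $Q_i(x)(x-a_i)^{e_i}$ satisfies the same SDE and that $P_k(a_i) = 0$ (so the statement's $P_r$ should read $P_k$).

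For (e), suppose for contradiction that $f$ satisfies some $\SDE(k,\delta)$ with $k < t$. Since $k < t \leq 2t-1$, part (d) applies and shows that each of the $t$ polynomials $Q_i(x)(x-a_i)^{e_i}$ is a solution of this SDE. By (a) they are linearly independent, contradicting Lemma~\ref{dimsolution}, which bounds the dimension of the polynomial solution space by $k$. The main obstacle is the constant-tracking in step (d): the threshold $5t^2(\delta+1)^2/2$ must simultaneously dominate the bound $k + (t-1)(k+l+\delta) + \binom{t}{2}$ from Proposition~\ref{bigExpRepeatedNodesRootSDE} (for the SDE transfer) and also be large enough that the dense set of $s \leq t(\delta+1)$ exponents fits inside the regime $n_e \leq \sqrt{(e+1)/2}$ required for uniqueness in (b); the quadratic growth in $\delta$ is what makes both inequalities compatible.
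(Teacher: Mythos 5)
Your proof is correct and follows essentially the same route as the paper: part (b) via a Proposition~\ref{LowerBound}-type uniqueness argument (you route through Corollary~\ref{uniquenessExpression}, the paper invokes Proposition~\ref{LowerBound} directly, but the former is derived from the latter), (a) from (b), (c) from Proposition~\ref{smallequation}, (d) from Proposition~\ref{bigExpRepeatedNodesRootSDE} after verifying the same inequality $k + (t-1)(k+l+\delta) + \binom{t}{2} \leq 5t^2(\delta+1)^2/2$ with $k \leq 2t-1$, $l = \delta$, and (e) from (a), (d) and Lemma~\ref{dimsolution}. You are also right that ``$P_r$'' in the statement of (d) is a typo for ``$P_k$'' (and ``$i \in \{1,\ldots,s\}$'' should be ``$i \in \{1,\ldots,t\}$'').
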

\begin{proof}
	Notice that (b) implies (a). To prove (b), we observe that $f$ is written as
		$$f = \sum_{i = 1}^t \sum_{j = 1}^{s_i} \gamma_{i,j}\, (x-a_i)^{e_i + \epsilon_{i,j}},$$
	so $\AffPow_\FF(f) \leq \sum_{i = 1}^t s_i$. Now assume that $f$ can also be expressed as
		$f = \sum_{i = 1}^r \beta_i (x-b_i)^{d_i}$ with $\beta_i \in \FF$ and $r \leq \sum_{i = 1}^t s_i \leq t (\delta + 1)$. By Proposition \ref{LowerBound}
	we get that either both expressions are the same, or 
		$$2t(\delta + 1) \geq r + \sum_{i = 1}^t s_i \geq \sqrt{2 ({\rm min}\{e_1,\ldots,e_t\} + 1)} >\sqrt{ 5 t^2 (\delta + 1)^2},$$
	which is not possible. Thus $\AffPow_\FF(f) = \sum_{i = 1}^t s_i$ and the optimal representation of $f$ is unique.

	From Proposition \ref{smallequation} we get (c). If $f$  satisfies a $\SDE(k,d)$ with $k \leq 2t-1$,
	then for all $i \in \{1,\ldots,t\}$ we have that
	\begin{equation} \label{ineqBigRepeated}
		\begin{array}{ll} e_i &\geq \frac{5}{2} t^2 (\delta+1)^2 > \frac{5}{2}t^2 - \frac{3}{2}t + 2t \delta - 2\delta  \\
			&= 2t-1 + (t - 1)\left(2t-1+2\delta \right) + \binom{t}{2} \\ & \geq k + (t - 1)\left(k+2\delta \right) + \binom{t}{2}. \end{array}
	\end{equation} 
	Hence, Proposition \ref{bigExpRepeatedNodesRootSDE} yields that $Q_i(x) (x-a_i)^{e_i}$ is also a solution of this
	equation for all $i$, proving (d). Finally, $f$ cannot satisfy a $\SDE(k,\delta)$ with $k < t$; otherwise by (a) and (d), the vector space of solutions to this equation 
	has  dimension $\geq t$, which contradicts Lemma \ref{dimsolution}.

	\end{proof}

\begin{th-algorithm}[repeated nodes in small intervals]\label{th-algVeryLargeExp}

	Let $\delta \in \ZZ^+$ and let $f \in \FF[x]$ be a polynomial of degree $d$ that can be written as
	\[
		f = \sum_{i = 1}^t Q_i(x)\, (x-a_i)^{e_i},
	\]
	with
	\begin{itemize} 
		\item $Q_i(x) = \sum_{j = 1}^{s_i} \gamma_{i,j} (x-a_i)^{\epsilon_{i,j}} \in \FF[x]$ with $\gamma_{i,j} \neq 0$ and 
		$0 = \epsilon_{i,0} < \epsilon_{i,1} < \cdots < \epsilon_{i,s_i} \leq \delta$,
		\item the $a_i$'s are elements of $\FF$ and are all distinct, and
		\item $e_i \geq  \frac{5}{2}t^2(\delta+1)^2$ for all $i$.
	\end{itemize}
	Then $\AffPow_\FF(f) = \sum_{i = 1}^t s_i$. Moreover, there is a polynomial time algorithm  {\tt Build}$(f,\delta)$ that
	receives $f = \sum_{i = 0}^d f_i x^i \in \FF[x]$ and $\delta$ as input
	and computes the $t$-tuples of nodes $N(f) = (a_1,\ldots, a_t)$, the values $s_1,\ldots,s_t$ and the tuple of coefficients
	$C(f) = ( \gamma_{i,j} \, : \, 1 \leq i \leq t, \, 1 \leq j \leq s_i)$,  and exponents $E(f) = (e_i + \epsilon_{i,j}\, : \, 1 \leq i \leq t, \, 1 \leq j \leq s_i)$. 
	The algorithm {\tt Build}$(f,\delta)$ works as follows:

	\begin{enumerate}[\; \bf Step 1.]
		\item Take $r$ the minimum value such that $f$ satisfies a $\SDE(r, \delta)$. Compute explicitly one of these SDE, i.e., compute $P_0,\ldots,P_r \in \FF[x] $ such that
		$\sum_{i = 0}^r P_i(x) f^{(i)}(x) = 0$ and $\deg(P_i) \leq i + \delta$.

		\item Compute $\mathcal R = \{c_1,\ldots,c_p\} \subseteq \FF$ the set of roots of $P_r$. 
		
		For each $i \in \{1,\ldots,p\}$, consider the $\FF$-vector space 
		$V_{i}$ spanned by the solutions of the $\SDE$ of the form $R(x) (x - c_i)^e$,
		with $\frac{(r+1)^2(\delta+1)^2 }{2} < e < d + \frac{r^2(\delta+1)^2}{2}$ and $R(x)$ a polynomial of degree $\leq \delta$. 
		
		We take $B_i = \{g_{i,1},\ldots,g_{i,l_i}\}$ a base of $V_i$, where $g_{i,j} = R(x) (x-c_i)^e$ with $\frac{(r+1)^2(\delta+1)^2 }{2} < e <
		d + \frac{r^2(\delta+1)^2}{2}$ and ${\rm deg}(R(x)) \leq \delta$. We set $B := \cup_{i = 1}^p B_{i}$.
		
		\item Express $f$ as a linear combination of the elements of $B$, namely, $f = \sum_{i = 1}^p \sum_{j = 1}^{l_i}
		\lambda_{i,j}\, g_{i,j}$ with $\lambda_{i,j} \in \FF$.
		
		\item Denote $f_i = \sum_{j = 1}^{l_i} \lambda_{i,j}\, g_{i,j}$, for all $i \in \{1,\ldots,p\}$. Write $f_i$ in the shift $c_i$, i.e.,
		$f_i = \sum_{j = 1}^{r_i} \beta_{i,j}(x - c_i)^{\mu_{i,j}}$
		with $\beta_{i,j} \in \FF \setminus \{0\}$.
		
		\item Output $N(f) = (c_1,\ldots,c_p)$, $r_1,\ldots,r_p \in \NN$, $C(f) = (\beta_{i,j} \,\vert \, 1 \leq i \leq p, 1 \leq j \leq r_i)$ and
		$E(f) = (\mu_{i,j} \,\vert \, 1 \leq i \leq p, 1 \leq j \leq r_i).$
	\end{enumerate}
\end{th-algorithm}
\begin{proof}
	We observe that $f$ satisfies the hypotheses of Corollary \ref{corBigExpRepeatedNodesRootSDE}; then, by Corollary \ref{corBigExpRepeatedNodesRootSDE}.(b),
	we have that $\AffPow_\FF(f) = \sum_{i = 1}^t s_i$ and that there is a unique optimal expression of $f$ in the $\AffPow$ model. 
	
	Let us prove the correctness of the algorithm {\tt Build}$(f,\delta)$. By Corollary \ref{corBigExpRepeatedNodesRootSDE}.(c), (d) and (e), the value $r$ computed in
	{\bf Step 1} satisfies that $t \leq r \leq 2t-1$.
	For all $i \in \{1,\ldots,t\}$ we have that
	\begin{itemize} 
		\item $a_i \in \mathcal R$, and 
		\item $Q_i(x)\, (x-a_i)^{e_i}$ is a solution of the $\SDE$ computed in {\bf Step 1} 
	\end{itemize} 
	
	Moreover, the input polynomial $f$ can be expressed as a linear combination of the elements of $B$, because:
	\begin{itemize}
		\item $f$ can be written as a combination of $Q_i(x)\, (x-a_i)^{e_i}$.
		\item Since $\AffPow_\FF(f) = \sum_{i = 1}^t s_i$ and $s_i \leq \delta$, we have $\AffPow_\FF(f) \leq t(\delta + 1) \leq r(\delta + 1)$ and thus using Corollary \ref{upperbounddegreeterms} we have $\max\{e_i\} < d + \frac{r^2(\delta + 1)^2}{2}$.
		On the other hand, we have $e_i \geq \frac{5}{2} t^2 (\delta+1)^2 > 2 t^2 (\delta+1)^2 \geq \frac{(r+1)^2 (\delta+1)^2}{2}$.
		This implies that $Q_i(x)\, (x-a_i)^{e_i}$ belongs to $V_i$ and thus can be written as a linear combination of the elements of $B_i$.
	\end{itemize}
	
	So, let us assume (we will prove it later) that all the elements of $B$ are linearly independent. Then, in {\bf Step 3} there is a unique way of writing of
	$f$ as a linear combination of the elements of $B$. Finally, it suffices to  write $f_i = R_i(x) (x-c_i)^{d_i}$ and consider the Taylor expansion of $R_i(x)$ with respect to $c_i$
	for every $i \in \{1,\ldots,p\}$ as in {\bf Step 4} to get the desired sets of nodes, coefficients and exponents.

	To prove the correctness of the algorithm, it only remains to prove that the elements of $B$ are linearly independent. To prove this we will follow a similar argument
	to that of Propposition \ref{LowerBound}. Assume that the elements of $B$ are not linearly independent. We take $W = \{P_{i}(x) (x-b_i)^{d_i} \, \vert \, 1 \leq i \leq w\} \subset B$ a minimal $\FF$-linearly dependent set. 
	By Lemma \ref{dimsolution}, the size of this set is $w \leq r+1 \leq 2t$. Then, there exist $\lambda_1,\ldots,\lambda_w \in \FF \setminus \{0\}$
	such that $\sum_{i=1}^w \lambda_i P_{i}(x) (x-b_i)^{d_i} = 0$. We set $Z := \{i \, \vert \, b_i \neq b_1\}$.
	We also set $\tau := {\rm min}\{d_i \, \vert \, b_i = b_1\}$ and take $R(x)$ such that
		$$ R(x) (x-b_1)^{\tau} = \sum_{i \notin Z} - \lambda_i P_{i}(x) (x-b_1)^{d_i} = \sum_{i \in Z} \lambda_i P_{i}(x) (x-b_i)^{d_i}.$$
	We observe that $R(x) \neq 0$ because $\{P_{i}(x) (x-b_1)^{d_i} \, \vert \, i \notin Z\}$ is $\FF$-linearly independent.
	We assume that $Z = \{b_2,\ldots,b_{z+1}\}$ with $z \leq w-1 \leq r$. For all $j \in \{0,\ldots,z-1\}$, if we differentiate the above expression $j$ times we have that 
	\begin{equation}\label{cramerrule} R_j(x) (x-b_1)^{\tau-j} =  \sum_{i = 2}^{z+1} \lambda_i P_{i,j}(x) (x-b_i)^{d_i-j},
	\end{equation}
	where $P_{i,j}(x) \in \FF[x]$ are polynomials of degree $\leq \delta$. We set $g_i := P_i(x) (x-b_i)^{d_i}$ for all $i \in \{2,\ldots,z+1\}$ and 
	apply Cramer's rule to the system of equations (\ref{cramerrule}) to get that 
		$$ \lambda_1 = \frac{\Wr(R(x) (x-b_1)^{\tau}, g_3,\ldots,g_{z+1})}{\Wr(g_2,\ldots,g_{z+1})}$$.
		
	We observe that $$ \Wr(R(x) (x-b_1)^{\tau}, g_3,\ldots,g_z) =   (x-b_1)^{\tau - (z - 1)} \prod_{i=3}^{z+1} (x-b_i)^{d_i - (z-1)} W_1,$$ and  $$\Wr(g_2,\ldots,g_z) = 
	\prod_{i = 2}^{z+1} (x-b_i)^{d_i - (z-1)} W_2,$$
	where $W_2$ is a polynomial of degree $\leq z \delta + \frac{z (z-1)}{2}$.
	Thus, 
		$$(x-b_1)^{\tau - (z - 1)} W_1 = (x-b_2)^{e_2 - (z-1)} W_2$$ and, 
	 the multiplicity of $b_1$ in both sides of this expression has to be the same. Therefore, 
	 $\tau - (z-1) \leq \mult[b_1]{(x-b_1)^{\tau - (z - 1)}W_1} = \mult[b_1]{(x-b_2)^{e_2 - (z - 1)}W_2}  \leq {\rm deg}(W_2) \leq z \delta + \frac{z (z-1)}{2}$,
	 and $\tau \leq (z-1) + z \delta + \frac{z (z-1)}{2} \leq r - 1 + r \delta + \frac{r(r-1)}{2} \leq  \frac{(r+1)^2}{2} + r \delta \leq \frac{(r+1)^2  (\delta+1)^2}{2}$, a contradiction.

\end{proof}

\begin{remark} 
	The algorithm ${\tt Build}(f,\delta)$ described above can be slightly modified to not receive $\delta$ as input as long as $f$
	satisfies the hypotheses of Theorem \ref{th-algVeryLargeExp} for some $t , \delta \in \ZZ^+$. That is, we only need to assume that there exists $\delta \in \ZZ^+$ such that
		\[
		f = \sum_{i = 1}^t Q_i(x)\, (x-a_i)^{e_i},
		\]
	where $Q_i(x)\in \FF[x]$ has degree $\leq \delta$, the $a_i$'s are distinct elements of $\FF$, and
	$e_i \geq \frac{5}{2}t^2(\delta + 1)^2$ for all $i$. Indeed, it suffices to start with $\delta = 0$ and 
execute ${\tt Build}(f, \delta)$ with increasing values of $\delta$ 
until the reconstruction of $f$ succeeds.
The correctness
	of this algorithm is justified by Corollary \ref{corBigExpRepeatedNodesRootSDE}.(b). In fact, once we find~$\delta$ such that the reconstruction is possible, 
	we obtain the optimal expression of $f$ in the Affine Power model.
\end{remark}

\subsection{Big gaps} 

This subsection deals with polynomials $f$ such that whenever the terms $(x-a)^e$ and $(x-a)^d$ appear in the optimal expression of $f$ in the
Affine Power model, then the difference between $d$ and $e$ is ``large''. 
Similarly to Section~\ref{verybig}, we begin with some
results ensuring that whenever $f$ satisfies
a $\SDE$, then so do some of its terms in the optimal expression of $f$ in the Affine Power model. The desired algorithm then follows as a consequence 
of these results.

\begin{proposition}\label{repeatedNodesRootSDE}
	Let $f \in \FF[x]$ be written as
	\[
		f = (x - a)^{m} g(x) + \sum_{i = 1}^s \alpha_i (x-a)^{e_i} + \sum_{i = 1}^{p} \beta_i (x - a_i)^{d_i},
	\]
	with $g \in \FF[x], a, a_i, \alpha_i, \beta_i \in \FF$, $m, e_i, d_i \in \NN$ and $a_i \neq a$ for all $i$.
	We set $e := {\rm max}\{e_1,\ldots,e_s\}$ if $s \geq 1$ or $e := -1$ if $s = 0$.
	Whenever $f$ satisfies a $\SDE(k,l)$ with $m - e > p + (k+l)(p+1) + \binom{p}{2}$, then $(x-a)^m g$
	satisfies the same SDE.
\end{proposition}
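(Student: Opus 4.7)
My plan is to prove the proposition by contradiction, extending the Wronskian-based techniques used in the proofs of Propositions~\ref{bigexponents} and~\ref{bigExpRepeatedNodesRootSDE}. Let $L$ denote the operator $L(h) := \sum_{i=0}^k P_i(x)\, h^{(i)}(x)$, and suppose for contradiction that $L((x-a)^m g) \neq 0$. The hypothesis on $m$ guarantees $m \geq k$, so one may write $L((x-a)^m g) = (x-a)^{m-k}\, \tilde{r}(x)$ for some nonzero polynomial $\tilde{r}$. Set $\tilde{G} := L\bigl(\sum_{i=1}^s \alpha_i (x-a)^{e_i}\bigr)$, which is a polynomial of degree at most $e + l$ since $L$ applied to a polynomial of degree $\le e$ produces one of degree $\le e + l$; and set $\tilde{h}_j := L(\beta_j (x-a_j)^{d_j})$ for $1 \le j \le p$. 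The identity $L(f) = 0$ then rewrites as
\[
\tilde{G}(x) + \sum_{j=1}^p \tilde{h}_j(x) \;=\; -(x-a)^{m-k}\, \tilde{r}(x),
\]
whose right-hand side has multiplicity at least $m - k$ at $a$.

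To contradict this lower bound, I would bound the multiplicity of $a$ in the left-hand side from above using Proposition~\ref{factorWronskian}. A direct computation (as in the proof of Proposition~\ref{bigexponents}) factors each $\tilde{h}_j$ as $\beta_j (x-a_j)^{\max(0,\, d_j - k)}\, T_j(x)$ with $\deg T_j \leq k + l$. I would then take a maximal linearly independent subfamily $\mathcal{F}$ of $\{\tilde{G}, \tilde{h}_1, \ldots, \tilde{h}_p\}$, of some size $n \leq p + 1$, and express the left-hand side as a linear combination of the elements of $\mathcal{F}$. Applying Proposition~\ref{factorWronskian} to the members of $\mathcal{F}$ (with the coefficients of the linear combination absorbed into the cofactors), using $Q_j = (x - a_j)$ for each $\tilde{h}_j$ in $\mathcal{F}$ (valid because $a_j \neq a$ ensures $Q_j(a) \neq 0$) and $Q = 1$ for $\tilde{G}$, each individual contribution to the sum is at most $k + l + n - 1$ for a $\tilde{h}_j$ and at most $e + l$ for $\tilde{G}$. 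In the worst case $n = p + 1$ (all elements independent), this yields
\[
m - k \;\leq\; p + p\,(k + l + p) + (e + l) - \tbinom{p+1}{2},
\]
which simplifies to $m - e \leq p + (k+l)(p+1) + \binom{p}{2}$, contradicting the hypothesis.

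The main technical subtlety I foresee will be applying Proposition~\ref{factorWronskian} uniformly when the condition ``$d_j - k \geq n$'' required for the choice $Q_j = (x - a_j)$ fails for some $\tilde{h}_j$ in $\mathcal{F}$; in that situation I would use $Q_j = 1$ instead, making the cofactor equal to $\tilde{h}_j$ itself, whose degree is at most $d_j + l < k + l + n$, so the same uniform per-summand bound still applies. The sub-cases where $\tilde{G}$ is linearly dependent on the $\tilde{h}_j$'s (in particular the degenerate case $s = 0$, where $\tilde{G} = 0$) yield a strictly smaller subfamily ($n \leq p$), which makes the derived inequality only tighter and preserves the contradiction.
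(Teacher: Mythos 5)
Your proof is correct and uses the same key lemma (Proposition~\ref{factorWronskian} via the Wronskian multiplicity bound) and the same overall contradiction strategy as the paper, but you handle the ``repeated-shift'' terms $\sum_i \alpha_i(x-a)^{e_i}$ differently. The paper first differentiates the identity $T(x)(x-a)^{m-k} = -\sum_i \alpha_i h_i - \sum_j \beta_j g_j$ exactly $e+l+1$ times, which annihilates the $h_i$'s (each of degree $\le e+l$) and reduces the problem to bounding the multiplicity of $a$ in a combination of the $g_j^{(e+l+1)}$'s alone; the family size is then at most $p$. You instead collect all the $h_i$'s into the single polynomial $\tilde{G}$ and include it in the Wronskian family with the trivial factor $Q=1$, so the family size is at most $p+1$, and you pay for $\tilde{G}$ a per-summand contribution of $\deg \tilde{G}\le e+l$ rather than $(k+l)+(n-1)$. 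Tracing through the arithmetic, the two routes yield exactly the same final inequality $m-e\le p+(k+l)(p+1)+\binom{p}{2}$: the extra term in your family (a cost of roughly $e+l$) precisely offsets the $(e+l+1)$-fold differentiation in the paper's version. The advantage of the paper's route is that after differentiating, all remaining terms have the uniform factored structure $U_i(x)(x-a_i)^{f_i}$; the advantage of yours is that it avoids the extra differentiation step and you are somewhat more explicit (than the paper) about the $Q_j=1$ fallback when $d_j-k<n$, a degenerate case the paper passes over silently. Both arguments also tacitly use that the multiplicity bound from Proposition~\ref{factorWronskian} is monotone in the family size $n$, so taking the worst case $n=p+1$ (resp.\ $n=p$) suffices; your proposal addresses this explicitly, which is good practice.
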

\begin{proof}
	Assume that $f$ satisfies a $\SDE(k,l)$
	\[
		\sum_{i = 1}^k P_i(x) f^{(i)}(x) = 0
	\]	
	By contradiction, we assume that $(x - a)^m g(x)$ does not satisfy this equation. Thus, there exists $T(x) \in \FF[x]$ nonzero such 
	that 	$\sum_{i = 0}^k P_i(x)\, ((x-a)^m g)^{(i)} = T(x) (x - a)^{m-k}$.
	For every $j \in \{1, \ldots, s\}$ and every $j \in \{1,\ldots,p\}$, we denote by $h_j$ and $g_j$ the polynomials such that
	\[
		h_j = \sum_{i = 0}^k P_i(x)\, ((x-a)^{e_j})^{(i)}
		\quad \text{ and } \quad 
		g_j = \sum_{i = 0}^k P_i(x)\, ((x-a_j)^{d_j})^{(i)}.
	\]
	We observe that $\deg(h_j) \leq e_j + l \leq e + l$ and $\deg(g_j) \leq d_j + l$. Since $f$ satisfies the already mentioned $\SDE$, we get that 
	\[
T(x) (x - a)^{m-k} = \sum_{i = 1}^s \alpha_i h_i + \sum_{i = 1}^p \beta_i g_i.	\]
	If we differentiate  $(e + l + 1)$ times on both sides of the previous equation, we obtain an equality of the following form
	\[
U(x) (x - a)^{m-k-e-l-1} = \sum_{i = 1}^p \beta_i g_i^{(e + l + 1)} 			= \sum_{i = 1}^p U_i(x) (x - a_i)^{r_i}
	\]
	with $r_i := \max\{0, d_i-k-e-l-1\}$ and $\deg(U_i(x)) \leq k + l$.
	If we take a linearly independent family $\{g_i^{(e + l + 1)} \, : \, i \in I\} \subseteq \{g_i^{(e + l + 1)} \, : \, i \in \{1,\ldots,p\}\}$ 
	and compute the multiplicity of $a$ on both sides of the previous equality using Proposition \ref{factorWronskian}, we obtain that
	\[
		m - k - e - l - 1 \leq  p - 1 + (k+l)p + (p-1)p - \binom{p}{2},
	\]
	which yields that
	\[
		m  - e \leq  p + (k+l)(p + 1)  + \binom{p}{2},
	\]
	a contradiction.
\end{proof}

\medskip

The following result is a generalization of Proposition \ref{bigexponents} where we allow repeated nodes provided their corresponding exponents are far enough.

\begin{corollary} \label{biggapsolution}
	Let $f \in \FF[x]$ be written as
	\[
		f = \sum_{i = 1}^s \alpha_i (x-a)^{e_i} + \sum_{i = 1}^{p} \beta_i (x - a_i)^{d_i},
	\]
	with $a, a_i, \alpha_i, \beta_i \in \FF$, $m, e_i, d_i \in \NN$, $a_i \neq a$ for all $i$ and $e_s > \dots > e_1 > e_0 := -1$.
	Assume that $f$ satisfies a $\SDE(k,l)$ and that $e_{i+1} - e_i > p + (k+l)(p+1) + \binom{p}{2}$ for all $i$,
	then $(x-a)^{e_i}$ satisfies the same $\SDE$ for all $i \in \{1,\ldots,s\}$.
\end{corollary}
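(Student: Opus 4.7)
The plan is to reduce Corollary~\ref{biggapsolution} to Proposition~\ref{repeatedNodesRootSDE} by a descending induction on~$s$, peeling off the highest-exponent term $(x-a)^{e_s}$ at each step.

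The base case $s=0$ is vacuous. For the inductive step, I rewrite $f$ by singling out the top power:
\[
f \;=\; (x-a)^{e_s}\cdot \alpha_s \;+\; \sum_{i=1}^{s-1} \alpha_i\,(x-a)^{e_i} \;+\; \sum_{i=1}^{p} \beta_i\,(x-a_i)^{d_i},
\]
so that in the notation of Proposition~\ref{repeatedNodesRootSDE} we take $m:=e_s$, $g(x):=\alpha_s$ (a constant polynomial), the ``medium'' part having $s-1$ powers of $(x-a)$ with maximal exponent $e_{s-1}$, and the remaining $p$ terms unchanged. The hypothesis of the corollary gives
\[
m - e_{s-1} \;=\; e_s - e_{s-1} \;>\; p + (k+l)(p+1) + \binom{p}{2},
\]
which is exactly the gap condition required by Proposition~\ref{repeatedNodesRootSDE}. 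Applying that proposition, $\alpha_s (x-a)^{e_s}$ satisfies the same $\SDE(k,l)$ as $f$; since we may assume $\alpha_s\neq 0$ (otherwise the term was absent to begin with), so does $(x-a)^{e_s}$ itself.

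Next I set $\widetilde{f} := f - \alpha_s (x-a)^{e_s}$. By linearity $\widetilde{f}$ satisfies the same SDE, and
\[
\widetilde{f} \;=\; \sum_{i=1}^{s-1} \alpha_i\,(x-a)^{e_i} \;+\; \sum_{i=1}^{p} \beta_i\,(x-a_i)^{d_i}.
\]
The gap condition $e_{i+1}-e_i > p + (k+l)(p+1) + \binom{p}{2}$ still holds for all $i\in\{0,\dots,s-2\}$ (it is a condition on the same exponents, which remain in $\widetilde{f}$), so the induction hypothesis applies to $\widetilde{f}$ and yields that $(x-a)^{e_i}$ satisfies the SDE for every $i\in\{1,\dots,s-1\}$, completing the induction.

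No real obstacle arises: the content is entirely packaged in Proposition~\ref{repeatedNodesRootSDE}, and the only thing to check is that the ``$g$'' of that proposition can be chosen to be the constant $\alpha_s$ and that the gap condition on the exponents translates correctly between the two statements. The one minor point to mention in the write-up is the reduction to the case $\alpha_s\neq 0$, which is harmless: if some $\alpha_i=0$ there is no content to the claim for that index, and the remaining terms still satisfy the gap hypothesis.
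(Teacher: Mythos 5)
Your proof is correct and takes essentially the same approach as the paper: both arguments reduce to Proposition~\ref{repeatedNodesRootSDE} combined with the fact that the SDE's solution set is closed under linear combinations. The only cosmetic difference is that you run a direct downward induction, peeling off $\alpha_s(x-a)^{e_s}$ with $g$ a constant, while the paper argues by contradiction, picking the largest ``bad'' exponent $e$ and taking $g$ to absorb all the terms with exponent $\geq e$; both choices satisfy the gap hypothesis of Proposition~\ref{repeatedNodesRootSDE} for exactly the same reason.
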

\begin{proof}
	Assume that there exists an $e_i$ such that $(x-a)^{e_i}$ does not satisfy the $\SDE(k,l)$ and we take $e$ the maximum of such $e_i$.
	Then, we can write $f(x) = g(x) (x-a)^e + \sum_{e_i < e} \alpha_i (x-a)^{e_i} +  \sum_{i = 1}^{p} \beta_i (x - a_i)^{d_i}$. By means of Proposition \ref{repeatedNodesRootSDE}
	we have that $g(x) (x-a)^e$ is a solution of the same $\SDE$. Moreover, for all $e_i > e$, then $(x-a)^{e_i}$ is also a solution of the $\SDE$. But this is not possible
	since the set of solutions is a vector space, and, hence, $(x-a)^e$ would also be a solution to the same $\SDE$.
\end{proof}

\medskip

The proof of the following Corollary is similar to that of Corollary \ref{bigexpCor} but makes use of Corollary \ref{biggapsolution} instead of Proposition \ref{bigexponents}.

\begin{corollary}\label{biggapsCor}
	Let $f \in \FF[x]$ be a polynomial that can be written as 
	\[
		f = \sum_{i = 1}^s \alpha_i (x-a_i)^{e_i}
	\]
	with $a_i, \alpha_i \in \FF$, $e_i > 5 s^2 / 2$ and, whenever $a_i = a_j$ for some $1 \leq i < j \leq s$,
	then $|e_{i} - e_j| > 5s^2 / 2$.
	\begin{enumerate}[\quad a)]
		\item $\{(x-a_i)^{e_i}\, \vert \, 1 \leq i \leq s\}$ are linearly independent,

		\item If $f = \sum_{i = 1}^t \beta_i (x - b_i)^{d_i}$ with $t \leq s$, then $t = s$ and we have the equality
	$\{(\alpha_i,a_i,e_i) \, \vert \, 1 \leq i \leq s\} = \{(\beta_i,b_i,d_i) \, \vert \, 1 \leq i \leq s\}$; in particular,
		$\AffPow_\FF(f) = s$,
	
		\item $f$ satisfies a $\SDE(2s-1,0)$,
		
		\item if $f$ satisfies a $\SDE(k,0)$ with $k \leq 2s-1$, then $(x-a_i)^{e_i}$ also satisfies it for all $i \in \{1,\ldots,s\}$, and
		
		\item $f$ does not satisfy any $\SDE(k,0)$ with $k < s$.
	\end{enumerate}
\end{corollary}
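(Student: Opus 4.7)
My plan is to follow the structure of the proof of Corollary \ref{bigexpCor} exactly, with Corollary \ref{biggapsolution} in place of Proposition \ref{bigexponents} when verifying part (d). As in that corollary, (b) immediately implies (a): any linear dependence among $\{(x-a_i)^{e_i}\}$ would give an expression of $f$ with fewer than $s$ terms, contradicting (b).

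The substantive step is (b). Suppose $f = \sum_{j=1}^{t} \beta_j(x-b_j)^{d_j}$ with $t \leq s$; without loss of generality all $\beta_j \neq 0$ and the pairs $(b_j, d_j)$ are distinct. The gap hypothesis on repeated nodes forces the pairs $(a_i,e_i)$ of the given expression to be distinct as well (two equal nodes cannot share an exponent since their exponents would differ by more than $5s^2/2 > 0$). Subtracting the two decompositions and merging common $(\text{node},\text{exponent})$ pairs yields an identity $\sum_k \gamma_k(x-c_k)^{f_k} = 0$ with distinct $(c_k,f_k)$ and each $\gamma_k \neq 0$. If this identity is trivial, the two decompositions agree term-by-term and we are done. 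Otherwise, at least one triple $(a_i, e_i, \alpha_i)$ fails to match any $(b_j, d_j, \beta_j)$, so the exponent $e_i > 5s^2/2$ survives; hence the top exponent $D$ in the identity satisfies $D > 5s^2/2$. Moving the $f_k = D$ terms to one side (their nodes are distinct and their coefficients nonzero) and invoking Proposition \ref{LowerBound} gives
\[
2s \;\geq\; s+t \;>\; \sqrt{2(D+1)} \;>\; \sqrt{5s^{2}},
\]
which is absurd since $\sqrt{5} > 2$. This establishes $\AffPow_\FF(f) = s$.

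For (c), (d), (e) I proceed as in Corollary \ref{bigexpCor}. Part (c) is Proposition \ref{smallequation} applied with $\delta = 0$. For part (d), I fix an index $i$ and regroup $f$ according to whether the node equals $a_i$: Corollary \ref{biggapsolution} then applies with shift $l = 0$, order $k \leq 2s-1$, and ``$p$'' at most $s-1$, provided the gap bound $p + (k+l)(p+1) + \binom{p}{2} \leq \tfrac{5s^2}{2} - \tfrac{3s}{2}$ holds between each consecutive pair of exponents sharing the node $a_i$. The hypothesis of our corollary (consecutive exponents at equal nodes differ by more than $5s^2/2$, and each $e_i > 5s^2/2$ supplies the initial gap from $e_0 := -1$) beats this bound comfortably. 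Finally, (e) follows from (a), (d), and Lemma \ref{dimsolution}: any SDE of order $k < s$ satisfied by $f$ would admit a solution space of dimension at least $s$ (containing the $s$ linearly independent powers), but also of dimension at most $k < s$, a contradiction.

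The main obstacle is the bookkeeping in (b), specifically ensuring that some exponent greater than $5s^2/2$ must survive the merging step. The gap hypothesis plays a double role here: it forces the given $(a_i,e_i)$ pairs to be distinct, so no silent cancellation happens when we merge within the given expression, and via the gap condition it makes Corollary \ref{biggapsolution} applicable in part (d).
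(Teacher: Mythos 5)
Your proof is correct and follows exactly the route the paper intends: the paper states only that the proof ``is similar to that of Corollary~\ref{bigexpCor} but makes use of Corollary~\ref{biggapsolution} instead of Proposition~\ref{bigexponents},'' and that is precisely what you carry out, with the right bookkeeping for repeated nodes in (b) (distinctness of the pairs $(a_i,e_i)$ via the gap hypothesis, survival of some exponent $>5s^2/2$ after merging) and the correct arithmetic check $p+(k+l)(p+1)+\binom{p}{2}\leq\tfrac{5s^2-3s}{2}<\tfrac{5s^2}{2}$ in (d).
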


From this corollary we get the following result whose proof is similar to that of Theorem \ref{th-algbigexp}.

\begin{th-algorithm}[Big gaps] \label{th-algbiggaps}
	Let $f \in \FF[x]$ be a polynomial that can be written as 
	\[
		f = \sum_{i = 1}^s \alpha_i (x-a_i)^{e_i}
	\]
	with $a_i, \alpha_i \in \FF$, $e_i > 5 s^2/2$ and whenever $a_i = a_j$ for some $1 \leq i < j \leq s$,
	then $|e_{i} - e_j| > 5 s^2 / 2$.	Then, $\AffPow_\FF(f) = s$.
	Moreover, there is a polynomial time algorithm {\tt Build}$(f)$ that receives $f = \sum_{i = 0}^d f_i x^i \in \FF[x]$ as input and computes the $s$-tuples of nodes 
	$N(f) = (a_1,\ldots, a_s)$,  coefficients  $C(f) = (\alpha_{1},\ldots,\alpha_s)$ and exponents
	$E(f) = (e_1,\ldots,e_s)$. 
	The algorithm {\tt Build}$(f)$ works as follows:

	\begin{enumerate}[\; \bf Step 1.]
		\item Take $r$ the minimum value such that $f$ satisfies a $\SDE(r, 0)$ and compute explicitly one of these $\SDE$.

		\item Compute $B = \{(x-b_i)^{d_i}\, \vert \, 1 \leq i \leq t\}$, the set of all the solutions of the $\SDE$ of the form $(x-b)^d$ 
		with $(r+1)^2/2 \leq e \leq {\rm deg}(f) + (r^2/2)$. 

		\item Determine $\alpha_1,\ldots,\alpha_r$ such that $f = \sum_{i = 1}^r  \alpha_i (x-b_i)^{d_i}$

		\item Output the sets $C(f) = (\alpha_1,\ldots,\alpha_r), \, N(f) = (b_1,\ldots,b_r)$ and $E(f) = (d_1,\ldots,d_r)$. 

	\end{enumerate}
\end{th-algorithm}

\section{The multivariate case} \label{multisec}

This section concerns the study of the multivariate version of the Affine Power model, i.e., we study
expressions of a polynomial $f \in \FF[x_1,\ldots,x_n]$ as
\begin{equation} \label{multimodel2}
f = \sum_{i = 1}^s \alpha_i \ell_i^{\,e_i},
\end{equation}
where $e_i \in \NN$, $\alpha_i \in \FF$ and $\ell_i$ is a (non constant) linear form for all $i$.
We denote by $\AffPow_{\FF}(f)$ the minimum value $s$ such that there exists a representation of the previous form
with $s$ terms. We will study the uniqueness of optimal representations
and propose an algorithm for finding such representations.
{\em In this section only, we work in the black box model:}
we assume that our algorithm has access to $f$ only through a ``black box'' that
outputs $f(x_1,\ldots,x_n)$ when queried 
on an input $(x_1,\ldots,x_n) \in \FF^n$.
This very general model is standard for the study of many problems about 
multivariate polynomials such as, e.g., factorization~\cite{KT90}, sparse interpolation~\cite{benor88,GLL}, 
sparsest shift~\cite{GKL} or Waring decomposition~\cite{Kayal12}.
We also assume that our algorithm has access to $d=\deg(f)$;
the knowledge of an upper bound on $\deg(f)$ would in fact suffice.
As explained in the introduction, our algorithm proceeds by reduction 
to the univariate case: we solve $n$ univariate projections of the multivariate 
problem, and then ``lift'' them to a solution of the multivariate problem.
One (very) minor difficulty is that our univariate algorithms are presented
for polynomials given in dense representation rather than in black box representation. But it is easy to convert from
black box to dense representation:
\begin{remark} \label{blacktodense}
Suppose that we have black-box access to a polynomial $f(x_1,\ldots,x_n)$ of degree $d$. We can obtain the dense representation of the univariate polynomial
$f_1(x_1)=f(x_1,0,0,\ldots,0)$ by querying $f$ 
on $d+1$ distinct inputs of the form $(a_i,0,\ldots,0)$ and interpolating $f_1$
from its values at $a_0,\ldots,a_d$.
\end{remark}
In our algorithm we perform a random change of coordinates before projecting
to a univariate problem. Converting to dense representation in this case
is hardly more difficult:
\begin{remark} \label{blacktodense2}
Suppose that we have black-box access to a polynomial $f(x_1,\ldots,x_n)$ of degree $d$. Let $g(x)=f(\Lambda.x+\lambda)$, 
where $\lambda = (\lambda_1,\ldots,\lambda_n) \in \FF^n$ and $\Lambda=(\lambda_{ij})$ is an $n \times n$ matrix.

We can obtain the dense representation of the univariate polynomial
$g_1(x_1)=g(x_1,0,0,\ldots,0)=f(\lambda_{11}x_1+\lambda_1,\lambda_{21}x_1+\lambda_2,\ldots,\lambda_{n1}x_1+\lambda_n)$ by evaluating $g_1$ at $d+1$ points 
and interpolating from those values.
Equivalently, we can observe that a black-box for $g$ can be constructed from the black box for $f$, and we can therefore apply Remark~\ref{blacktodense} 
to~$g$.
\end{remark}

Having recalled these well-known facts, we proceed with uniqueness considerations. Strictly speaking the optimal expressions in model~(\ref{multimodel2}) are never unique since for all $\lambda \in \FF \setminus \{0\}$ we have
 $\alpha_i \ell_i^{\,e_i} = \beta_i t_i^{\,e_i}$ with $\beta_i := \alpha_i \lambda^{e_i}$ and $t_i := \ell_i/\lambda$. 
To deal with this ambiguity, 
we use the notion of {\em essentially equal} expressions.
Given $f$ we say that two expressions of $f = \sum_{i = 1}^s \alpha_i \ell_i^{\,e_i} =  \sum_{i = 1}^r \beta_i t_i^{\,d_i}$ are
essentially equal if $r = s$ and there exists a permutation $\sigma$ of $\{1,\ldots,s\}$ such that $\alpha_i \ell_i^{\,e_i} =
 \beta_{\sigma(i)} t_{\sigma(i)}^{\,d_{\sigma(i)}}$ for all $i \in \{1,\ldots,s\}$.
Likewise, we say that $f$ 
has an {\em essentially unique} optimal decomposition in
the  multivariate Affine Powers model if two optimal decompositions of $f$
are always essentially equal.
 
If the representation of $f = \sum_{i = 1}^s \alpha_i \ell_i^{\,e_i}$ is optimal, $\ell_i$ and $\ell_j$ cannot be proportional whenever $e_i = e_j$. Otherwise if $\ell_i = \lambda \ell_j$ with $\lambda \in \FF$, we can rewrite $\alpha_i \ell_i^{\,e_i} + \alpha_j \ell_j^{\,e_j} =
 (\lambda^{e_i} \alpha_i + \alpha_j) \ell_j^{\,e_j}$
 
 The following result provides a sufficient condition
 for $f$ to have an essentially unique optimal decomposition in the multivariate Affine Powers model. Indeed, it is an extension to the multivariate
 setting of Corollary \ref{uniquenessExpression}. 
 
\begin{proposition}\label{multiuniqueness}Let $f \in \FF[x_1,\ldots,x_n]$ be a polynomial of the form:
	\[
		f = \sum_{i=1}^s \alpha_i \ell_i^{\, e_i}
	\]
	where $\alpha_i \in \FF \setminus \{0\}$, the $\ell_i$ are non constant linear forms, and $\ell_i$ is not proportional to $\ell_j$ whenever $e_i = e_j$.
	For every $e \in \NN$ we denote by $n_e$ the number of exponents smaller than $e$, i.e., $n_e = \#\{i : e_i \leq e\}$.
	If $n_e \leq \sqrt{\frac{e+1}{2}}$ for all $e \in \NN$, then $\AffPow_\FF(f) = s$ and the optimal representation of
	$f$ is essentially unique.
\end{proposition}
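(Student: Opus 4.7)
The plan is to reduce to the univariate case via a generic linear substitution, apply Corollary~\ref{uniquenessExpression}, and lift the resulting equalities back to the multivariate setting. Suppose $f$ admits a second decomposition $f = \sum_{j=1}^r \beta_j m_j^{\,d_j}$ with $r \leq s$; after discarding zero terms and combining proportional ones of equal exponent, we may assume $\beta_j \neq 0$ for all $j$ and $m_{j_1} \not\propto m_{j_2}$ whenever $d_{j_1} = d_{j_2}$. The goal is to show $r = s$ and that the two decompositions are essentially equal.

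First I would choose $(a,b) \in \FF^n \times \FF^n$ satisfying simultaneously: $\ell_i(b) \neq 0$ and $m_j(b) \neq 0$ for every $i,j$; and the shifts $c_i := -\ell_i(a)/\ell_i(b)$ (resp.\ $c'_j := -m_j(a)/m_j(b)$) are pairwise distinct among indices with equal $e_i$ (resp.\ equal $d_j$). Each condition is the non-vanishing of a linear or bilinear form, nontrivial because $\ell_{i_1} \not\propto \ell_{i_2}$ when $e_{i_1} = e_{i_2}$ (by hypothesis), and similarly on the $m$-side after reduction. Since $\FF$ is infinite (char.\ zero), a common $(a,b)$ exists. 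Expanding $g(t) := f(a+bt)$ in both decompositions gives
\[ g(t) = \sum_{i=1}^s \alpha_i \ell_i(b)^{\,e_i}(t-c_i)^{\,e_i} = \sum_{j=1}^r \beta_j m_j(b)^{\,d_j}(t-c'_j)^{\,d_j}. \]
The first sum has $s$ nonzero terms and its exponent counts $n_e$ coincide with those of $f$, so Corollary~\ref{uniquenessExpression} yields $\AffPow_\FF(g) = s$ and uniqueness of this decomposition. Hence $r \geq s$, so $r = s$, and the two expressions of $g(t)$ agree up to a permutation $\sigma$ of $\{1,\ldots,s\}$ satisfying $e_i = d_{\sigma(i)}$, $c_i = c'_{\sigma(i)}$, and $\alpha_i \ell_i(b)^{\,e_i} = \beta_{\sigma(i)} m_{\sigma(i)}(b)^{\,e_i}$.

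The main obstacle is that $\sigma$ a priori depends on $(a,b)$, whereas essential equality requires a single $\sigma$ with $\ell_i \propto m_{\sigma(i)}$ as linear forms. To handle this I would argue globally. For each exponent-preserving permutation $\tau$ (i.e., $e_i = d_{\tau(i)}$ for all $i$), the locus $B_\tau \subseteq \FF^{2n}$ defined by the bilinear equations $\ell_i(a)m_{\tau(i)}(b) = \ell_i(b)m_{\tau(i)}(a)$ for all $i$ is Zariski-closed, and it equals all of $\FF^{2n}$ iff $\ell_i \propto m_{\tau(i)}$ for every $i$. By the previous paragraph every sufficiently generic $(a,b)$ lies in some $B_\tau$, so the finite union of these $B_\tau$'s contains a nonempty Zariski-open set; since $\FF$ is infinite, a finite union of proper Zariski-closed subsets cannot cover any nonempty Zariski-open set, so $B_\sigma = \FF^{2n}$ for at least one $\sigma$. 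Writing $\ell_i = \lambda_i m_{\sigma(i)}$ and substituting into the coefficient equality yields $\alpha_i \lambda_i^{\,e_i} = \beta_{\sigma(i)}$, hence $\alpha_i \ell_i^{\,e_i} = \beta_{\sigma(i)} m_{\sigma(i)}^{\,d_{\sigma(i)}}$. This establishes essential equality and forces $\AffPow_\FF(f) = s$.
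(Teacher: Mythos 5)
Your proof is correct and follows the same high-level strategy as the paper (specialize $f$ along a generic affine line, invoke Corollary~\ref{uniquenessExpression}, then lift), but it resolves the lifting step in a genuinely different way. The paper picks a single generic substitution $\varphi$ satisfying a \emph{mixed} condition --- its (1.b): for all pairs $i<j$ drawn from \emph{both} decompositions, $\varphi(\ell_i)\propto\varphi(\ell_j)$ iff $\ell_i\propto\ell_j$. With that condition in place, the proportionality $c_i/b_i=c_{i+s}/b_{i+s}$ obtained from the univariate matching lifts immediately to $\ell_i\propto\ell_{i+s}$, so a single substitution suffices. You instead impose genericity conditions on each decomposition separately (no cross-conditions between the $\ell_i$ and the $m_j$) and recover the global conclusion by a Zariski-density argument over the space of substitutions: the generic locus $U$ is covered by the finitely many closed loci $B_\tau$, so by irreducibility of $\FF^{2n}$ one $B_\sigma$ must be all of $\FF^{2n}$, which is exactly $\ell_i\propto m_{\sigma(i)}$ for every $i$. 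Both routes are valid; the paper's is more economical, while yours trades the verification of a mixed genericity condition for the covering argument.

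One point deserves to be made explicit in your last step. The coefficient equality $\alpha_i\ell_i(b)^{e_i}=\beta_{\sigma(i)}m_{\sigma(i)}(b)^{e_i}$ you ``substitute into'' was derived for the pointwise permutation $\sigma(a,b)$, whereas the $\sigma$ with $B_\sigma=\FF^{2n}$ is a fixed permutation chosen afterwards. To identify them you should observe that once $\ell_i\propto m_{\sigma(i)}$ and $e_i=d_{\sigma(i)}$ are known, the genericity conditions (distinct shifts among equal exponents on the $m$-side) force the univariate matching to be exactly $\sigma$ for every $(a,b)\in U$; alternatively, you can fold the coefficient equations $\alpha_i\ell_i(b)^{e_i}=\beta_{\tau(i)}m_{\tau(i)}(b)^{e_i}$ into the definition of $B_\tau$ so that $B_\sigma=\FF^{2n}$ delivers the coefficient identity directly. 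Either patch is routine, so I would not call this a gap in the approach, only a step that needs to be said out loud.
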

\begin{proof}
Let $r := \AffPow_{\FF}(f) \leq s$ and let $f = \sum_{i = s+1}^{s+r} \alpha_i \ell_i^{\, e_i}$ be an optimal representation of $f$. We write 
$\ell_i = \sum_{j = 1}^n a_{ij} x_j + a_{i0}$ for all $i \in \{1,\ldots,s+r\}$. 
Consider the ring homomorphism 
$\varphi: \FF[x_1,\ldots,x_n] \rightarrow \FF[x]$
induced by $x_i \mapsto \omega_i x + \lambda_i$ where $\omega = (\omega_1,\ldots,\omega_n), \lambda = (\lambda_1,\ldots,\lambda_n) \in \FF^n$.
 If we write $\varphi(\ell_i) = b_i x + c_i$, we choose
$\omega$ and $\lambda$ such that 
\begin{itemize}
\item[(1.a)]  $b_i \neq 0$ and $c_i \neq 0$ for all $i \in \{1,\ldots, r+s\}$, and
\item[(1.b)]  for all $1 \leq i < j \leq s+r$, $\varphi(\ell_i) = \mu \varphi(\ell_j)$ with $\mu \in \FF$
if and only if $\ell_i = \mu \ell_j$.
\end{itemize}
It is important to observe that a generic choice of $\omega, \lambda \in \FF^n$ fulfils these two conditions.
 Then 
$$\begin{array}{lllll} \varphi(f) & = & \sum_{i=1}^s \alpha_i \varphi(\ell_i)^{\, e_i} & = & \sum_{i=1}^s \alpha_i b_i^{\,e_i} (x + c_i/b_i)^{\,e_i} \\ & = & \sum_{i=s+1}^{s+r} \alpha_i \varphi(\ell_i)^{\, e_i} & = & \sum_{i=s+1}^{s+r}  \alpha_i b_i^{\,e_i} (x + c_i/b_i)^{\,e_i}. \end{array} $$
We consider the expression $\varphi(f) = \sum_{i=1}^s \alpha_i b_i^{\,e_i} (x + c_i/b_i)^{\,e_i}$ in the univariate Affine Power model.
By (1.b), whenever $e_i = e_j$ then $c_i/b_i \neq c_j/b_j$. Moreover it satisfies 
that $\{i \in \{1,\ldots,s\}: e_i \leq e\} = n_e \leq \sqrt{\frac{e+1}{2}}$ for all $e \in \NN$. Hence we apply Corollary \ref{uniquenessExpression} to get that
 $r \geq \AffPow_{\FF}(\varphi(f)) = s \geq r$ and that both expressions for $\varphi(f)$ are the same. After reindexing if necessary we get that
\begin{itemize}
\item[(2.a)] $\alpha_i b_i^{\, e_i} = \alpha_{i+s} b_{i+s}^{\, e_{i+s}}$,
\item[(2.b)] $c_i / b_i = c_{i+s} / b_{i+s}$,
\item[(2.c)] and $e_i = e_{i+s}$ for all $i \in \{1,\ldots,s\}$. 
\end{itemize} 
By (2.b) we have that $b_i x + c_i = \mu (b_{i+s}x + c_{i+s})$ with $\mu := b_i / b_{i+s}$. By (1.b) we have that
$\ell_i = \mu \ell_{i+s}$. Finally, by (2.a) and (2.c),  we conclude that $$\alpha_i \ell_i^{\, e_i} = \alpha_i \mu^{\, e_i} \ell_{i+s}^{\,e_i} = \alpha_{i+s} \ell_{i+s}^{\, e_{i+s}},$$
proving that the optimal representation of $f$ is essentially unique.
\end{proof}

Our next goal is to provide algorithms that, given black-box access  to a polynomial $f \in \FF[x_1,\ldots,x_n]$, compute
$s = \AffPow_\FF(f)$ and the terms $\alpha_i \ell_i^{\, e_i}$ for $i \in \{1,\ldots,s\}$ such that $f = \sum_{i = 1}^s \alpha_i \ell_i^{\,e_i}$.
We are going to prove a multivariate analogue of Theorem  \ref{diffnodesimproved} where the condition of "distinct nodes" is replaced by
"the $\ell_i$'s in the decomposition are not proportional". The same strategy that 
we are going to exhibit in the proof also applies to obtain similar results for the other algorithms of
sections 4 and 5.

\begin{th-algorithm} \label{multidiffnodes}
	Let $f \in \FF[x_1,\ldots,x_n]$ be a polynomial that can be written as 
	\[
		f = \sum_{i = 1}^s \alpha_i \ell_i^{\,e_i},
	\]
	where $\ell_i$ are nonconstant linear forms such that $\ell_i \neq \lambda \ell_j$ for all $\lambda \in \FF$, $1 \leq i < j \leq s$, $\alpha_i \in \FF \setminus \{0\}$, and  $e_i \in \NN$. 
Assume that $n_i \leq  (3i/4)^{1/3} - 1$ for all 
$i \geq 2$, where $n_i$ denotes the number of indices $j$ such that
 $e_j \leq i$. Then, $\AffPow_\FF(f) = s$. 

Moreover, there is a randomized algorithm {\tt MultiBuild}$(f)$ that, given access 
to a black box  for $f$ and to $d=\deg(f)$,
computes the set of terms $T(f) = \{\alpha_i \ell_i^{\, e_i} \, \vert \, 1 \leq i \leq s\}$. The algorithm {\tt MultiBuild}$(f)$ runs in time polynomial in $n$ and $d$, and works as follows:

	\begin{enumerate}[\; \bf Step 1.]
		\item We define $g := \phi(f)$ where $\phi$ is a random affine change of coordinates ($x_i \mapsto \sum_{j = 1}^n \lambda_{ij} x_j + \lambda_i$ for all $i$).
				
		\item For each $j \in \{1,\ldots,n\}$, we set $g_j := \pi_j(g)$ where $\pi_j: \FF[x_1,\ldots,x_n] \longrightarrow \FF[x]$ 
		is induced by $x_k \mapsto 0$ if $k \neq j$ and $x_j \mapsto x$. 
		
		We apply the algorithm {\tt Build}$(g_j)$ from Theorem \ref{diffnodesimproved}
to obtain $s_j := \AffPow_{\FF}(g_j)$
		and the triplets $(\beta_{ij}, b_{ij}, e_{ij}) \in \FF \times \FF \times \NN$ such that 
		$g_j = \sum_{i = 1}^{s_j} \beta_{ij} (x+b_{ij})^{e_{ij}}$.
		 
		If there exist $i,j$ such that $b_{ij} = 0$, then output 'It is not possible to reconstruct $f$'.
		Otherwise, for all $j$ we define the set of triplets
		$$P_j := \{(c_{ij}, p_{ij}, e_{i,j}) \, \vert \, c_{ij} := \beta_{ij} b_{ij}^{e_{ij}},\ p_{ij} := 1/b_{ij},\ 1 \leq i \leq s_i\}.$$

		\item If one of these conditions holds:
		\begin{enumerate}
			\item there exist $j_1 \neq j_2$ such that $s_{j_1} \neq s_{j_2}$, 
			\item there exist $i_1 \neq i_2$ and $j$ such that $c_{i_1j} = c_{i_2j}$, or
			\item there exist $i, j$ such that for all $i'$, $c_{i1} \neq c_{i'j}$ or $e_{i1} \neq e_{i'j}$;
		\end{enumerate}
		then output: 'It is not possible to reconstruct $f$'. Otherwise we set $s := s_1 = s_2 = \cdots = s_r$ and reorder the elements of $P_2,\ldots,P_n$ so that
		$c_i := c_{i1} = c_{i2} = \cdots = c_{in}$ and $e_i := e_{i1} = e_{i2} = \cdots = e_{in}$ for all $i \in \{1,\ldots,s\}$.		
		
		\item $g = \sum_{i = 1}^s c_i (1 + \sum_{j = 1}^n p_{ij} x_j)^{e_i}$, so we output $f = \sum_{i = 1}^s c_i (\phi^{-1}(1 + \sum_{j = 1}^n p_{ij} x_j))^{e_i}$
	\end{enumerate}
	
	If the $\lambda_i$'s and the $\lambda_{ij}$'s needed to define $\phi$ are chosen uniformly at random from a finite set $S$, then the probability of success of the algorithm is at
	least $$1 - \frac{d^{2/3}(2n+d)}{|S|}.$$
\end{th-algorithm}

\begin{proof}
	The input polynomial $f$ satisfies the hypotheses of Proposition \ref{multiuniqueness}, so $\AffPow_{\FF}(f) = s$ and the optimal representation of $f$
	is essentially unique. 
	
	After applying a random $\phi$ as described in {\bf Step 1}, with high probability\footnote{A detailed probabilistic analysis is performed at the end of this proof.} we have that $\phi$ is invertible and 
  	$g = \sum_{i = 1}^s \alpha_i t_i^{\, e_i}$ with $t_i = \sum_{j = 1}^n a_{ij} x_j + a_{i0}$ satisfies the following properties:
  	\begin{itemize}
  		\item[(i)] $a_{ij} \neq 0$ for all $i,j$.
  		\item[(ii)] for all $j \neq 0$, then $a_{ij}/a_{i0} \neq a_{i'j}/a_{i'0}$ for all $i,i'$, and
  		\item[(iii)] $\alpha_i a_{i0}^{e_i} \neq \alpha_{i'}a_{i'0}^{e_{i'}}$ for all $i \neq i'$.
  	\end{itemize} 
  	It is important to observe that for a generic choice of the $\lambda_i$'s and $\lambda_{ij}$'s involved in the definition of $\phi$, these conditions will be fulfilled.
  	The goal of the algorithm is to recover $f$ via the following expression of $g$:    		
  		\[ g = \sum_{i = 1}^s \alpha_i a_{i0}^{e_i} \left(1 + \sum_{j = 1}^n \frac{a_{ij}}{a_{i0}} x_j\right)^{e_i}; \]
  	so we are interested in computing the values 
  	\begin{itemize}
  		\item $\alpha_i a_{i0}^{e_i}$ for all $i$
  		\item $a_{ij}/a_{i0}$ for all $i,j$
  		\item $e_i$ for all $i$
  	\end{itemize}
 	In {\bf Step 2}, for all $j \in \{1,\ldots,n\}$ we consider 
 	\[ 
 		\pi_j(g) = \sum_{i = 1}^s \alpha_i a_{i0}^{e_i} \left(1 + \frac{a_{ij}}{a_{i0}}\, x\right)^{e_i} =
 		\sum_{i = 1}^s \alpha_i a_{ij}^{e_i} \left(x+\frac{a_{i0}}{a_{ij}}\right)^{e_i}.
 	\]
 	Since $\pi_j(g)$ satisfies the hypotheses of  Theorem \ref{diffnodesimproved}
{\tt Build}$(\pi_j(g))$ outputs the values
 		$$\left\{ (\alpha_i a_{ij}^{e_i},\frac{a_{i0}}{a_{ij}},e_i) \, \vert \, 1 \leq i \leq s \right\}.$$
 	From these values we obtain in the sets
	 	$$P_j = \left\{ (\alpha_i a_{i0}^{e_i},\frac{a_{ij}}{a_{i0}},e_i) \, \vert \, 1 \leq i \leq s \right\}.$$
Before calling {\tt Build}$(\pi_j(g))$, we compute the dense representation
of $\pi_j(g)$ using Remarks~\ref{blacktodense} and~\ref{blacktodense2}.

	 Thanks to the unique expression of $g_j$ for all $j$ and to (iii)
	 we have that none of the conditions of {\bf Step 3} is satisfied and we obtain $g$ in {\bf Step 4}.
	
	If we see the values of $\lambda_i, \lambda_{ij}$ used to define $\phi$ as variables, 
the invertibility of $\phi$ is equivalent
	to the nonvanishing of a degree $n$ polynomial. Moreover, the $a_{ij}$ are degree one polynomials in these variables. Thus,
	the conditions $a_{ij} \neq 0$ consist in the nonvanishing of $s(n+1)$ polynomials of degree $1$.
	The conditions $a_{ij}/a_{i0} \neq a_{i'j}/a_{i'0}$ for all $i,i',j$ with $j \neq 0$ can be seen as the
	nonvanishing of $s(s-1)n/2$ polynomials of degree $2$. The conditions  $\alpha_i a_{i0}^{e_i} \neq \alpha_{i'}a_{i'0}^{e_{i'}}$ can be seen as the nonvanishing of
	$s(s-1)/2$ polynomials of degree at most ${\rm max}(e_i)$, which, by Corollary \ref{upperbounddegreeterms}, is upper bounded by $d + (s^2/2)$. Hence, all the
	conditions to be satisfied can be codified in a nonzero polynomial $\psi$ of degree 
		$$n + s(n+1) + s(s-1)n + (s(s-1)(2d+s^2)/4) \leq  \frac{8s^2 n + 2 s^2 d + s^4}{4}.$$
	Moreover, if we set $e := {\rm max}(e_i)$, then
	\begin{itemize}
		\item $e \leq d + (s^2/2)$, and
		\item $s = n_e \leq (3e/4)^{1/3},$
	\end{itemize}
	form where we deduce that $s \leq d^{1/3}$ and the degree of $\psi$ is upper bounded by $ d^{2/3} (2n+d)$.
	Hence, by the Schwartz-Zippel lemma, if we assume the $\lambda_i, \lambda_{ij}$ are taken uniformly at random from a finite set $S$, the probability of satisfying
	all these constraints is at least
		$$1 - \frac{d^{2/3}(2n+d)}{|S|}.$$
	and the result follows.
\end{proof}

{\small

\section*{Acknowledgments}

The reconstruction problem for sums of affine powers was suggested to one of us (P.K.) by Erich Kaltofen at a Dagstuhl workshop where P.K. gave a talk on lower bounds for this model.

}
\appendix

\section{Appendix: Algorithms for Sparsest Shift and Waring decomposition}

In this appendix we apply the techniques from 
the previous sections to study optimal decompositions of polynomials 
in the Waring and Sparsest Shift models. 
As explained in the introduction, these two models have been 
extensively studied in the literature.
We do not claim that the algorithms proposed in this appendix improve on the
existing methods. Rather, we present them for the sole purpose of illustrating
on these two classical models the techniques developed for the more
general model of sums of affine powers.

\medskip

\subsection{Waring decompositions} \label{waring-appendix}

In Proposition \ref{smallequation}
 we saw that if $f$ has an expression in the $\AffPow$ model with $s$ terms, then $f$ satisfies a
 $\SDE(2s-1,0)$. We begin this section by proving that 
an expression of $f$ with $s$ terms in the Waring model yields a $\SDE$ satisfied by $f$ of order $s$ and shift $0$.

\begin{proposition}\label{shift0Waring}
	Let $f \in \FF[x]$ be written as
	\[
		f = \sum_{i = 1}^s \alpha_i (x - a_i)^{d},
	\]
	Then $f$ satisfies a $\SDE(s, 0)$ that is also satisfied by the $(x - a_i)^{d}$'s.
\end{proposition}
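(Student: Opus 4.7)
The plan is to construct an explicit differential operator $L$ of order $s$ that annihilates each $(x-a_i)^d$, and then verify that $L$ has the form of a $\SDE(s,0)$.

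After grouping like terms, we may assume the $a_i$ are distinct. For $i = 1, \ldots, s$, define the first-order operators
\[
    L_i := (x-a_i)\partial - (d-i+1), \qquad \partial := d/dx,
\]
and set $L := L_s L_{s-1} \cdots L_1$ (operator composition). Using $(x-a_i) = (x-a_j) + (a_j - a_i)$, a direct computation yields
\[
    L_i\big[(x-a_j)^r\big] = (r - d + i - 1)(x-a_j)^r + r(a_j - a_i)(x-a_j)^{r-1},
\]
which shows in particular that $L_i$ annihilates $(x-a_i)^{d-i+1}$.

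From this formula, an induction on $m$ shows that $L_m\cdots L_1(x-a_j)^d$ is a scalar multiple of $(x-a_j)^{d-m}$ as long as $m \leq \min(j-1, d)$. At the next step, either $L_j$ annihilates a multiple of $(x-a_j)^{d-j+1}$ (when $d \geq j-1$, because the additive constant $d-j+1$ in $L_j$ matches the exponent), or $L_{d+1} = (x-a_{d+1})\partial$ annihilates a constant (when $d < j-1$). In either case $L(x-a_j)^d = 0$, and by linearity $Lf = 0$.

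It remains to check that $L = \sum_{k=0}^s P_k(x)\partial^k$ with $\deg P_k \leq k$. This is proved by induction on $s$: assuming the claim for $L' = L_{s-1}\cdots L_1 = \sum_k P_k'(x)\partial^k$, the operator identity $\partial\circ g = g\partial + g'$ gives $L = (x-a_s)\partial L' - (d-s+1)L'$, whose coefficient of $\partial^k$ reduces to
\[
    (x-a_s)\big[P_{k-1}'(x) + (P_k')'(x)\big] - (d-s+1)P_k'(x),
\]
a polynomial of degree at most $k$ by inspection (with the convention $P_{-1}' = P_s' = 0$). The main obstacle is precisely this bookkeeping step, where the non-commutation of $\partial$ with $(x-a_s)$ creates cross-terms whose degrees must be tracked carefully to confirm that the shift is indeed~$0$.
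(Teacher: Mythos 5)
Your proof is correct, and it is genuinely different from the paper's. The paper starts from the Wronskian $\Wr\bigl(g,(x-a_1)^d,\ldots,(x-a_s)^d\bigr)=0$, factors $(x-a_i)^{d-s}$ out of each column, observes that the resulting cofactors satisfy $R_{i+1}'=R_i$, and computes $R_s$ explicitly via a Vandermonde determinant to conclude $\deg R_i = i$. You instead build the annihilating operator as an explicit product of first-order factors $L_i=(x-a_i)\partial-(d-i+1)$, check directly that each $(x-a_j)^d$ is killed by the chain, and control the degrees of the coefficients of $\partial^k$ by an induction on $s$ that keeps track of the commutator corrections. Both arguments require the $a_i$ to be distinct, which you handle up front by grouping repeated shifts (the paper leaves this implicit: if some $a_i$ coincide, its Wronskian is identically zero and the argument degenerates); by Lemma~\ref{uniqueSDE}, once the $(x-a_i)^d$ are linearly independent the two constructions necessarily produce the same $\SDE(s,0)$ up to equivalence. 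What the Wronskian route buys is that it plugs directly into the paper's other Wronskian machinery (Lemma~\ref{lemma:WrMult}, Proposition~\ref{factorWronskian}) and exposes the coefficients as determinants; what your factored-operator route buys is a more elementary, fully explicit form of the equation, in particular that the top coefficient $P_s(x)$ is (up to a constant) $\prod_i(x-a_i)$, which makes the algorithmic step of recovering the shifts as roots of $P_s$ transparent. One minor wording nit: in the sentence about $L_{d+1}$ annihilating a constant, the relevant point is that when $j>d+1$ (hence $s>d+1$) the factor $L_{d+1}$ already occurs in the chain before $L_j$ and kills the constant reached after $d$ steps; your parenthetical ``when $d<j-1$'' says this, but it would be worth stating that $L_{d+1}$ is present because $s\geq d+1$.
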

\begin{proof}
	We consider the $\SDE$ in the unknown $g$ given by the Wronskian: 
	\begin{equation}
		\Wr(g, (x-a_1)^{d}, \ldots, (x - a_s)^{d})(x) = 0
	\end{equation}
	After factoring out $(x-a_i)^{d-s}$ for all $i$, we get the reduced $\SDE$:
	\[
		\sum_{i=0}^s R_i(x) g^{(i)}(x) = 0,
	\]
	where
	\[
		R_i =
		\begin{vmatrix}
			(x-a_1)^{s} 						& \dots 	& (x-a_s)^{s}\cr 
			d^{\underline{1}}(x - a_1)^{s - 1}  & \dots 	& d^{\underline{1}}(x - a_s)^{s - 1} \cr 
			\vdots 								& \ddots	& \vdots		 \cr
			d^{\underline{i-1}}(x - a_1)^{s-i+1}& \dots 	& d^{\underline{i-1}}(x - a_s)^{s-i+1} \cr 
												&			&	\cr			
			d^{\underline{i+1}}(x - a_1)^{s-i-1}& \dots 	& d^{\underline{i+1}}(x - a_s)^{s-i-1} \cr 
			\vdots 								& \ddots	& \vdots		 \cr
			d^{\underline{s}}		 			& \dots 	& d^{\underline{s}} 
		\end{vmatrix}
	\]
	and $d^{\underline{k}} := \prod_{j = 1}^{k} (d - j + 1)$. 	Because of the nice structure induced by all the exponents being equal to $d$, we have that $R_{i+1}' = R_i$, and hence ${\rm deg}(R_i) = {\rm deg}(R_s) - (s-i)$.
	If we factor out the constants on each row in $R_s$ we get that
	\[
		R_s = \prod_{i=1}^s d^{\underline{i}} \cdot 
		\begin{vmatrix}
			(x-a_1)^{s} 		& \dots 	& (x-a_s)^{s}		\cr 
			(x - a_1)^{s - 1}	& \dots 	& (x - a_s)^{s - 1} \cr 
			\vdots				& \ddots	& \vdots 			\cr
			(x - a_1)  			& \dots 	& (x - a_s)
		\end{vmatrix}
	\]
	We factor $(x-a_i)$ on each row and use the known formula for the determinant of a Vandermonde matrix to obtain:
	\[
		R_s = \prod_{i=1}^s d^{\underline{i}} \cdot \prod_{i=1}^s (x - a_i) \cdot \prod_{i < j} (a_i - a_j) 
	\]
	We have $\deg(R_s) = s$, and hence $\deg(R_i) = i$, which shows that the reduced $\SDE$ has in fact a zero shift.
\end{proof}

Moreover, when $\Waring_\FF(f)$ is small enough we have that the $\SDE$ provided in Proposition \ref{shift0Waring} is the only
$\SDE(s,0)$ satisfied by $f$.

\begin{corollary}
	Let $f$ be a polynomial such that $s = \Waring_\FF(f) \leq \sqrt{2d/3}$.
	Then $f$ satisfies a unique $\SDE(s, 0)$.
\end{corollary}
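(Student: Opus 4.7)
The plan is to combine three ingredients already available in the paper: Proposition~\ref{shift0Waring} for the existence of an $\SDE(s,0)$ satisfied by $f$, Proposition~\ref{bigexponents} to propagate such an SDE to the summands of the Waring decomposition, and Lemma~\ref{uniqueSDE} to finally turn this into uniqueness. The main task reduces to verifying that the numerical hypothesis $s \leq \sqrt{2d/3}$ is exactly what is needed to invoke Proposition~\ref{bigexponents} with parameters $(k,l)=(s,0)$.

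First I would fix an optimal Waring decomposition
\[
f = \sum_{i=1}^s \alpha_i (x-a_i)^d
\]
with $\alpha_i \neq 0$ and pairwise distinct $a_i$ (distinctness is forced by minimality, as equal nodes can be combined into a single term). By the same minimality argument, the $s$ polynomials $(x-a_i)^d$ must be linearly independent, for otherwise we could eliminate one of them and obtain a shorter Waring representation of $f$. Proposition~\ref{shift0Waring} already ensures the existence of at least one $\SDE(s,0)$ satisfied by $f$.

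The key step is to show that \emph{every} $\SDE(s,0)$ satisfied by $f$ is in fact also satisfied by each $(x-a_i)^d$. This is exactly the content of Proposition~\ref{bigexponents} applied with $k=s$, $l=0$, and $e_i=d$, whose hypothesis reads
\[
d \;\geq\; s + s(s-1) + \binom{s}{2} \;=\; \frac{3s^2-s}{2}.
\]
The assumption $s \leq \sqrt{2d/3}$ rearranges to $d \geq 3s^2/2$, which is clearly at least $(3s^2-s)/2$ for $s \geq 0$, so the hypothesis of Proposition~\ref{bigexponents} is satisfied.

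Finally, given two $\SDE(s,0)$'s satisfied by $f$, the previous step shows that both are simultaneously satisfied by the $s$ linearly independent polynomials $(x-a_1)^d,\dots,(x-a_s)^d$. Lemma~\ref{uniqueSDE} then asserts that any two SDEs of order $s$ satisfied by $s$ linearly independent polynomials are equivalent, so the two $\SDE(s,0)$'s satisfied by $f$ are equivalent, establishing uniqueness. I do not expect any significant obstacle; the only point to check carefully is the numerical comparison between $3s^2/2$ and $(3s^2-s)/2$, and that Lemma~\ref{uniqueSDE} is stated for SDEs of order $k$ without a shift restriction, so that its conclusion applies to the two $\SDE(s,0)$'s at hand.
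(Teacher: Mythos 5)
Your proof is correct and follows essentially the same route as the paper's: existence of an $\SDE(s,0)$ from Proposition~\ref{shift0Waring}, the arithmetic check $d \geq \frac{3}{2}s^2 \geq s + s(s-1) + \binom{s}{2}$ to invoke Proposition~\ref{bigexponents} and conclude that every $\SDE(s,0)$ satisfied by $f$ is also satisfied by each $(x-a_i)^d$, and then Lemma~\ref{uniqueSDE} applied to the $s$ linearly independent powers to get uniqueness. The only difference is that you spell out why the $a_i$ are distinct and the powers independent (both consequences of optimality), details the paper takes for granted.
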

\begin{proof}
	Consider a $\SDE(s, 0)$ satisfied by $f$:
	\[
		\sum_{i = 0}^s P_i(x) f^{(i)}(x) = 0
	\]
	Since $\Waring_\FF(f) = s$, $f$ can be expressed as $f = \sum_{i=1}^s \alpha_i(x - a_i)^d$ and $\{(x-a_i)^d\, :\, 1 \leq i \leq d\}$ are linearly independent.
	By Proposition \ref{bigexponents} we get that any term $(x - a_i)^d$ satisfies this $\SDE$ because
	\[
		d \geq \frac{3}{2} s^2 \geq s + s(s-1) + \binom{s}{2}.
	\]
	Thus, we apply Lemma \ref{uniqueSDE} to conclude the equation given by the Proposition \ref{shift0Waring} is the unique $\SDE(s,0)$ satisfied by $f$.
\end{proof}

\medskip

As a direct consequence of this result, we have the following algorithm.

\begin{algorithm}
	Let $f$ be a polynomial of degree $d$. There is a polynomial time algorithm {\tt WaringDec}$(f)$ that receives $f = \sum_{i = 0}^d f_i x^i \in \FF[x]$ as input
	decides if $\Waring(f) \leq \sqrt{2d/3}$. Moreover, whenever $\Waring_\FF(f) \leq \sqrt{2d/3}$, with the optimal decomposition being
	\[
		f = \sum_{i = 1}^s \alpha_i (x - a_i)^{d},
	\]
	the algorithm computes the $s$-tuples of shifts $S(f) = (a_1,\ldots,a_s)$ and coefficients $C(f) = (\alpha_1,\ldots,\alpha_s)$.
	The algorithm works as follows:

	\begin{enumerate}[\; \bf Step 1.]
		\item Find the minimum $k$ such that there exists an $\SDE(k, 0)$ satisfied by $f$ and compute explicitly one of these SDE.

		\item If $k > \sqrt{2d/3}$, then $\Waring_{\FF}(f) > \sqrt{2d/3}$.

		\item Compute the set $\mathcal B = \{(x-b_i)^d \, \vert \, 1 \leq i \leq t\}$ of solutions of the form $(x-a)^d$ of this $\SDE$.
		
		\item If $t < k$, then $\Waring_{\FF}(f) > \sqrt{2d/3}$.
		
		\item Write $f = \sum_{i = 1}^t \beta_i (x-b_i)^d,$ and output $S(f) = (b_1,\ldots,b_t)$ and $C(f) = (\beta_1,\ldots,\beta_t)$.
	\end{enumerate}
\end{algorithm}
Note that if we reach Step 5 of the algorithm, we have $t=k=s \leq \sqrt{2d/3}$.

\subsection{Sparsest Shift decompositions}

We saw in Section~\ref{waring-appendix} that a polynomial with a Waring decomposition of size $s$ satisfies a $\SDE$ of order $s$ and shift 0. 
The same is true for the Sparsest Shift model:

\begin{proposition}
	Let $f \in \FF[x]$ be written as
	\[
		f = \sum_{i = 1}^s \alpha_i (x - a)^{e_i},
	\]
	Then $f$ satisfies a $\SDE(s, 0)$ that is also satisfied by the $(x - a)^{e_i}$'s.
\end{proposition}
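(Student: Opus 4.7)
The plan is to follow the Wronskian-based strategy from the proof of Proposition~\ref{shift0Waring}. After the change of variables $x \mapsto x + a$ we may assume that $a = 0$, and by collecting proportional terms we may further assume that the exponents $e_i$ are all distinct and the $\alpha_i$'s are nonzero. I would then consider, in the unknown $g$, the Wronskian equation
\[
	\Wr(g, x^{e_1}, \ldots, x^{e_s})(x) = 0
\]
and expand the determinant along the first column to obtain an identity of the form $\sum_{i = 0}^{s} (-1)^i M_i(x)\, g^{(i)}(x) = 0$, where $M_i(x)$ is the $s \times s$ minor obtained by removing row $i$ and the column containing $g$.

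The key structural observation, which replaces the identity $R_{i+1}' = R_i$ used in the Waring case, is that each $M_i(x)$ is in fact a \emph{monomial} in $x$. Indeed, since $(x^{e_j})^{(k)} = e_j^{\underline{k}}\, x^{e_j - k}$, every entry of $M_i$ is a constant multiple of $x^{e_j - k}$, and the Leibniz expansion shows that every nonvanishing term in the determinant has the same total degree $\sum_{j=1}^{s} e_j + i - \binom{s+1}{2}$ in $x$. Hence
\[
	M_i(x) = c_i\, x^{\,\sum_j e_j + i - \binom{s+1}{2}}
\]
for some $c_i \in \FF$, with the understanding that $M_i \equiv 0$ whenever the exponent above would be negative.

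To finish, I would set $i_0 := \min\{i : c_i \neq 0\}$ and factor out the common monomial $x^{\sum_j e_j + i_0 - \binom{s+1}{2}}$ to obtain the reduced equation $\sum_{i = i_0}^{s} (-1)^i c_i\, x^{\,i - i_0}\, g^{(i)}(x) = 0$. The coefficient of $g^{(i)}$ is $\pm c_i\, x^{i - i_0}$, a polynomial of degree $i - i_0 \leq i$, so this is a $\SDE(s, 0)$. The order is exactly $s$ because the leading constant $c_s$ is, up to a nonzero scalar, the Vandermonde-type determinant $\det(e_j^{\underline{k}})_{0 \leq k \leq s-1,\, 1 \leq j \leq s} = \prod_{i < j}(e_j - e_i)$, which is nonzero since the $e_j$'s are distinct. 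Each $x^{e_j}$ satisfies the original Wronskian equation because substituting it for $g$ produces two equal columns, and it therefore satisfies the reduced equation as well (since we divided out only a nonzero monomial); hence $f = \sum_i \alpha_i x^{e_i}$ satisfies it by linearity. Translating back by $x \mapsto x - a$ concludes the proof. The main bookkeeping challenge is the monomial degree computation in the middle paragraph; once $M_i$ is known to be a monomial, everything else is essentially automatic.
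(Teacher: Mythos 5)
Your proof is correct and follows the same Wronskian route as the paper: expand $\Wr(g, (x-a)^{e_1}, \ldots, (x-a)^{e_s})$ along the $g$-column, observe that each minor is a monomial in $(x-a)$ whose degree grows by exactly one with the row index, and factor out the smallest power to obtain a shift-zero SDE. You are in fact slightly more careful than the paper on a few points: you factor out $x^{\Delta_{i_0}}$ with $i_0=\min\{i : c_i\neq 0\}$ rather than $x^{\Delta_0}$ (which could be a negative power of $x$ when the low-index minors vanish), you verify $c_s\neq 0$ via the generalized Vandermonde in falling factorials, and your degree formula $\Delta_i=\sum_j e_j + i - \binom{s+1}{2}$ is the correct one, whereas the paper writes $\binom{s}{2}$, off by $s$ (check $s=1$: the constant coefficient is $e_1(x-a)^{e_1-1}$ of degree $e_1-1$, matching your expression) --- a harmless slip since only the increment $\Delta_{i+1}-\Delta_i = 1$ is used in the argument.
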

\begin{proof}
	We will prove something stronger, namely that $f$ satisfies an $\SDE(s,0)$ of the following form
	\[
		\sum_{i=0}^s \gamma_i (x-a)^i g^{(i)}(x) = 0,
	\]
	where $\gamma_0,\ldots,\gamma_s \in \FF$.
	We take the original $\SDE$ given by the Wronskian of an unknown polynomial $g$ and $(x-a)^{e_i}$ for all $i \in \{1,\ldots,s\}$:
	\[
		\sum_{i=0}^s (-1)^i\, P_i(x) \, g^{(i)}(x) = 0.
	\]
	Because of the stepped sequence of degrees in the determinant defining $P_i$, there exists an integer $\Delta_i$ such that every permutation $\sigma$ corresponds to a term $c_
	\sigma (x - a)^{\Delta_i}$.
	More precisely, we have
	\[
		\Delta_i = \left( \sum_{j=1}^s e_j \right) - \binom{s}{2} + i
	\]
	Thus, the determinant is either 0, or some constant times $(x - a)^{\Delta_i}$.
	Moreover, we have $\Delta_{i + 1} = \Delta_i + 1$ and hence we can rewrite the $\SDE$ as
	\[
		\sum_{i=0}^s c_i (x-a)^{\Delta_0 + i} g^{(i)}(x) = 0 
	\]
	with $c_i \in \FF$.
	We factorize this equation by $(x - a)^{\Delta_0}$ to obtain the wanted $\SDE(s, 0)$.
	Notice that the factorization again doesn't change the space of solutions, hence $f$ and $(x-a)^{e_i}$ are still solutions of this $\SDE$.
\end{proof}

\begin{corollary}
	Let $f$ be a polynomial of degree $d$ such that $s = \Sparsest_\FF(f) < \sqrt{d}$, 
and let $a \in \FF$ be the corresponding sparsest shift.
Let $\sum_{i = 1}^k P_i(x)f^{(i)}(x) = 0$ be any $\SDE(k, 0)$ satisfied by $f$.
 If $k \leq s$ we must have $P_k(a) = 0$.
\end{corollary}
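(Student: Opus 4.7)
The plan is to pass to the shifted variable $u = x-a$, in which $f(u) = \sum_{i=1}^s \alpha_i u^{e_i}$ with distinct exponents $e_1 < e_2 < \cdots < e_s = d$ (the top exponent equals $d$ because $\deg f = d$). Writing $\widetilde{P}_j(u) = P_j(a+u) = \sum_{l=0}^j c_{j,l}\, u^l$, we have $c_{j,0} = P_j(a)$, so the target is to prove $c_{k,0} = 0$. The SDE then reads $\sum_{j=0}^k \widetilde{P}_j(u)\, f^{(j)}(u) = 0$, and I would expand it and collect the coefficient of each $u^m$ separately.

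The algebraic heart of the argument is the following observation: the summand $\alpha_i u^{e_i}$ contributes to the coefficient of $u^m$ only when $m \leq e_i \leq m + k$, via pairs $(j', l)$ with $0 \leq l \leq j' \leq k$ and $l + e_i - j' = m$. In particular, when $e_i \geq k$ and $m = e_i - k$ the only admissible pair is $(j', l) = (k, 0)$, and the $i$-th summand therefore contributes exactly $\alpha_i\, c_{k,0}\, e_i!/(e_i-k)!$ to the coefficient of $u^{e_i - k}$. Consequently, if I can locate an index $j$ with $e_j \geq k$ such that no other exponent $e_i$ lies in the interval $[e_j - k, e_j)$, then the coefficient of $u^{e_j - k}$ in the SDE consists of a single contribution from $i = j$, yielding $\alpha_j P_k(a)\, e_j!/(e_j-k)! = 0$; since $\alpha_j \neq 0$ and $e_j \geq k$, this forces $P_k(a) = 0$.

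The main obstacle, and the only place where the hypothesis $s < \sqrt d$ is used, is the combinatorial lemma that such a gapped index $j$ must exist. I plan to argue by contradiction: let $j_0$ be the smallest index with $e_{j_0} \geq k$, which exists because $e_s = d \geq k$. If $j_0 = 1$ then $j = 1$ works vacuously. Otherwise $e_{j_0 - 1} \leq k - 1$, and if the gap property fails at every $j \geq j_0$ then $e_j - e_{j-1} \leq k$ for all such $j$. Telescoping yields $d = e_s \leq (k-1) + k(s - j_0 + 1) \leq ks - 1 \leq s^2 - 1$, which contradicts $s^2 < d$. Combined with the algebraic observation above, this completes the proof.
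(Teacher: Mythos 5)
Your proof is correct, and it takes a cleaner, more elementary route than the paper's. Both arguments share the same combinatorial core: because $s < \sqrt{d}$, a telescoping estimate on consecutive exponent gaps forces the existence of an index $j$ with $e_j \geq k$ such that no other exponent lies in $[e_j - k, e_j)$. The difference is in how the gap is exploited. The paper splits $f$ across the gap as $f = (x-a)^{e_t}g + (\text{lower terms})$, invokes Proposition~\ref{repeatedNodesRootSDE} (with $p=0$, where its proof degenerates to a degree comparison) to conclude that $(x-a)^{e_t}g$ alone must satisfy the $\SDE$, and then extracts $P_k(a)=0$ via the closing argument of Proposition~\ref{bigExpRepeatedNodesRootSDE}. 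You instead shift to $u = x-a$, expand the $\SDE$ as a polynomial identity in $u$, and read off the coefficient of $u^{e_j-k}$: the gap ensures it receives a contribution from the single admissible pair $(j',l)=(k,0)$ and the single term $i=j$, namely $\alpha_j P_k(a)\,e_j!/(e_j-k)!$, whose vanishing forces $P_k(a)=0$ directly. Your version is self-contained and bypasses both auxiliary propositions (and the Wronskian machinery behind them), at the modest cost of being tailored to the single-shift setting rather than reusing the paper's general-purpose lemmas; the two gap-finding conditions (you use a gap $>k$, the paper uses a gap $\geq s+1$) are both comfortably within reach of the hypothesis $s<\sqrt{d}$.
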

\begin{proof}
	Assume that $f$ satisfies an $\SDE(k, 0)$ : $\sum_{i = 1}^k P_i(x)f^{(i)}(x) = 0$ with $k \leq s$.
	The sparsest shift decomposition of $f$ is
	\[
		f = \sum_{i = 1}^s \alpha_i (x - a)^{e_i}.
	\]
	Assume that $e_1 > e_2 > \dots > e_s > e_{s+1} := -1$.
	We take $t \in \{1,\dots, s\}$ the maximum value such that $e_t - e_{t + 1} - 1 \geq s$.
	Such a value $t$ exists, otherwise $d \leq e_1 < \sum_{i = 1}^s (e_i - e_{i+1}) < s^2$, a contradiction.
	We rewrite $f$ as
	\[
		f = (x - a)^{e_t}g(x) + \sum_{i = t + 1}^s \alpha_i (x - a)^{e_i}.
	\]
	Now we apply Proposition \ref{repeatedNodesRootSDE} with $p = 0$ to get that $(x - a)^{e_t}g$ satisfies the same $\SDE$ because $e_t - e_{t + 1} > s \geq k$.
	By the same argument as in Proposition \ref{bigExpRepeatedNodesRootSDE}, we conclude that $P_k(a) = 0$.
\end{proof}

\begin{algorithm}
	Let $f$ be a polynomial of degree $d$. There is a polynomial time algorithm {\tt SparsestShift}$(f)$ that receives $f = \sum_{i = 0}^d f_i x^i \in \FF[x]$ as input
	decides if $\Sparsest_\FF(f) \leq \sqrt{d}$; moreover, whenever $\Sparsest_\FF(f) \leq \sqrt{d}$, with the optimal decomposition being
	\[
		f = \sum_{i = 1}^s \alpha_i (x - a)^{e_i},
	\]
	the algorithm computes the shift $a \in \FF$, and the $s$-tuples of coefficients $C(f) = (\alpha_1,\ldots,\alpha_s)$ and exponents $E(f) = (e_1,\ldots,e_s)$.
	The algorithm works as follows:

	\begin{enumerate}[\; \bf Step 1.]
		\item Find the minimum $k$ such that there exists an $\SDE(k, 0)$ satisfied by $f$ and compute explicitly one of these SDE. Namely, 
			\[
				\sum_{i = 0}^k P_i(x) f^{(i)}(x) = 0
			\]		
		\item Factorize the last coefficient of this $\SDE$, i.e., write: $$P_k = c \cdot \prod_{i = 1}^k (x - a_i).$$
		
		\item For each $a_i$, 
decompose $f$ in the shifted basis $((x-a_i)^j)_{0 \leq j \leq d}$.
		
		\item If the decomposition with smallest number of terms has $\leq \sqrt{d}$ terms, we output this decomposition; otherwise, $\Sparsest_\FF(f) > \sqrt{d}$.
	\end{enumerate}
\end{algorithm}

\end{document}